\newtheorem{theorem}{Theorem}[section]
\newtheorem{corollary}[theorem]{Corollary}
\newtheorem{definition}[theorem]{Definition}
\newtheorem{lemma}[theorem]{Lemma}
\newcommand{\D}{\mathcal{D}}
\newcommand{\eps}{\varepsilon}
\DeclareMathOperator{\E}{\mathbb{E}}
\def \Rev  {{\sf Rev}}
\def \Reg  {{\sf Reg}}
\def \cont  {{\sf cont}}
\def \supp {{\sf supp}}
\def \Val {{\sf Val}}
\def \Mye {{\sf Mye}}
\def \Crit {{\sf MBRev}}
\def \lowregret {{no-regret }}
\title{Selling to a No-Regret Buyer}
\author{
Mark Braverman \thanks{Department of Computer Science, Princeton University, email: mbraverm@cs.princeton.edu. Research supported in part by an NSF CAREER award (CCF-1149888), NSF CCF-1215990, NSF CCF-1525342, NSF CCF-1412958, a Packard Fellowship in Science and Engineering, and the Simons Collaboration on Algorithms and Geometry.}
\and
Jieming Mao  \thanks{Department of Computer Science, Princeton University, email: jiemingm@cs.princeton.edu.}
\and
Jon Schneider \thanks{Department of Computer Science, Princeton University, email: js44@cs.princeton.edu}
\and 
S. Matthew Weinberg \thanks{Department of Computer Science, Princeton University, email: smweinberg@princeton.edu. Supported by NSF CCF-1717899. }
}
\begin{document}
\begin{titlepage}
\maketitle \thispagestyle{empty}
\begin{abstract}

We consider the problem of a single seller repeatedly selling a single item to a single buyer (specifically, the buyer has a value drawn fresh from known distribution $\D$ in every round). Prior work assumes that the buyer is fully rational and will perfectly reason about how their bids today affect the seller's decisions tomorrow. In this work we initiate a different direction: the buyer simply runs a no-regret learning algorithm over possible bids. We provide a fairly complete characterization of optimal auctions for the seller in this domain. Specifically:
\begin{itemize}
\item If the buyer bids according to EXP3 (or any ``mean-based'' learning algorithm), then the seller can extract expected revenue arbitrarily close to the expected welfare. This auction is independent of the buyer's valuation $\D$, but somewhat unnatural as it is sometimes in the buyer's interest to overbid. 
\item There exists a learning algorithm $\mathcal{A}$ such that if the buyer bids according to $\mathcal{A}$ then the optimal strategy for the seller is simply to post the Myerson reserve for $\D$ every round. 
\item If the buyer bids according to EXP3 (or any ``mean-based'' learning algorithm), but the seller is restricted to ``natural'' auction formats where overbidding is dominated (e.g. Generalized First-Price or Generalized Second-Price), then the optimal strategy for the seller is a pay-your-bid format with decreasing reserves over time. Moreover, the seller's optimal achievable revenue is characterized by a linear program, and can be unboundedly better than the best truthful auction yet simultaneously unboundedly worse than the expected welfare. 
\end{itemize} \end{abstract}
\end{titlepage}

\section{Introduction}

Consider a bidder trying to decide how much to bid in an auction (for example, a sponsored search auction). If the auction happens to be the truthful Vickrey-Clarke-Groves auction~\cite{Vickrey61,Clarke71,Groves73}, then the bidder's decision is easy: simply bid your value. If instead, the bidder is participating in a Generalized First-Price (GFP) or Generalized Second-Price (GSP) auction, the optimal strategy is less clear. Bidders can certainly attempt to compute a Bayes-Nash equilibrium of the associated game and play accordingly, but this is unrealistic due to the need for accurate priors and extensive computation.

Alternatively, the bidders may try to learn a best-response over time (possibly offloading the learning to commercial bid optimizers). We specifically consider bidders who \emph{\lowregret learn}, as empirical work of Nekipelov et al.~\cite{NekipelovST15} shows that bidder behavior on Bing is largely consistent with \lowregret learning (i.e. for most bidders, there exists a per-click value such that their behavior guarantees \lowregret for this value). From the perspective of a revenue-maximizing auction designer, this motivates the following question: \textbf{If a seller knows that buyers are no-regret learning over time, how should they maximize revenue?}  

This question is already quite interesting even when there is just a single item for sale to a single buyer.\footnote{And also surprisingly relevant: search engines don't generally publish their formulas for setting reserves. So even if you are the only bidder for a certain keyword (e.g. the name of your new startup), you're likely participating in a GSP/GFP auction with no additional bidders, but against a seller who adaptively sets the reserve price based on past bids. Anecdotal evidence indeed suggests that the reserve prices in such single-bidder auctions will change over time.} We consider a model where in every round $t$, the seller solicits a bid $b_t \in [0,1]$ from the buyer, then allocates the item according to some allocation rule $x_t(\cdot)$ and charges the bidder according to some pricing rule $p_t(\cdot)$ (satisfying $p_t(b) \leq b\cdot x_t(b)$ for all $t, b$). Note that the allocation and pricing rules (henceforth, auction) can differ from round to round, and that the auction need not be truthful. Each round, the bidder has a value $v_t$ drawn independently from $\D$, and uses some no-regret learning algorithm to decide which bid to place in round $t$, based on the outcomes in rounds $1,\ldots, t-1$ (we will make clear exactly what it means for a buyer with changing valuation to play no-regret in Section~\ref{sec:prelim}, but one can think of $v_t$ as providing a ``context'' for the bidder during round $t$). 

One default strategy for the seller is to simply to set Myerson's revenue-optimal reserve price for $\D$, $r(\D)$, in every round (that is, $x_t(b_t) = I(b_t \geq r(\D))$, $p_t(b_t) = r(\D) \cdot I(b_t \geq r(\D))$ for all $t$, where $I(\cdot)$ is the indicator function). It's not hard to see that any no-regret learning algorithm will eventually learn to submit a winning bid during all rounds where $v_t > r(\D)$, and a losing bid whenever $v_t < r(\D)$. So if $\Rev(\D)$ denotes the expected revenue of the optimal reserve price when a single buyer is drawn from $\D$, the default strategy guarantees the seller revenue $T\cdot \Rev(\D) - o(T)$ over $T$ rounds. The question then becomes whether or not the seller can beat this benchmark, and if so by how much.

The answer to this question isn't a clear-cut yes or no, so let's start with the following instantiation: how much revenue can the seller extract if the buyer runs EXP3~\cite{AuerCNS03}? In Theorem~\ref{thm:nc_seller}, we show that the seller can actually do \emph{much} better than the default strategy: it's possible to extract revenue per round equal to (almost) the full expected welfare! That is, if $\Val(\D) = \mathbb{E}_{v \leftarrow \D}[v]$, there exists an auction that extracts revenue $T \cdot \Val(\D)-o(T)$ for all $\D$.\footnote{The order of quantifiers in this sentence is correct: it is actually the same auction format that works for all $\D$.} It turns out this result holds not only for EXP3, but for any learning algorithm with the following (roughly stated) property: if at time $t$, the mean reward of action $a$ is significantly larger than the mean reward of action $b$, the learning algorithm will choose action $b$ with negligible probability. We call a learning algorithm with this property a ``mean-based'' learning algorithm and note that many commonly used learning algorithms - EXP3, Multiplicative Weights Update~\cite{AroraHK12}, and Follow-the-Perturbed-Leader~\cite{Hannan57,KalaiV02,KalaiV05} - are `mean-based' (see Section~\ref{sec:prelim} for a formal definition).

We postpone all intuition until Section~\ref{sec:mbaa} with a worked-through example, but just note here that the auction format is quite unnatural: it ``lures'' the bidder into submitting high bids early on by giving away the item for free, and then charging very high prices (but still bounded in $[0,1]$) near the end. The transition from ``free'' to ``high-price'' is carefully coordinated across different bids to achieve the revenue guarantee.

This result motivates two further directions. First, do there exist other no-regret algorithms for which full surplus extraction is impossible for the seller? In Theorem~\ref{thm:lowregret}, we show that the answer is yes. In fact, there is a simple no-regret algorithm $\mathcal{A}$, such that when the bidder uses algorithm $\mathcal{A}$ to bid, the default strategy (set the Myerson reserve every round) is optimal for the seller. We again postpone a formal statement and intuition to Section~\ref{sec:better}, but just note here that the algorithm is a natural adaptation of EXP3 (or in fact, any existing no-regret algorithm) to our setting.

Finally, it is reasonable to expect that bidders might use off-the-shelf no-regret learning algorithms like EXP3, so it is still important to understand what the seller can hope to achieve if the buyer is specifically using such a ``mean-based'' algorithm (formal definition in Section~\ref{sec:prelim}). Theorem~\ref{thm:nc_seller} is perhaps unsatisfying in this regard because the proposed auction is so unnatural, and looks nothing like the GSP or GFP auctions that initially motivated this study. It turns out that the key property separating GFP/GSP from the unnatural auction above is whether overbidding is a dominated strategy. That is, in our unnatural auction, if the bidder truly hopes to guarantee low regret they must seriously consider overbidding (and this is how the auction lures them into bidding way above their value). In both GSP and GFP, overbidding is dominated, so the bidder can guarantee no regret while overbidding with probability $0$ in every round. 

The final question we ask is the following: if the buyer is using EXP3 (or any ``mean-based'' algorithm), but only considering undominated strategies, how much revenue can the seller extract using an auction where overbidding is dominated in every round? It turns out that the auctioneer can still outperform the default strategy, but not extract full welfare. Instead, we identify a linear program (as a function of $\D$) that tightly characterizes the optimal revenue the seller can achieve in this setting when the buyer's values are drawn from $\D$. Moreover, we show that the auction that achieves this guarantee is natural, and can be thought of as a first-price auction with decreasing reserves over time. Finally, we show that this ``mean-based revenue'' benchmark, $\Crit(\D)$ lies truly in between the Myerson revenue and the expected welfare: for all $c$, there exists a distribution $\D$ over values such that $c \cdot T \cdot \Rev(\D) < \Crit(\D) < \frac{1}{c}\cdot T \cdot \Val(\D)$. In other words, the seller's mean-based revenue may be unboundedly better than the default strategy, yet simultaneously unboundedly far from the expected welfare. We provide formal statements and a detailed proof overview of these results in Section~\ref{sec:minicritical}. To briefly recap, our main results are the following:
\begin{enumerate}
\item If the buyer uses a ``mean-based'' learning algorithm like EXP3, the seller can extract revenue $(1-\varepsilon)T\cdot \Val(\D)-o(T)$ for any constant $\varepsilon > 0$ (Theorem~\ref{thm:nc_seller}).
\item There exists a natural no-regret algorithm $\mathcal{A}$ such that when the buyer bids according to $\mathcal{A}$, the seller's default strategy (charging the Myerson reserve every round) is optimal (Theorem~\ref{thm:lowregret}).
\item If the buyer uses a ``mean-based'' algorithm only over undominated strategies, the seller can extract revenue $\Crit(\D)$ using an auction where overbidding is dominated in every round. Moreover, we characterize $\Crit(\D)$ as the value of a linear program, and show it can be simultaneously unboundedly better than $T\cdot \Rev(\D)$ and unboundedly worse than $T\cdot \Val(\D)$ (Theorems~\ref{thm:mbrevlbmain}, \ref{thm:mbrevubmain} and \ref{thm:mbrev}). 
\end{enumerate}

Our plan for the remaining sections is as follows. Below, we overview our connection to related work. Section~\ref{sec:prelim} formally defines our model. Section~\ref{sec:example} works through a concrete example, providing intuition for all three results. 
Section~\ref{sec:conclusions} discusses conclusions and open problems. 

\subsection{Related Work}

There are two lines of work that are most related to ours. The first is that of \emph{dynamic auctions}, such as~\cite{PapadimitriouPPR16,AshlagiDH16,MirrokniLTZ16,MirrokniLTZ16b,LiuP17}. Like our model, there are $T$ rounds where the seller has a single item for sale to a single buyer, whose value is drawn from some distribution every round. However, the buyer is fully strategic and processes fully how their choices today affect the seller's decisions tomorrow (e.g. they engage with deals of the form ``pay today to get the item tomorrow''). Additional closely related work is that of Devanur et al. studying the Fishmonger problem~\cite{DevanurPS15, ImmorlicaLPT17}. Here, there is again a single buyer and seller, and $T$ rounds of sale. Unlike our model, the buyer draws a value from $\D$ once during round $0$ and that value is fixed through all $T$ rounds (so the seller could try to learn the buyer's value over time). Also unlike our model, they study perfect Bayesian equilibria (where again the buyer is fully strategic, and reasons about how their actions today affect the seller's behavior tomorrow). 

In contrast to these works, while buyers in our model do care about the future (e.g. they value learning), they don't reason about how their actions today might affect the seller's decisions tomorrow. Our model is more realistic for sponsored search auctions, where search engines rarely release proprietary algorithms for setting reserves based on past data (and fully strategic reasoning is simply impossible without the necessary information). 

Other related work considers the \emph{Price of Anarchy} of simple combinatorial auctions when bidders no-regret learn~\cite{Roughgarden12, SyrgkanisT13, NekipelovST15, DaskalakisS16}. One key difference between this line of work and ours is that these all study welfare maximization for combinatorial auctions with rich valuation functions. In contrast, our work studies revenue maximization while selling a single item. Additionally, in these works the seller commits to a publicly known auction format, and the only reason for learning is due to the strategic behavior of other buyers. In contrast, buyers in our model have to learn \emph{even when they are the only buyer}, due to the strategic nature of the seller.

Recent work has also considered learning from the perspective of the seller. In these works, the buyer's (or buyers') valuations are drawn from an unknown distribution, and the seller's goal is to learn an approximately optimal auction with as few samples as possible~\cite{ColeR14,DevanurHP16,MorgensternR15,MorgensternR16,GonczarowskiN17,CaiD17,DudikHLSSV17}. These works consider numerous different models and achieve a wide range of guarantees, but all study the learning problem from the perspective of the \emph{seller}, whereas the buyer is simply myopic and participates in only one round. In contrast, it is the buyer in our model who does the learning (and there is no information for the seller to learn: the buyer's values are drawn fresh in every round). 

Finally, no-regret learning in online decision problems is an extremely well-studied problem. When feedback is revealed for every possible action, one well-known solution is the multiplicative weight update rule which has been rediscovered and applied in many fields (see survey~\cite{AroraHK12} for more details). Another algorithmic scheme for the online decision problem is known as Follow the Perturbed Leader \cite{Hannan57,KalaiV02,KalaiV05}. When only feedback for the selected action is revealed, the problem is referred to as the multi-armed bandit problem. Here, similar ideas to the MWU rule are used in developing the EXP3 algorithm~\cite{AuerCNS03} for adversarial bandit model, and also for the contextual bandit problem \cite{LangfordZ08}. Our algorithm in Theorem \ref{thm:lowregret} bears some similarities to the low swap regret algorithm introduced in \cite{BlumM07}. See the survey \cite{Bubeck12} for more details about the multi-armed bandit problem. Our results hold in both models (i.e. whether the buyer receives feedback for every bid they could have made, or only the bid they actually make), so we will make use of both classes of algorithms.

In summary, while there is already extensive work related to repeated sales in auctions, and even no-regret learning with respect to auctions (from both the buyer and seller perspective), our work is the first to address how a seller might adapt their selling strategy when faced with a no-regret buyer.

\section{Model and Preliminaries}\label{sec:prelim}
We consider a setting with 1 buyer and 1 seller. There are $T$ rounds, and in each round the seller has one item for sale. At the start of each round $t$, the buyer's value $v(t)$ (known only to the buyer) for the item is drawn independently from some distribution $\D$ (known to both the seller and the buyer). For simplicity, we assume $\D$ has a finite support\footnote{If $\D$ instead has infinite support, all our results hold approximately after discretization to multiples of $\varepsilon$. If $\D$ is bounded in $[0,H]$, then all our results hold after normalizing $\D$ by dividing by $H$.} of size $m$, supported on values $0 \leq v_1 < v_2 < \cdots < v_m \leq 1$. For each $i \in [m]$, $v_i$ has probability $q_i$ of being drawn under $\D$.

The seller then presents $K$ options for the buyer, which can be thought of as ``possible bids'' (we will interchangeably refer to these as \textit{options}, \textit{bids}, or \textit{arms} throughout the paper, depending on context). Each arm $i$ is labelled with a bid value $b_i \in [0, 1]$, with $b_1 < \ldots, < b_K$. Upon pulling this arm at round $t$, the buyer receives the item with some allocation probability $a_{i, t}$, and must pay a price $p_{i, t} \in [0, a_{i,t}\cdot b_i]$. These values $a_{i, t}$ and $p_{i, t}$ are chosen by the seller during time $t$, but remain unknown to the buyer until he plays an arm, upon which he learns the values for that arm. All of our positive results (i.e. strategies for the seller) are \emph{non-adaptive} (in some places called \emph{oblivious}), in the sense that that $a_{i,t}, p_{i,t}$ are set before the first round starts. All of our negative results (i.e. upper bounds on how much a seller can possibly attain) hold even against \emph{fully adaptive} sellers, where $a_{i,t}$ and $p_{i,t}$ can be set \emph{even after learning the distribution of arms the buyer intends to pull in round $t$}. 

In order for the selling strategies to possibly represent sponsored search auctions, we require the allocation/price rules to be monotone. That is, if $i > j$, then for all $t$, $a_{i, t} \geq a_{j, t}$ and $p_{i, t} \geq p_{j, t}$. In other words, bidding higher should result in a (weakly) higher probability of receiving the item and (weakly) higher expected payment. We'll also insist on the existence of an arm $0$ with bid $b_{0} = 0$ and $a_{0,t} = 0$ for all $t$; i.e., an arm which charges nothing but does not give the item. Playing this arm can be thought of as not participating in the auction. 

We'll be interested in one final property of allocation/price rules that we call \textbf{critical}, and buyer behavior that we call \textbf{clever}. We won't require that all auctions considered be critical, but this is an important property that greatly affects the optimal revenue that a seller can extract (see Theorems~\ref{thm:nc_seller} and~\ref{thm:mbrevubmain}). 

\begin{definition}[Clever Bidder] We say that a bidder is \emph{clever} if they never play a dominated strategy. That is, they still no-regret learn, but only over the set of bids which are not dominated. 
\end{definition}

\begin{definition}[Critical Auction] A vector of allocation/price rules (over all $t \in [T]$) is \emph{critical} if for all $t$, \emph{overbidding is a dominated strategy}.
\end{definition}

The above definition captures the property that in many auctions like GFP and GSP (both of which are critical), it makes no sense for a buyer to ever play dominated strategies - they need only learn over the undominated strategies. Note that if overbidding is \textit{strictly} dominated, any low-regret or mean-based learning algorithm will quickly learn not to overbid, and therefore play similarly to clever bidders in critical auctions. 

\subsection{Bandits and experts}

Our goal is to understand the behavior of such mechanisms when the buyer plays according to some \lowregret strategy for the multi-armed bandit problem. In the classic multi-armed bandit problem a learner (in our case, the buyer) chooses one of $K$ arms per round, over $T$ rounds. On round $t$, the learner receives a reward $r_{i,t} \in [0, 1]$ for pulling arm $i$ (where the values $r_{i,t}$ are possibly chosen adversarially). The learner's goal is to maximize his total reward.

Let $I_t$ denote the arm pulled by the principal at round $t$. The \textit{regret} of an algorithm $\mathcal{A}$ for the learner is the random variable $\Reg(\mathcal{A}) = \max_i \sum_{t=1}^T r_{i,t} - \sum_{t=1}^{T}r_{I_t, t}$. We say an algorithm $\mathcal{A}$ for the multi-armed bandit problem is \textit{$\delta$-\lowregret} if $\E[\Reg(\mathcal{A})] \leq \delta$ (where the expectation is taken over the randomness of $\mathcal{A}$). We say an algorithm $\mathcal{A}$ is \textit{\lowregret} if it is $\delta$-\lowregret for some $\delta = o(T)$. 

In the multi-armed bandits setting, the learner only learns the value $r_{i,t}$ for the arm $i$ which he pulls on round $t$. In our setting, the learner will learn $a_{i,t}$ and $p_{i,t}$ explicitly (from which they can compute $r_{i,t}$). Our results (both positive and negative) also hold when the learner learns the value $r_{i,t}$ for \emph{all} arms $i$ (we refer this full-information setting as the \textit{experts setting}, in contrast to the partial-information \textit{bandits setting}). Simple \lowregret algorithms exist in both the experts setting and the bandits setting. Of special interest in this paper will be a class of learning algorithms for the bandits problem and experts problem which we term `mean-based'. 

\begin{definition}[Mean-Based Learning Algorithm]\label{def:mb}
Let $\sigma_{i, t} = \sum_{s=1}^{t}r_{i, s}$. An algorithm for the experts problem or multi-armed bandits problem is $\gamma$-\textit{mean-based} if it is the case that whenever $\sigma_{i, t} < \sigma_{j,t} - \gamma T$, then the probability that the algorithm pulls arm $i$ on round $t$ is at most $\gamma$. We say an algorithm is \textit{mean-based} if it is $\gamma$-\textit{mean-based} for some $\gamma = o(1)$. 
\end{definition}

Intuitively, `mean-based' algorithms will rarely pick an arm whose current mean is significantly worse than the current best mean. Many \lowregret algorithms, including commonly used variants of EXP3 (for the bandits setting), the Multiplicative Weights algorithm (for the experts setting) and the Follow-the-Perturbed-Leader algorithm (experts setting), are mean-based (Appendix \ref{sect:mbalgs}). 

\subsubsection*{Contextual bandits}

In our setting, the buyer has the additional information of their current value for the item, and hence is actually facing a \textit{contextual bandits} problem. In (our variant of) the contextual bandits problem, each round $t$ the learner is additionally provided with a \textit{context} $c_t$ drawn from some distribution $\D$ supported on a finite set $C$ (in our setting, $c_t = v(t)$, the buyer's valuation for the item at time $t$). The adversary now specifies rewards $r_{i,t}(c)$, the reward the learner receives if he pulls arm $i$ on round $t$ while having context $c$. If we are in the full-information (experts) setting, the learner learns the values of $r_{i,t}(c_t)$ for all arms $i$ after round $t$, where as if we are in the partial-information (bandits) setting, the learner only learns the value of $r_{i, t}(c_t)$ for the arm $i$ that he pulled.

In the contextual bandits setting, we now define the regret of an algorithm $\mathcal{A}$ in terms of regret against the best ``context-specific'' policy $\pi$; that is, $\Reg(\mathcal{A}) = \max_{\pi:C\rightarrow[K]} \sum_{t=1}^Tr_{\pi(c_t),t}(c_t) - \sum_{t=1}^{T}r_{I_{t}, t}(c_t)$, where again $I_t$ is the arm pulled by $M$ on round $t$. As before, we say an algorithm is $\delta$-low regret if $\E[\Reg(M)] \leq \delta$, and say an algorithm is \lowregret if it is $\delta$-\lowregret for some $\delta = o(T)$. 

If the size of the context set $C$ is constant with respect to $T$, then there is a simple way to construct a \lowregret algorithm $M'$ for the contextual bandits problem from a \lowregret algorithm $M$ for the classic bandits problem: simply maintain a separate instance of $M$ for every different context $v \in C$ (in the contextual bandits literature, this is sometimes referred to as the $S$-EXP3 algorithm \cite{Bubeck12}). We call the algorithm we obtain this way its \textit{contextualization}, and denote it as $\cont(M)$. 

If we start with a mean-based learning algorithm, then we can show that its contextualization satisfies an analogue of the mean-based property for the contextual-bandits problem (proof in Appendix~\ref{sect:mbalgs}). 

\begin{definition}[Mean-Based Contextual Learning Algorithm]\label{def:mb-cont}
Let $\sigma_{i, t}(c) = \sum_{s=1}^{t}r_{i, s}(c)$. An algorithm for the contextual bandits problem is $\gamma$-\textit{mean-based} if it is the case that whenever $\sigma_{i, t}(c)< \sigma_{j,t}(c) - \gamma T$, then the probability $p_{i,t}(c)$ that the algorithm pulls arm $i$ on round $t$ if it has context $c$ satisfying $p_{i,t}(c) < \gamma$. We say an algorithm is \textit{mean-based} if it is $\gamma$-\textit{mean-based} for some $\gamma = o(1)$. 
\end{definition}

\begin{theorem}\label{thm:mean-based alg}
If an algorithm for the experts problem or multi-armed bandits problem is mean-based, then its contextualization is also a mean-based algorithm for the contextual bandits problem.
\end{theorem}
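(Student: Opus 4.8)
The plan is to unwind the definition of the contextualization $\cont(M)$ and reduce the contextual mean-based property to the ordinary mean-based property of the individual copies of $M$, glued together by a martingale concentration bound. Recall the construction: $\cont(M)$ maintains one copy $M_c$ of $M$ for every context $c \in C$, each run with the global horizon $T$; on a round $t$ whose realized context is $c_t = c$, it plays the arm returned by $M_c$, observes $r_{I_t,t}(c)$ (or, in the experts version, the whole vector $r_{\cdot,t}(c)$), and feeds this back only to $M_c$. Thus when external round $t$ arrives, the copy $M_c$ is at its internal round $T_c(t) := |\{s \le t : c_s = c\}| \le T$, it has accumulated for arm $i$ the reward $\widetilde{\sigma}_{i,t}(c) := \sum_{s \le t} \mathbbm{1}[c_s = c]\, r_{i,s}(c)$, and the external play probability $p_{i,t}(c)$ of Definition~\ref{def:mb-cont} is exactly the probability that $M_c$ pulls arm $i$ at its internal round $T_c(t)$. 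The goal is therefore to control this probability using the fact that $M_c$ itself is mean-based.

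The heart of the argument is to relate the ``internal'' cumulative rewards $\widetilde{\sigma}_{i,t}(c)$, which actually drive $M_c$, to the ``external'' cumulative rewards $\sigma_{i,t}(c) = \sum_{s \le t} r_{i,s}(c)$ that appear in Definition~\ref{def:mb-cont}; the intuition is simply that $\widetilde{\sigma}_{i,t}(c) \approx q_c\, \sigma_{i,t}(c)$, where $q_c$ is the probability that context $c$ is drawn from $\D$, since each round is an independent context-$c$ round with probability $q_c$. To make this precise I would observe that the round-$s$ reward functions are fixed before the context $c_s$ is drawn: for the oblivious (positive-result) sellers this is by definition, and for a fully adaptive seller the round-$s$ allocations and prices are committed knowing at most the buyer's round-$s$ play distribution, which is itself determined by the history through round $s-1$, so $r_{i,s}(c)$ is measurable with respect to that history. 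Consequently $\sum_{s \le t}(\mathbbm{1}[c_s = c] - q_c)\, r_{i,s}(c)$ is a martingale with increments bounded by $1$, and Azuma--Hoeffding together with a union bound over the at most $KT|C|$ triples $(i,t,c)$ shows that, with probability at least $1 - 1/(KT|C|)$, the event $E$ holds that $|\widetilde{\sigma}_{i,t}(c) - q_c\,\sigma_{i,t}(c)| \le C_0\sqrt{T\log(KT|C|)}$ simultaneously for all $i,t,c$, where $C_0$ is a suitable absolute constant.

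It then remains to assemble the pieces. Write $q_{\min} = \min_{c} q_c > 0$, a constant since $\D$ is fixed, let $\gamma(\cdot) = o(1)$ denote the mean-based parameter of $M$, and set $\gamma'(T) := \gamma(T)/q_{\min} + 2C_0\sqrt{\log(KT|C|)/T}/q_{\min} + 1/(KT|C|)$, which is $o(1)$ provided $K$ and $|C|$ are at most polynomial in $T$. Suppose $\sigma_{i,t}(c) < \sigma_{j,t}(c) - \gamma'(T)\,T$ for some $i \neq j$. On the event $E$ this yields $\widetilde{\sigma}_{i,t}(c) - \widetilde{\sigma}_{j,t}(c) \le q_c\big(\sigma_{i,t}(c) - \sigma_{j,t}(c)\big) + 2C_0\sqrt{T\log(KT|C|)} < -q_c\gamma'(T)T + 2C_0\sqrt{T\log(KT|C|)} \le -\gamma(T)\,T$, so the mean-based property of $M_c$ (run at horizon $T$, invoked at its internal round $T_c(t) \le T$) bounds the probability that $M_c$ pulls arm $i$ by $\gamma(T)$; off $E$ I bound it trivially by $1$. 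Marginalizing over the context realizations gives $p_{i,t}(c) \le \gamma(T) + 1/(KT|C|) \le \gamma'(T)$, which is precisely the contextual mean-based inequality. The experts version is handled identically, the only difference being that $M_c$ additionally observes $r_{\cdot,t}(c)$ for all arms, which plays no role in the bound.

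I expect the concentration step to be the only part requiring genuine care, and within it the single subtlety is justifying that $r_{i,s}(c)$ may be treated as frozen relative to the fresh indicator $\mathbbm{1}[c_s = c]$: immediate against an oblivious seller, and against an adaptive seller it rests on the observation that round-$s$ rewards are committed before the round-$s$ context is revealed, which is exactly what makes the relevant sum a bounded-increment martingale rather than merely a sum of dependent variables. Everything else is routine slack-chasing: the off-by-one between internal and external round counters, the passage between $[0,1]$-valued rewards and the stated cumulative sums, and the fact that running $M$ with horizon $T$ but for fewer than $T$ actual rounds still preserves its $\gamma(T)$-mean-basedness --- all of these are absorbed into the additive $O(\sqrt{T\log T})$ error.
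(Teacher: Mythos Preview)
Your proposal is correct and follows essentially the same approach as the paper's proof: relate the per-context internal cumulative rewards $\widetilde{\sigma}_{i,t}(c)$ to $q_c\,\sigma_{i,t}(c)$ via concentration plus a union bound over $(i,t,c)$, then invoke the mean-based property of the underlying $M$, arriving at a contextual parameter of order $\gamma/q_{\min} + O(\sqrt{\log(KT|C|)/T})/q_{\min}$. The only (minor) difference is that you use Azuma--Hoeffding and explicitly justify the martingale structure against an adaptive seller, whereas the paper invokes a Chernoff bound directly; your treatment is arguably a bit more careful on this point, but the argument is the same.
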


\subsection{Welfare and monopoly revenue}
In order to evaluate the performance of our mechanisms for the seller, we will compare the revenue the seller obtains to two benchmarks from the single-round setting of a seller selling a single item to a buyer with value drawn from distribution $\D$.

The first benchmark we consider is the \textit{welfare} of the buyer, the expected value the buyer assigns to the item. This quantity clearly upper bounds the expected revenue that the seller can hope to extract per round.

\begin{definition}
The \emph{welfare}, $\Val(\mathcal{D})$ is equal to $\mathbb{E}_{v\sim \mathcal{D}}[v]$.
\end{definition}

The second benchmark we consider is the \textit{monopoly revenue}, the maximum possible revenue attainable by the seller in one round against a rational buyer. Seminal work of Myerson~\cite{Myerson81} shows that this revenue is attainable by setting a fixed price (``monopoly/Myerson reserve'') for the item, and hence can be characterized as follows.

\begin{definition}
\label{def:myersonrev}
The \emph{monopoly revenue} (alternatively, \emph{Myerson revenue}) $\Mye(\mathcal{D})$ is equal to  $\max_{p}p\cdot \Pr_{v \sim \D}[v \geq p]$.
\end{definition}

\subsection{A final note on the model}
For concreteness, we chose to phrase our problem as one where a single bidder whose value is repeatedly drawn independently from $\D$ each round engages in no-regret learning with their value as context. Alternatively, we could imagine a population of $m$ different buyers, each with a \emph{fixed} value $v_i$. Each round, exactly one buyer arrives at the auction, and it is buyer $i$ with probability $q_i$. The buyers are indistinguishable to the seller, and each buyer no-regret learns (without context, because their value is always $v_i)$. This model is mathematically equivalent to ours, so all of our results hold in this model as well if the reader prefers this interpretation instead. 

\section{An Illustrative Example}
\label{sec:example}
In this section, we overview an illustrative example to show the difference between mean-based and non-mean-based learning algorithms, and between critical and arbitrary auctions. We will not prove all claims in this section (nor carry out all calculations) as it is only meant to illustrate and provide intuition. Throughout this section, the running example will be when $\D$ samples $1/4$ with probability $1/2$, $1/2$ with probability $1/4$, and $1$ with probability $1/4$. Note that $\Val(\D) = 1/2$ and $\Rev(\D) = 1/4$.

\subsection{Mean-Based Learning and Arbitrary Auctions}
\label{sec:mbaa}
Let's first consider what the seller can do with an arbitrary (not critical) auction when the buyer is running a mean-based learning algorithm like EXP3. The seller will let the buyer bid $0$ or $1$. If the buyer bids $0$, they pay nothing but do not receive the item (recall that an arm of this form is required). If the buyer bids $1$ in round $t$, they receive the item and pay some price $p_t$ as follows: for the first half of the game ($1 \leq t \leq T/2$), the seller sets $p_t = 0$. For the second half of the game ($T/2 < t \leq T$), the seller sets $p_t = 1$.

Let's examine the behaviour of the buyer, recalling that they run a mean-based learning algorithm, and therefore (almost) always pull the arm with highest cumulative utility. The buyer with value $1$ will happily bid $1$ all the way through, since he is always offered the item for less than or equal to his value for the item. The buyer with value $1/2$ will bid $1$ for the first $T/2$ rounds, accumulating a surplus (i.e., negative regret) of $1/2$ per round. For the next $T/2$ rounds, this surplus slowly disappears at the rate of $1/2$ per round until it disappears at time $T$, so the bidder with value $1/2$ will bid $1$ all the way through. Finally, the bidder with value $1/4$ will bid $1$ for the first $T/2$ rounds, accumulating surplus at a rate of $1/4$ per round. After round $T/2$, this surplus decreases at a rate of $3/4$ per round, until at round $2T/3$ his cumulative utility from bidding $1$ reaches $0$ and he switches to bidding $0$.

Now let's compute the revenue. From round $T/2$ through $2T/3$, the buyer always buys the item at a price of $1$, so the seller obtains $T/6$ revenue. Finally, from round $2T/3$ through $T$, the buyer purchases the item with probability $1/2$ and pays $1$. The total revenue is $0+T/6+T/6 = T/3$. Note that if the seller used the default strategy, they would extract revenue only $T/4$.

Where did our extra revenue come from? First, note that the welfare of the buyer in this example is quite high: the bidder gets the item the whole way through when $v \geq 1/2$, and two-thirds of the way through when $v = 1/4$. One reason why the welfare is so high is because we give the item away for free in the early rounds. But notice also that the utility of the buyer is quite low: the buyer actually has zero utility when $v \leq 1/2$, and utility $1/2$ when $v = 1$. The reason we're able to keep the utility low, despite giving the item away for free in the early rounds is because we overcharge the bidders in later rounds (and they choose to overpay, exactly because their learning is mean-based). 

In fact, by offering additional options to the buyer, we show that \emph{it is possible for the seller to extract up to the full welfare from the buyer} (e.g. a net revenue of $T/2-o(T)$ for this example). As in the above example, our mechanism makes use of arms which are initially very good for the buyer (giving the item away for free, accumulating negative regret), followed by a period where they are very bad for the buyer (where they pay more than their value). The trick in the construction is making sure that the good/bad intervals line up so that: a) the buyer purchases the item in every round, no matter their value (this is necessary in order to possibly extract full welfare) and b) by round $T$, the buyer has zero (arbitrarily small) utility, no matter their value. 

Getting the intervals to line up properly so that any mean-based learner will pick the desired arms still requires some work. But interestingly, our constructed mechanism is non-adaptive and prior-independent (i.e. the same mechanism extracts full welfare \emph{for all $\D$}). Theorem~\ref{thm:nc_seller} below formally states the guarantees. The construction itself and the proof appear in Appendix~\ref{sect:nonconservative}. 

\begin{theorem}\label{thm:nc_seller}
If the buyer is running a mean-based algorithm, for any constant $\varepsilon >0$, there exists a strategy for the seller which obtains revenue at least $(1-\varepsilon)\Val(\D)T - o(T)$. 
\end{theorem}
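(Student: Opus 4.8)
The plan is to generalize the two-arm construction from Section~\ref{sec:mbaa} to a family of $m$ nontrivial arms (plus the required zero arm), one designed to be the ``target'' arm for each value $v_i$ in the support of $\D$. Arm $i$ will carry bid $b_i = v_i$ (or something arbitrarily close, to respect the monotonicity constraint $p_{i,t}\le a_{i,t}b_i$), will always allocate the item ($a_{i,t}=1$ for all $t$), and will have a price schedule that is $0$ for an initial block of rounds and then jumps to $v_i$ afterwards. The key is to choose the length of the free block for each arm so that: (a) a mean-based learner holding value $v_i$ accumulates enough negative regret on arm $i$ during its free phase that arm $i$ remains the cumulative-reward leader for this context all the way through round $T$ (so the buyer buys in every round), and (b) by round $T$ the cumulative utility of arm $i$ for value $v_i$ is only $o(T)$, so the seller's revenue on this context is $v_i T - o(T)$. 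Summing over contexts with weights $q_i$ then gives revenue $\Val(\D)T - o(T)$; an $\varepsilon$ slack is introduced only to absorb the discretization of bids and the $\gamma T$ / $\gamma$ error terms in the mean-based definition.

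First I would work out the arithmetic of the free/paid blocks. If arm $i$ is free for the first $f_i$ rounds and charges $v_i$ thereafter, then for a buyer with value $v_j$ the per-round reward on arm $i$ is $v_j$ during the free phase and $v_j - v_i$ afterwards; the cumulative reward at time $t>f_i$ is $v_j f_i + (v_j-v_i)(t-f_i)$. I want arm $i$ to be the context-$v_i$ leader, so I need $v_i f_i + (v_i-v_i)(t-f_i) = v_i f_i \ge 0$, trivially, and more importantly I need it to beat the zero arm (reward $0$) and all \emph{other} arms $k\ne i$ for this context — this is where the block lengths must be staggered. A clean choice is to nest the free phases: order arms by $v_i$ and let arm $i$'s free phase end exactly when its cumulative context-$v_i$ utility would otherwise start to matter; concretely, picking $f_i$ so that the "good interval then bad interval" windows for different arms are disjoint in the way sketched in Section~\ref{sec:mbaa} (arm with the largest value stays free longest, etc.) makes arm $i$ the unique leader for context $v_i$ at every round $t$, up to the $\gamma T$ fuzziness. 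I would then invoke Theorem~\ref{thm:mean-based alg}: the buyer's contextualized algorithm is mean-based, so for each context $v_i$ it pulls arm $i$ on all but $\gamma T + o(T)$ rounds, and on those rounds the seller collects $v_i$ once the free phase of arm $i$ has ended.

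The main obstacle I anticipate is the simultaneous bookkeeping of conditions (a) and (b) across all $m$ contexts at once: lengthening arm $i$'s free phase helps keep it the leader (condition (a)) but also increases the residual utility at time $T$, hurting the revenue bound (condition (b)), and the free phases of different arms interact because arm $i$ during its free phase is attractive to \emph{every} context, not just $v_i$. Resolving this requires choosing the $f_i$ so that each arm's paid phase is long enough to drain the surplus down to $o(T)$ for its own context while the arm is still never overtaken for that context — essentially solving a small system of inequalities, which is exactly the "getting the intervals to line up" step the introduction flags as the delicate part. A secondary technical nuisance is that bids must be distinct points in $[0,1]$ and prices are capped at $a_{i,t}b_i$, so I would set $b_i$ slightly above $v_i$ (by at most $\varepsilon/m$, say) and verify monotonicity of $a_{i,t}$ and $p_{i,t}$ in $i$ for every fixed $t$; this costs only another $\varepsilon$ factor in the revenue. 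Finally I would collect the error terms — the $o(T)$ from mean-based ($\gamma T$ per context, $m$ contexts, $\gamma = o(1)$), the $o(T)$ residual utilities, and the $\varepsilon T$ discretization loss — into the claimed $(1-\varepsilon)\Val(\D)T - o(T)$ bound.
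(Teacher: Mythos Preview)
Your construction has a genuine gap that cannot be repaired without changing the core idea. You propose one arm per support point, with arm $i$ carrying bid $b_i \approx v_i$, always allocating, and charging $0$ for $f_i$ rounds then $v_i$ thereafter. You then want arm $i$ to be the cumulative-reward leader for context $v_i$ throughout, while its terminal utility $v_i f_i$ is $o(T)$. But look at what arm $j$ with $v_j < v_i$ does for context $v_i$: after its own free phase it gives reward $v_i - v_j > 0$ \emph{every round}, so its cumulative reward at time $T$ is $(v_i - v_j)T + v_j f_j = \Theta(T)$. This swamps arm $i$'s $v_i f_i = o(T)$, and a mean-based learner with value $v_i$ will abandon arm $i$ for the cheapest arm (arm $1$), after which the seller collects only $v_1$ per round from this context, not $v_i$. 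The tension between your conditions (a) and (b) is not just ``delicate bookkeeping''; under your price cap $p_{i,t}\le v_i$ it is an outright contradiction as soon as $m\ge 2$.

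The paper's construction avoids this by abandoning both of your design choices. All nontrivial arms carry bid value $1$ (so every buyer, regardless of $v$, considers all of them --- this is exactly where non-cleverness is used), and the paid phase charges price $1$, not $v_i$. Thus every arm's paid phase has \emph{negative} per-round utility $v-1$ for a buyer with $v<1$, so no arm's cumulative reward runs off to $+\Theta(T)$ for any context. Moreover, arms do \emph{not} always allocate: each arm has a leading ``$\emptyset$ session'' (no item, no charge) before a short ``$0$ session'' (free item) and then a ``$1$ session'' (item at price $1$), with the sessions geometrically staggered across $n = O(\log(1/\varepsilon))$ arms independent of $\D$. A buyer with value $v$ is handed off from arm to arm, paying $1$ on a union of intervals whose total length is shown to be $\approx vT$. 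The construction is prior-independent, whereas yours is tied to $\supp(\D)$. In short, the missing idea is that extracting full welfare requires charging \emph{above} the buyer's value during the paid phase (hence overbidding), and therefore also requires a non-allocating warm-up phase so that later arms can become attractive at the right moment.
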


Two properties should jump out as key in enabling the result above. The first is that the buyer \emph{only} has no regret towards fixed arms and \emph{not} towards the policy they would have used with a lower value (this is what leads the buyer to continue bidding $1$ with value $1/2$ even though they have already learned to bid $0$ with value $1/4$). This suggests an avenue towards an improved learning algorithm: have the bidder attempt to have no regret not only towards each fixed arm, but also towards the policy of play produced when having different values. This turns out to be exactly the right idea, and is discussed in the following subsection below.

The second key property is that we were able to ``lure'' the bidders into playing an arm with a free item, then overcharge them later to make up for lost revenue. This requires that the bidder consider pulling an arm with maximum bid exceeding their value, which will never happen in a critical auction with clever bidders. It turns out it is still possible to do better than the default strategy with a critical auction against clever bidders, but not as well as with an arbitrary auction. Section~\ref{sec:minicritical} explores critical auctions for this example.

\subsection{Better Learning and Arbitrary Auctions}\label{sec:better}

In our bad example above, the buyer with value $1/2$ for the item slowly spends the second half of the game losing utility. While his behaviour is still \lowregret (he ends up with zero net utility, which indeed is at least as good as only bidding $0$), he would have been much happier to follow the actions of the buyer with value $1/4$, who started bidding $0$ at $2T/3$. 

Using this idea, we show how to construct a \lowregret algorithm for the buyer such that the seller receives at most the Myerson revenue every round. We accomplish this by extending an arbitrary \lowregret algorithm (e.g. EXP3) by introducing ``virtual arms'' for each value, so that each buyer with value $v$ has low regret not just with respect to every fixed bid, but also \lowregret with respect to the policy of play as if they had a different value $v'$ for the item (for all $v' < v$). In some ways, our construction is very similar to the construction of low internal-regret (or swap-regret) algorithms from low external-regret algorithms. The main difference is that instead of having low regret with respect to swapping actions, we have low regret with respect to swapping \textit{contexts} (i.e. values). Theorem~\ref{thm:lowregret} below states that the seller cannot outperform the default strategy against buyers who use such algorithms to learn. 

\begin{theorem}
\label{thm:lowregret}
\label{cor:lowregret}
There exists a \lowregret algorithm for the buyer against which every seller strategy extracts no more than $\Mye(\D)T + O(m\sqrt{\delta T})$ revenue.
\end{theorem}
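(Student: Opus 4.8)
The plan is to exhibit a no-regret learning algorithm under which the buyer's \emph{aggregate} play — summed over all rounds, grouped by the realized value — must obey the incentive and participation constraints of a single rational buyer in a one-shot auction; Myerson's characterization then caps the seller's revenue at $\Mye(\D)$ per round no matter how the seller adapts. The algorithm is obtained from any $\delta_0$-no-regret bandit (or experts) algorithm by an external-to-swap-regret style reduction (in the spirit of~\cite{BlumM07}) that swaps \emph{contexts} rather than actions: keep a separate instance $M_i$ for each value $v_i$ (as in $\cont(\cdot)$), but enlarge $M_i$'s action set by a ``virtual arm'' for every $v_j$ with $j<i$, where pulling the virtual arm for $v_j$ delegates the round to $M_j$ and copies whatever real arm it produces. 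The circular references (a delegated instance may itself delegate) are resolved exactly as in the swap-regret reduction, by routing through the stationary distribution of the induced ``delegation'' chain on $\{v_1,\dots,v_i\}$ and feeding each sub-instance its appropriately weighted reward; a standard calculation then shows the algorithm is $\delta$-no-regret with $\delta=\mathrm{poly}(m)\cdot\delta_0=o(T)$.

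For each value $v_i$, write $X_i,P_i,U_i=v_iX_i-P_i$ for the buyer's total expected allocation, payment, and surplus on the rounds where its value is $v_i$, so the seller's total expected revenue equals $\sum_iP_i$. No-regret against the null arm $b_0=0$ forces $U_i\ge -o(T)$ (``individual rationality''). The virtual arm for $v_{i-1}$ lets a value-$v_i$ buyer copy a value-$v_{i-1}$ buyer's play — on which each unit of item won is worth an extra $v_i-v_{i-1}$ — and carrying this out (normalizing by the value-frequencies $q_i$, which concentrate around their means) yields, for every $i$, a ``downward incentive'' inequality that in terms of the rescaled quantities $\hat X_i:=X_i/q_i\in[0,T]$, $\hat U_i:=U_i/q_i$ reads $\hat U_i\ge \hat U_{i-1}+(v_i-v_{i-1})\hat X_{i-1}-o(T)$.

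Hence the revenue $\sum_iP_i=\sum_iq_i(v_i\hat X_i-\hat U_i)$ is at most the optimum of the linear program that maximizes $\sum_iq_i(v_i\hat X_i-\hat U_i)$ subject to $\hat U_i\ge \hat U_{i-1}+(v_i-v_{i-1})\hat X_{i-1}$, $\hat U_i\ge 0$, and $0\le\hat X_i\le T$. This is precisely the LP for the revenue-optimal single-item auction against value distribution $\D$, relaxed by dropping allocation-monotonicity and all ``upward'' incentive constraints; one checks that those are not binding at the revenue optimum (individual rationality binds at the bottom and propagates upward through the retained constraints, at which point monotonicity and the upward constraints hold automatically), so the LP value equals $\Mye(\D)T$ — concretely, its optimum is a threshold solution $\hat X_i=T\cdot\mathbbm{1}[v_i\ge v_k]$ earning $v_k\cdot\Pr_{\D}[v\ge v_k]\cdot T\le\Mye(\D)T$. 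Propagating the additive slack through this comparison yields the claimed $\Mye(\D)T+O(m\sqrt{\delta T})$; the bound is essentially tight since the default strategy already extracts $\Mye(\D)T-o(T)$.

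The technical heart is the second step. A value-$v_i$ buyer and a value-$v_{i-1}$ buyer are active on different (random) subsets of rounds, so ``copy the $v_{i-1}$ buyer'' is not literal round-by-round imitation and one must argue that the relevant allocation and payment quantities are nonetheless comparable across values. This is where the delegation/reward-routing of the construction is needed, where one uses that the seller posts a \emph{single} menu each round without observing the current value (so different values' rounds are exchangeable), and where robustness to a \emph{fully adaptive} seller — who may see the buyer's planned arm-distribution before fixing the menu — has to be argued; controlling the resulting regret and concentration terms is what produces the $\mathrm{poly}(m)$ and $\sqrt{\delta T}$ factors, and is directly analogous to the subtlety in classical swap-regret reductions.
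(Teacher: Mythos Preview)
Your overall architecture matches the paper's: the same ``virtual-arm'' construction (Algorithm~\ref{alg:lowreg}), the same downward-IC inequality $\hat U_i \ge \hat U_j + (v_i-v_j)\hat X_j$ for $j<i$ derived from no-regret against the value arms (this is exactly Lemma~\ref{lem:lowregval} after rearranging), and the same reduction of the revenue bound to a single-round comparison. One minor point: the paper's algorithm is simpler than the full swap-regret stationary-distribution machinery you describe, since value arms only point strictly downward, so a bounded-depth recursion (sample from $M_j$ without updating it) suffices.

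There is, however, a genuine gap in your LP step. You assert that the relaxed LP---only the adjacent downward constraints $\hat U_i \ge \hat U_{i-1} + (v_i-v_{i-1})\hat X_{i-1}$, IR, and box constraints on $\hat X$---has value $\Mye(\D)T$ because ``monotonicity and the upward constraints hold automatically'' at its optimum. This is false when $\D$ is irregular. Substituting the binding constraints, the objective becomes $\sum_i \hat X_i\,\bigl[q_i v_i - (v_{i+1}-v_i)Q_{i+1}\bigr]$ with $Q_{i+1}=\Pr[v>v_i]$; without a monotonicity constraint on $\hat X$, the optimum sets $\hat X_i=T$ exactly on the indices where this ``virtual value'' coefficient is positive, which need not be an upper interval. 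For instance, take $v=(1,\,1.1,\,10)$ with $q=(0.5,\,0.01,\,0.49)$: the coefficients are roughly $(0.45,\,-4.35,\,4.9)$, so the relaxed LP attains $5.35T$, while $\Mye(\D)=4.9$. So your threshold claim fails and the LP comparison does not close.

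The paper avoids this trap by not comparing LPs at all. Using the \emph{all-pairs} downward constraints (which the algorithm does give, since there is a virtual arm for every $v_j<v_i$), it explicitly builds a single-round \emph{menu} from the buyer's aggregate play: select an increasing subsequence $a_1<\dots<a_l$ along which the per-type payments $r'_{a_k}$ are nondecreasing, offer the options $\bigl(h_{a_k}/(q_{a_k}T),\,r'_{a_k}/T\bigr)$, and use the pairwise constraints to show a type-$v_j$ buyer ends up paying at least $r'_j/T$. Menus are automatically IC, so Myerson bounds the menu's revenue by $\Mye(\D)$, and the conclusion follows with no regularity assumption. The passage from $m\delta/q_{\min}$ to $O(m\sqrt{\delta T})$ is then obtained by discarding values with $q_i\le\sqrt{\delta/T}$, which you correctly anticipate.
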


The algorithm's description and proof appear in Appendix~\ref{sect:goodalgs}. The key observation in the proof is that ``not regretting playing as if my value were $v'$'' sounds a lot like ``not preferring to report value $v'$ instead of $v$.'' This suggests that the aggregate allocation probabilities and prices paid by any buyer using our algorithm should satisfy the same constraints as a truthful auction, proving that the resulting revenue cannot exceed the default strategy (and indeed the proof follows this approach). 

\subsection{Mean-Based Learning and Critical Auctions}\label{sec:minicritical}
Recall in our example that to extract revenue $T/3$, bidders with values $1/4$ and $1/2$ had to consider bidding $1$. If the seller is using a critical auction, overbidding is dominated, so there is no reason for bidders to do this. In fact, the analysis and results of this section hold as long as the bidders never consider overbidding (even if the auction isn't critical).

Although the auction in Section \ref{sec:mbaa} is no longer viable, consider the following auction instead: in addition to the zero arm, the bidder can bid $1/4$ or $1/2$. If they bid $1/2$ in any round, they will get the item with probability $1$ and pay $1/2$. If they bid $1/4$ in round $t \leq T/3$, they get nothing. If they bid $1/4$ in round $t \in (T/3, T]$, they get the item and pay $1/4$. Let's again see what the bidder will choose to do, remembering that they will always pull the arm that has provided highest cumulative utility (due to being mean-based).

Clearly, the bidder with value $1/4$ will bid $1/4$ every round (since they are clever, they won't even consider bidding $1/2$), making a total payment of $2T/3 \cdot 1/4 \cdot 1/2 = T/12$. The bidder with value $1/2$ will bid $1/2$ for the first $T/3$ rounds, and then immediately switch to bidding $1/4$, making a total payment of $T/3 \cdot 1/2 \cdot 1/4 + 2T/3 \cdot 1/4 \cdot 1/4 = T/12$. 

The bidder with value $1$ will actually bid $1/2$ for the entire $T$ rounds. To see this, observe that their cumulative surplus through round $t$ from bidding $1/2$ is $t \cdot 1/2 \cdot 1/4 = t/8$ ($t$ rounds by utility $1/2$ per round by probability $1/4$ of having value $1$). Their cumulative surplus through round $t$ from bidding $1/4$ is instead $(t-T/3) \cdot 3/4 \cdot 1/4 = 3t/16 - T/16 \leq t/8$ (for $t \leq T$). Because they are mean-based, they will indeed bid $1/2$ for the entire duration due to its strictly higher utility. So their total payment will be $T \cdot 1/2 \cdot 1/4 = T/8$. The total revenue is then $7T/24 > T/4$, again surpassing the default strategy (but not reaching the $T/3$ achieved by our non-critical auction). 

Let's again see where our extra revenue comes from in comparison to a truthful auction. Notice that the bidder receives the item with probability $1$ conditioned on having value $1/2$, and also conditioned on having value $1$. Yet somehow the bidder pays an average of $1/3$ conditioned on having value $1/2$, but an average of $1/2$ conditioned on having value $1$. \emph{This could never happen in a truthful auction}, as the bidder would strictly prefer to pretend their value was $1/2$ rather than $1$. But it is entirely possible when the buyer does mean-based learning, as evidenced by this example.

In Appendix \ref{sect:conservative}, we define $\Crit(\D)$ as the value of the LP in Figure \ref{fig:mblpmain}. In Theorems \ref{thm:mbrevlbmain} and \ref{thm:mbrevubmain}, we show that $\Crit(\D) T$ tightly characterizes (up to $\pm o(T)$) the optimal revenue a seller can extract with a critical auction against a clever buyer. We state the theorem statements more generally to remind the reader that they hold as long as the buyer never overbids (even if the auction is arbitrary). The proofs can be found in Appendix \ref{sec:chara}.

\begin{figure}[h]
\begin{alignat*}{2}
  \textbf{maximize }   & \sum_{i=1}^m q_i (v_i x_i - u_i)\  \\
  \textbf{subject to ~~~~} & u_i \geq (v_i-v_j) \cdot x_j,  &\ & \forall \;i,j \in [m]: i > j\\
                       & u_i \geq 0, 1 \geq x_i \geq 0,  \ &\ & \forall \; i \in [m] 
\end{alignat*}
\caption{The mean-based revenue LP.}
\label{fig:mblpmain}
\end{figure}

Before stating our theorems, let's parse this LP. $q_i$ is a constant representing the probability that the buyer has value $v_i$ (also a constant). $x_i$ is a variable representing the average probability that the bidder gets the item with value $v_i$, and $u_i$ is a variable representing the average utility of the bidder when having value $v_i$. Therefore, this bidder's average value is $v_i x_i$, the average price they pay is $v_i x_i - u_i$, and the objective function is simply the average revenue. The second constraints are just normalization, ensuring that everything lies in $[0,1]$. The first line of constraints are the interesting ones. These look a lot like IC constraints that a truthful auction must satisfy, but something's missing: the LHS is clearly the utility of the buyer with value $v_i$ for ``telling the truth,'' but the utility of the buyer for ``reporting $v_j$ instead'' is $(v_i - v_j)\cdot x_j + u_j$. So the $u_j$ term is missing on the RHS. 

Let's also see a very brief proof outline for why no seller can extract more revenue than $\Crit(\D)$: 
\begin{enumerate}
\item Because the buyer has no regret conditioned on having value $v_i$, their utility is at least as high as playing arm $j$ every round. 
\item Because the auction never charges arm $j$ more than $v_j$ (conditioned on awarding the item), the buyer's utility for playing arm $j$ every round is at least $y_j\cdot (v_i - v_j)$, where $y_j$ is the average probability that arm $j$ awards the item.
\item Because the auction is monotone, and the buyer never considers overbidding, if the buyer gets the item with probability $x_j$ conditioned on having value $v_j$, we must have $y_j \geq x_j$. 
\end{enumerate}

These three facts together show that no seller can extract more than $\Crit(\D)$ against a no-regret buyer who doesn't overbid. Observe also that step 3 is \emph{exactly} the step that doesn't hold for buyers who consider overbidding (and is exactly what's violated in our example in Section~\ref{sec:mbaa}): if the buyer ever overbids, then they might receive the item with higher probability than had they just played their own arm every round. 

\begin{theorem}
\label{thm:mbrevubmain}
Any strategy for the seller achieves revenue at most $\Crit(\D)T + o(T)$ against a buyer running a \lowregret algorithm who overbids with probability $0$. 
\end{theorem}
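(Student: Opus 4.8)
The plan is to follow the three-step outline given in the excerpt, but to do so carefully enough to track where the $o(T)$ error terms come from, and to handle the fact that the buyer is running a *contextual* mean-based learning algorithm rather than a fixed policy. First I would fix a seller strategy (allowed to be fully adaptive) and a $\delta$-low-regret, $\gamma$-mean-based buyer algorithm, and set up notation: for each value $v_i$, let $X_{i}$ denote the (random) number of rounds in which the buyer has context $v_i$, and let $\hat{x}_i$ be the realized average allocation probability the buyer receives conditioned on context $v_i$, and $\hat{u}_i$ the realized average utility conditioned on context $v_i$. The total revenue is $\sum_i \sum_{t : c_t = v_i} (\text{value received} - \text{utility}) = \sum_i X_i(v_i \hat{x}_i - \hat{u}_i)$ in expectation (using $p_{i,t} \le b_i \cdot a_{i,t} \le v_i \cdot a_{i,t}$ for undominated arms, which already encodes that the buyer never overbids, so the arm's bid label is at most $v_i$). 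By a Chernoff/Azuma bound, $X_i = q_i T \pm O(\sqrt{T\log T})$ with high probability, so it suffices to show that the vector $(\hat{x}_i, \hat{u}_i)_i$ is, up to $o(1)$ additive slack, feasible for the LP in Figure~\ref{fig:mblpmain}; then $\sum_i q_i(v_i \hat{x}_i - \hat{u}_i) \le \Crit(\D) + o(1)$ and multiplying by $T$ finishes it.

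To establish approximate feasibility, the key is the inequality $\hat{u}_i \ge (v_i - v_j)\hat{x}_j - o(1)$ for all $i > j$. Here I would invoke the contextual no-regret guarantee: conditioned on having context $v_i$, the buyer's total utility is at least that of the best fixed arm for context $v_i$, minus $\delta$. In particular it is at least the utility of always playing arm $j$ whenever the context is $v_i$. Playing arm $j$ in a round with value $v_i$ yields utility $v_i a_{j,t} - p_{j,t} \ge v_i a_{j,t} - v_j a_{j,t} = (v_i - v_j) a_{j,t}$, again because $p_{j,t} \le b_j a_{j,t} \le v_j a_{j,t}$ (arm $j$'s bid is at most $v_j$ since the buyer is clever/non-overbidding — this is where "overbids with probability $0$" is used a second time, to say the relevant comparison arm $j$ is itself not an overbid relative to $v_j$, hence has bid label $\le v_j$). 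Summing over the $X_i$ rounds with context $v_i$ and dividing, $\hat{u}_i \ge (v_i - v_j)\bar{a}_j^{(i)} - \delta/X_i$, where $\bar{a}_j^{(i)}$ is the average of $a_{j,t}$ over rounds with context $v_i$. The remaining gap is step 3: relating $\bar{a}_j^{(i)}$ (the allocation of arm $j$ averaged over $v_i$-rounds) to $\hat{x}_j$ (the allocation the buyer actually receives, averaged over $v_j$-rounds). This is where monotonicity and the mean-based property enter.

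I expect step 3 to be the main obstacle, so I would spend the most care there. The idea: because the buyer with context $v_j$ is mean-based and never overbids, in almost every round the buyer with context $v_j$ pulls an arm whose current cumulative reward is within $\gamma T$ of the best, and that best is at least the cumulative reward of arm $j$ itself; combined with monotonicity of $a$ in the arm index one wants to argue the buyer's received allocation dominates that of arm $j$. The cleanest route is probably: let $S_j$ be the set of rounds where, in the $v_j$-context instance, the cumulative reward of arm $j$ exceeds that of every strictly-lower arm by more than $\gamma T$; on $S_j$ the mean-based property forces the buyer to play an arm $\ge j$, hence (by monotonicity) receive allocation $\ge a_{j,t}$; and one shows $|S_j^c|$ is small — either $O(\gamma T / \Delta)$ rounds where the reward gap is thin, handled by a potential/telescoping argument on the cumulative rewards, plus lower-order terms. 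Averaging, $\hat{x}_j \ge \bar{a}_j^{(j)} - o(1)$. Finally, since the adversary/seller is adaptive but the contexts are drawn i.i.d. and independently of everything, a further concentration argument identifies $\bar{a}_j^{(i)}$ and $\bar{a}_j^{(j)}$ up to $o(1)$ (the average of $a_{j,t}$ over a random i.i.d. subsample of rounds is close to its average over another) — this is the subtlest concentration step since $a_{j,t}$ is chosen adaptively, so I would phrase it as: conditioned on the seller's strategy and the sequence $(a_{j,t})_t$, the two subsamples of round-indices are exchangeable, giving a martingale difference bound of $O(\sqrt{T\log T})$. Chaining the three inequalities, $\hat{u}_i \ge (v_i - v_j)\hat{x}_j - o(1)$, which is the desired approximate LP constraint; combined with the trivial bounds $0 \le \hat{x}_i \le 1$, $\hat{u}_i \ge 0$, LP duality (or just "the objective of any $o(1)$-approximately-feasible point is at most $\Crit(\D)+o(1)$ since the LP is a bounded polytope with bounded objective gradient") yields $\mathbb{E}[\text{revenue}] \le \Crit(\D)T + o(T)$.
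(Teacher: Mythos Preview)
Your step 3 has the inequality backwards. To pass from $\hat{u}_i \ge (v_i - v_j)\bar{a}_j^{(i)} - o(1)$ to the LP constraint $\hat{u}_i \ge (v_i - v_j)\hat{x}_j - o(1)$ you need $\bar{a}_j^{(i)} \ge \hat{x}_j - o(1)$, i.e.\ arm $j$'s allocation must dominate the allocation the buyer actually receives when the context is $v_j$. You instead argue for the reverse --- ``the buyer's received allocation dominates that of arm $j$'' --- and to get there you invoke the mean-based property. This is doubly off: the theorem assumes only low regret, not mean-based; and a non-overbidding buyer with context $v_j$ can only pull arms with bid label at most $v_j$, so if arm $j$ denotes the one with the largest bid $\le v_j$, the buyer is always pulling an arm with index at most $j$, and by allocation monotonicity the received allocation is at most $a_{j,t}$, not at least.

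Once the direction is corrected, step 3 becomes a one-liner requiring neither the mean-based assumption nor any potential/telescoping argument: the non-overbidding constraint plus monotonicity give $\hat{x}_j \le \bar{a}_j^{(j)}$ pointwise in every round, with no error term --- this is precisely the outline's ``$y_j \ge x_j$.'' The paper's proof carries this out entirely in expectation: it sets $y_j = \frac{1}{T}\E\bigl[\sum_t a_{j,t}\bigr]$, defines $x_i = y_{j(i)}$ for $j(i)$ the highest-bid arm with $b_{j(i)} \le v_i$, verifies that the resulting $(x_i, u_i)$ is feasible for the mean-based LP up to a $\delta/T$ slack from the regret bound, and concludes. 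Because the seller fixes $a_{j,t}$ before the round-$t$ context is drawn, there is no need to distinguish $\bar{a}_j^{(i)}$ from $\bar{a}_j^{(j)}$, and your exchangeability/martingale concentration step can be dropped entirely.
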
  

\begin{theorem}
\label{thm:mbrevlbmain}
For any constant $\varepsilon > 0$, there exists a strategy for the seller gets revenue at least $(\Crit(\D) - \eps)T - o(T)$ against a buyer running a mean-based algorithm who overbids with probability $0$. The strategy sets a decreasing cutoff $r_t$ and for all $t$ awards the item with probability $1$ to any bid $b_t \geq r_t$ for price $b_t$, and with probability $0$ to any bid $b_t < r_t$.
\end{theorem}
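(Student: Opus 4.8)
The plan is to exhibit exactly the pay-your-bid auction promised in the statement and read its parameters off an optimal solution of the LP in Figure~\ref{fig:mblpmain}. First I fix an optimal $(x^*,u^*)$ with $u_i^*=\max\bigl(0,\max_{j<i}(v_i-v_j)x_j^*\bigr)$ (the smallest feasible $u$, which is optimal since $u$ has a negative objective coefficient), and I argue $x^*$ may be taken non-decreasing: if $x_a^*>x_{a+1}^*$, then for every $i>a+1$ the constraint $u_i\ge (v_i-v_a)x_a^*$ already dominates $u_i\ge(v_i-v_{a+1})x_{a+1}^*$ (because $v_i-v_a>v_i-v_{a+1}$ while $x_a^*>x_{a+1}^*$), so raising $x_{a+1}^*$ up to $x_a^*$ preserves feasibility and weakly increases the objective; iterating gives $x_1^*\le\cdots\le x_m^*$, with $x_m^*=1$ since $x_m$ is unconstrained from above.

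The auction: fix a small constant $\delta>0$ (to be taken $\approx\eps$). The seller offers the zero arm plus, for each value $v_k$, one arm with bid value $v_k-\delta$ instead of $v_k$ (omit the arm for $v_1$ if $v_1=0$; that buyer contributes nothing anyway). The reserve $r_t$ is the non-increasing step function for which arm $v_k-\delta$ satisfies $r_t\le v_k-\delta$ exactly on the last $x_k^*T$ rounds; this is well-defined and monotone precisely because $x^*$ is non-decreasing, and one checks overbidding is (weakly) dominated, so the auction is essentially critical. The shift by $\delta$ is the key device: for a bidder of value $v_i$, arm $v_i-\delta$ now yields strictly positive per-round utility $\delta$ whenever it wins, which will force a mean-based bidder to actually purchase on those rounds rather than remain indifferent between buying at its own value and not participating.

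Next I analyze a mean-based bidder of value $v_i$ who never overbids: it considers only arms $v_1-\delta,\dots,v_i-\delta$ and the zero arm, and by the suffix structure the cumulative reward of arm $v_k-\delta$ at time $t$ is $\sigma_{k,t}=(v_i-v_k+\delta)\,n_k(t)$ with $n_k(t)>0$ iff arm $v_k-\delta$ is \emph{currently} winning. Hence (i) before arm $v_i-\delta$ turns on all considered arms have cumulative reward $0$ and none is winning, so the bidder wins nothing; and (ii) once arm $v_i-\delta$ has won more than $\gamma T$ rounds (which takes only $O(\gamma T/\delta)=o(T)$ rounds) the mean-based property forces the bidder onto a currently-winning arm, so it wins — giving a purchase count of $x_i^*q_iT\pm o(T)$ after accounting for the random subsampling of value-$v_i$ rounds and the mean-based slack $\gamma=o(1)$ (using Theorem~\ref{thm:mean-based alg} for the contextualization). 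For the utility upper bound I run a telescoping/potential argument against $A_i(t):=\max_k\sigma_{k,t}$: following the argmax, the per-round reward exceeds $A_i(t)-A_i(t-1)$ only on rounds where the argmax switches, and each such excess is at most one reward unit; since the argmax moves monotonically from high-index to low-index arms (any two of these linear functions cross at most once) there are $O(m)=O(1)$ switches, and $A_i$ at the final round equals $\max_k(v_i-v_k+\delta)x_k^*q_iT=(u_i^*+O(\delta))q_iT$. Accounting for the $\gamma$-mean-based slack (which, by the same "time to build a $\gamma T$ gap" estimate, makes the actual play differ from the argmax on only $O(m\gamma T)$ rounds) bounds the bidder's total utility by $(u_i^*+O(\delta))q_iT+o(T)$. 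Therefore the revenue collected on value-$v_i$ rounds is $v_i\cdot(\text{purchases})-(\text{utility})\ge (v_ix_i^*-u_i^*-O(\delta))q_iT-o(T)$, and summing over $i$ yields revenue $\ge\bigl(\Crit(\D)-O(\delta)\bigr)T-o(T)$; choosing $\delta$ a small enough constant multiple of $\eps$ completes the proof.

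The main obstacle is the utility \emph{upper} bound: the no-regret (and mean-based) hypotheses only \emph{lower}-bound the bidder's utility, so extracting the LP value requires the structural potential argument above — together with the monotone-argmax observation and careful bookkeeping of both the $\gamma$-mean-based slack and the random per-value subsampling — to certify that the bidder cannot end up with much more than $u_i^*$ utility per round. The second, more conceptual point is the $\delta$-shift of the bid values: without it a mean-based bidder is genuinely indifferent about buying at its own value during the "early" rounds, and could fail to purchase there, which (as the worked example shows) would collapse the revenue back to the default benchmark.
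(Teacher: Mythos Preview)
Your proposal is correct and follows essentially the same route as the paper. Both fix an optimal LP solution, prove $x^*$ is monotone (your swap argument is exactly Lemma~\ref{lem:mono}), build a suffix-allocation auction from $x^*$, and then run the same two-part analysis: a mean-based lower bound on purchases, and a potential/telescoping argument against $\max_k\sigma_{k,t}$ to upper-bound the bidder's utility by $(u_i^*+O(\delta))$ per round. Your observation that the argmax moves monotonically (hence at most $m$ switches) is precisely the content of the lemma inside the paper's proof bounding $|\sigma_{\max,T}-\max_j\sigma_{j,T}|\le m$.

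The one cosmetic difference: the paper places arm $k$ at bid $b_k=v_k$ and charges $v_k-\eps$ when it allocates, whereas you shift the bid itself to $v_k-\delta$ and charge the bid. Your choice is arguably more faithful to the ``pay-your-bid with decreasing reserve'' description in the theorem statement, and it has the same effect (the value-$v_i$ bidder gets per-round utility $\delta$ from its own arm, which is what breaks the indifference and drives the mean-based purchase argument). Just be careful that $\delta<\min_k(v_{k+1}-v_k)$ so that the value-$v_i$ bidder genuinely cannot reach arm $v_{i+1}-\delta$ without overbidding. Also, in your write-up there is a stray $q_i$ in the expression for $A_i(T)$: as you defined it, $\sigma_{k,T}=(v_i-v_k+\delta)x_k^*T$ over all $T$ rounds, and the factor $q_i$ enters only when you restrict to the value-$v_i$ rounds where the bidder actually plays; keeping those two scales straight (the full-horizon potential versus the sub-sampled actual utility) is the one place where your sketch is loose, but the paper handles it the same way and it is routine to make rigorous.
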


\begin{theorem}
\label{thm:mbrev}
For distributions $\D$ supported on $[1/H, 1]$, $\Crit(\D) = O(\log \log H)$, and there exist $\D$ supported on $[1/H, 1]$ such that $\Crit(\D) = \Theta(\log \log H)$. For this same $\D$, $\Crit(\D) = \Theta(\log H)$.\footnote{The promised $\D$ is the equal-revenue curve truncated at $H$.}
\end{theorem}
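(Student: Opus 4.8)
Since $\Crit(\D)$ is by definition the optimal value of the LP in Figure~\ref{fig:mblpmain}, the entire theorem is a statement about that LP; Theorems~\ref{thm:mbrevlbmain} and~\ref{thm:mbrevubmain} then only serve to reinterpret it as revenue. The plan is to prove the scale-invariant bound $\Crit(\D)=O(\Mye(\D)\log\log H)$ for \emph{every} $\D$ (which gives $\Crit(\D)=O(\log\log H)$ whenever $\Mye(\D)=O(1)$, in particular for any $\D$ on $[1/H,1]$ since there $\Mye(\D)\le 1$), and then to show the truncated equal-revenue curve matches it. It is cleanest to rescale $\D$ to live on $[1,H]$ (the LP value is homogeneous of degree $1$ in the values); there the equal-revenue curve has $\Mye(\D)=\Theta(1)$ and $\Val(\D)=\int_1^H t^{-1}\,dt+O(1)=\Theta(\log H)$, which I take to be the intended content of the last sentence of the theorem. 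I will write $L:=\ln H$, use $\log L=\Theta(\log\log H)$, and repeatedly invoke the elementary estimate $\Pr_{V\sim\D}[V\ge p]\le \Mye(\D)/p$.

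\textbf{Upper bound.} Given a feasible $(x,u)$, I would first normalize each $u_i$ to its least feasible value $\max(0,\max_{j<i}(v_i-v_j)x_j)$, and then argue $x$ may be taken monotone nondecreasing: if $x_i<x_{i-1}$, raising $x_i$ to $x_{i-1}$ strictly increases the objective and forces no new $u$-constraint, since $u_\ell\ge (v_\ell-v_i)x_{i-1}$ is dominated by the pre-existing $u_\ell\ge (v_\ell-v_{i-1})x_{i-1}$. Writing $p_i:=v_ix_i-u_i$, split the support at the threshold $L$. For ``small'' values $v_i<L$, use only $p_i\le v_i$, so their total contribution is at most $\E[V\cdot I(V<L)]$, which a dyadic sum with $\Pr[V\ge p]\le\Mye(\D)/p$ bounds by $O(\Mye(\D)\log L)=O(\Mye(\D)\log\log H)$. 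For ``large'' values $v_i\ge L$, let $j(i)$ be the largest index with $v_{j(i)}\le v_i/L$ (it exists since $v_1=1\le v_i/L$); because $v_i-v_{j(i)}\ge v_i(1-1/L)$, the constraint $u_i\ge (v_i-v_{j(i)})x_{j(i)}$ gives $p_i\le \tfrac1L v_i x_i + v_i(x_i-x_{j(i)})$ (using monotonicity once). The $\tfrac1L v_ix_i$ terms sum to at most $\Val(\D)/L=O(\Mye(\D))$ (a dyadic sum bounds $\Val(\D)=O(\Mye(\D)\log H)$). For the $v_i(x_i-x_{j(i)})$ terms I would telescope $x_i-x_{j(i)}=\sum_{j(i)<\ell\le i}(x_\ell-x_{\ell-1})$, swap the order of summation, and note that the coefficient of each nonnegative increment $x_\ell-x_{\ell-1}$ is $\sum_{i:\,v_\ell\le v_i<Lv_\ell}q_iv_i=\E[V\cdot I(V\in[v_\ell,Lv_\ell))]$, which is $O(\Mye(\D)\log\log H)$ since $[v_\ell,Lv_\ell)$ is $\log L$ dyadic pieces, while $\sum_\ell (x_\ell-x_{\ell-1})\le 1$. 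Adding the three pieces yields $\Crit(\D)=O(\Mye(\D)\log\log H)$.

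\textbf{Lower bound (equal-revenue curve).} Let $\D$ be a fine discretization of the equal-revenue curve truncated at $H$. To lower bound $\Crit(\D)$ it suffices to exhibit one feasible LP solution, and the plan is to use the very slowly growing allocation $x(v)=v^{1/\ln H}/e$ (which lies in $[1/e,1]$ for $v\in[1,H]$) with $u$ minimal. Then $\sum q_iv_ix_i\approx \int_1^H x(v)/v\,dv=\Theta(\log H)$, and the crux is the shadow term: writing $w=v/\lambda$ one gets $(v-w)x(w)=v\,x(v)\cdot g(\lambda)$ with $g(\lambda)=(1-1/\lambda)\lambda^{-1/\ln H}$, whose maximum over $\lambda\ge 1$ is attained at $\lambda^\ast=\ln H+1$ with $g(\lambda^\ast)=1-\Theta\!\big(\tfrac{\log\log H}{\log H}\big)$. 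Hence $u(v)=v\,x(v)\,g(\lambda^\ast)$ and $v\,x(v)-u(v)=\Theta\!\big(\tfrac{\log\log H}{\log H}\big)\,v\,x(v)$, so $\Crit(\D)\ge\sum q_i(v_ix_i-u_i)\approx\Theta\!\big(\tfrac{\log\log H}{\log H}\big)\int_1^H x(v)/v\,dv=\Theta(\log\log H)$, the discretization error being $o(1)$. Combined with the upper bound (and $\Mye(\D)=\Theta(1)$) this gives $\Crit(\D)=\Theta(\log\log H)$.

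\textbf{Where the difficulty lies.} The upper bound is essentially dyadic/telescoping bookkeeping once one isolates the key fact ``$\E[V\cdot I(V\in[a,La))]=O(\Mye(\D)\log\log H)$ for every $a$'', so I do not expect trouble there. The real content is the lower bound: the obvious candidate allocations — indicator/threshold profiles, or geometrically spaced ones — all give only $\Theta(1)$, and one must hit on the near-flat power $v\mapsto v^{1/\ln H}$, after which the gain $1-g(\lambda^\ast)=\Theta\!\big(\tfrac{\log\log H}{\log H}\big)$ emerges from optimizing the shadow at the offset $\lambda^\ast\approx\ln H$. I expect carrying out that optimization cleanly, and verifying that a fine discretization costs only $o(1)$, to be the main work.
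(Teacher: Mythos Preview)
Your proposal is correct, but both halves take a genuinely different route from the paper.

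\textbf{Upper bound.} The paper first proves that $\Crit$ is monotone under stochastic dominance, observes that the truncated equal-revenue curve dominates every $\D$ on $[1,H]$ with $\Mye(\D)=1$, and then bounds $\Crit(\D_{ERC}(H))$ directly by plugging a single test value $w=f(v)=v/(1+\log v)$ into the constraint $u(v)\ge (v-w)x(w)$; a short calculus computation shows that for this $f$ the relevant integrand collapses, leaving $\int_{f(H)}^H x(v)/v\,dv=O(\log\log H)$. Your dyadic/telescoping argument instead proves the universal bound $\Crit(\D)=O(\Mye(\D)\log\log H)$ for \emph{every} $\D$ without any reduction. Your argument is more self-contained and avoids the continuous formalism; the paper's is shorter once the reduction is in hand, and the stochastic-dominance lemma is a structural fact worth knowing on its own. (One tiny patch: you assume $v_1=1$ so that $j(i)$ exists; if $v_1>1$, just take the base case to be the constraint $u_i\ge 0$, i.e.\ $x_0=0$, and the telescoping goes through unchanged.)

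\textbf{Lower bound.} The paper uses the allocation $x(v)=\log v/\log H$; the optimal shadow point then satisfies $w+w\log w=v$, and the integral is computed via that substitution. Your choice $x(v)=v^{1/\ln H}/e$ is a different slow-growth profile and has the advantage that the optimal offset $\lambda^\ast=1+\ln H$ is closed-form, so the factor $1-g(\lambda^\ast)=\Theta((\log\log H)/\log H)$ drops out cleanly; after that, $\int_1^H x(v)/v\,dv=\Theta(\log H)$ finishes the job exactly as you say. Both profiles give the same $\Theta(\log\log H)$, and your reading of the last clause of the theorem as $\Val(\D)=\Theta(\log H)$ is the intended one.
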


\subsection{A Final Note on the Example}
While reading through our examples, the reader may think that the mean-based learner's behavior is clearly irrational: why would you continue paying above your value? Why would you continue paying more than necessary, when you can safely get the item for less?

But this is exactly the point: a more thoughtful learner can indeed do better (for instance, by using the algorithm of Section~\ref{sec:better}). It is also perhaps misleading to believe that the bidder should ``obviously'' stop overpaying: we only know this because we know the structure of the example. But in principle, how is the bidder supposed to know that the overcharged rounds are the new norm and not an anomaly? Given that most standard no-regret algorithms are mean-based, it's important to nail down the seller's options for exploiting this behavior.

\section{Conclusion and Future Directions}
\label{sec:conclusions}
Motivated by the prevalence of bidders no-regret learning to play non-truthful auctions in practice~\cite{NekipelovST15}, we consider a revenue-maximizing seller with a single item (each round) to sell to a single buyer. We show that when the buyer uses mean-based algorithms like EXP3, the seller can extract revenue equal to the expected welfare with an unnatural auction. We then provide a modified no-regret algorithm $\mathcal{A}$ such that the seller cannot extract revenue exceeding the monopoly revenue when the buyer bids according to $\mathcal{A}$. Finally, we consider a mean-based buyer who never overbids. We tightly characterize the seller's optimal revenue with a linear program, and show that a pay-your-bid auction with decreasing reserves over time achieves this guarantee. Moreover, we show that the mean-based revenue can be unboundedly better than the monopoly revenue while simultaneously worse than the expected welfare. In particular, for the equal revenue curve truncated at $H$, the monopoly revenue is $1$, the expected welfare is $\ln(H)$, and the mean-based revenue is $\Theta(\ln (\ln(H)))$. 

While our work has already shown the single-buyer problem is quite interesting, the most natural direction for future work is understanding revenue maximization with multiple learning buyers. Of our three main results, only Theorem~\ref{thm:lowregret} extends easily (that if every buyer uses our modified learning, the default strategy, which now runs Myerson's optimal auction every round, is optimal; see Theorem \ref{thm:alg_buyer_multi} for details). Our work certainly provides good insight into the multi-bidder problem, but there are still clear barriers. For example, in order to obtain revenue equal to the expected welfare, the auction must necessarily also maximize welfare. In our single-bidder model, this means that we can give away the item for free for $\Omega(T)$ rounds, but with multiple bidders, such careless behaviour would immediately make it impossible to achieve the optimal welfare. Regarding the mean-based revenue, while there is a natural generalization of our LP to multiple bidders, it's no longer clear how to achieve this revenue with a critical auction, as all the relevant variables now implicitly depend on the actions of the other bidders. These are just examples of concrete barriers, and there are likely interesting conceptual barriers for this extension as well.

Another interesting direction is understanding the consequences of our work from the perspective of the buyer. Aside from certain corner configurations (e.g. the seller extracting the buyer's full welfare), it's not obvious how the buyer's utility changes. For instance, is it possible that the buyer's utility actually \emph{increases} as the seller switches from the default strategy to the optimal mean-based revenue? Does the buyer ever benefit from using an ``exploitable'' learning strategy, so that the seller can exploit it and make them both happier?

\bibliographystyle{alpha}
\bibliography{bib}

\appendix

\section{Good \lowregret algorithms for the buyer}\label{sect:goodalgs}

In this section we show that there exists a (contextual) \lowregret algorithm for the buyer which guarantees that the seller receives at most the Myerson revenue per round (i.e., $\Mye(\D)T$ in total). As mentioned earlier, it does not suffice for the buyer to simply run the contextualization $\cont(M)$ for some \lowregret learning algorithm $M$ (and in fact, if $M$ is mean-based, the seller can extract strictly more than $\Mye(\D)T$, as we will see later). However, by modifying $\cont(M)$ so that it has not just \lowregret with respect to the best stationary policy, but so that it additionally does not regret playing as if it had some other context, we obtain a \lowregret algorithm for the buyer which guarantees the seller receives no more than $\Mye(\D)$ per round.

The details of the algorithm are presented in Algorithm \ref{alg:lowreg}. Recall that the distribution $\D$ is supported over $m$ values $v_1 < v_2 < \cdots < v_m$, where for each $i \in [m]$, $v_i$ has probability $q_i$ under $\D$. The algorithm takes a \lowregret algorithm $M$ for the classic multi-armed bandit problem, and runs $M$ instances of it, one per possible value $u$. Each instance $M_i$ of $M$ learns not only over the possible $K$ actions, but also over $i-1$ virtual actions corresponding to values $v_1$ through $v_{i-1}$. Picking the virtual action associated with $v_j$ corresponds to the buyer pretending they have value $v_j$, and playing accordingly (i.e., querying $M_j$). 

This algorithm is very similar in structure to the construction of a low swap-regret bandits algorithm from a generic \lowregret bandits algorithm (see \cite{BlumM07}). The main difference is that whereas swap regret guarantees \lowregret with respect to swapping actions (i.e. always playing action $i$ instead of action $j$), this algorithm guarantees \lowregret with respect to swapping \textit{contexts} (i.e., always pretending you have context $i$ when you actually have context $j$). In addition, the auction structure of our problem allows us to only consider contexts with valuations smaller than our current valuation $v_i$; this puts a limit of $m$ on the number of recursive calls per round, as opposed to the low swap regret algorithm where one must solve for the stationary distribution of a Markov chain over $m$ states each round.

\begin{algorithm}[ht] 
	\caption{No-regret algorithm for buyer.}\label{alg:lowreg}
    \begin{algorithmic}[1]
    	\STATE Let $M$ be a $\delta$-\lowregret algorithm for the classic multi-armed bandit problem, with $\delta = o(T)$. Initialize $m$ copies of $M$, $M_1$ through $M_m$.
        \STATE Instance $M_i$ of $M$ will learn over $K+i-1$ arms. 
        \STATE The first $K$ arms of $M_i$ (``bid arms'') correspond to the $K$ possible menu options $b_1, b_2, \dots, b_K$. 
        \STATE The last $i-1$ arms of $M_i$ (``value arms'') correspond to the $i-1$ possible values (contexts) $v_1,\dots,v_{i-1}$. 
        \FOR {$t=1$ to $T$}
        	\IF {buyer has value $v_i$}
        		\STATE Use $M_i$ to pick one arm from the $K+i-1$ arms.
		\IF {the arm is a bid arm $b_j$}
			\STATE Pick the menu option $j$ (i.e. bid $b_j$).
		\ELSIF {the arm is a value arm $v_j$}
			\STATE Sample an arm from $M_j$ (but don't update its state). If it is a bid arm, pick the corresponding menu option. If it is a value arm, recurse.
		\ENDIF
		\STATE Update the state of algorithm $M_i$ with the utility of this round.
        	\ENDIF
        \ENDFOR
      \end{algorithmic}
\end{algorithm}

We now proceed to show that Algorithm \ref{alg:lowreg} has our desired guarantees.

\begin{theorem}
\label{thm:alg_buyer}
Let $q_{min} = \min_{i} q_i$. If the buyer plays according to Algorithm \ref{alg:lowreg} then the seller (even if they play an adaptive strategy)  receives no more than $\Mye(\D) T + \frac{m\delta}{q_{min}}$ revenue. 
\end{theorem}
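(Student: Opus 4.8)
The plan is to show that the aggregate (summed over rounds) allocation probabilities and payments induced by any buyer running Algorithm~\ref{alg:lowreg} must satisfy a relaxed version of the Myerson incentive-compatibility constraints, and then invoke Myerson's characterization to bound the revenue. Fix a seller strategy (possibly adaptive, but note that once the buyer's algorithm is fixed the entire interaction — including the buyer's internal randomness — induces a well-defined distribution over play, so the seller's allocation/price rules become fixed functions of the round). For each value $v_i$, let $n_i$ be the (expected) number of rounds in which the buyer has value $v_i$, so $\E[n_i] = q_i T$; let $X_i$ be the expected total allocation the buyer receives across those $n_i$ rounds, and $P_i$ the expected total payment. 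Define the per-round averages $x_i = X_i / (q_i T)$ and $p_i = P_i / (q_i T)$. The objective is to show $\sum_i q_i p_i \le \Mye(\D) + (\text{error})$, which by definition of $\Mye$ follows once we establish the IC-type inequalities $v_i x_i - p_i \ge v_i x_j - p_j$ for all $i > j$ — equivalently $u_i \ge u_j + (v_i - v_j) x_j$ where $u_i = v_i x_i - p_i$ — since those are exactly the constraints under which a single-buyer mechanism's revenue is at most $\Mye(\D)$ (this is the standard Myerson argument: IC + IR implies the price is a fixed posted price, up to the usual convexity/envelope manipulations — or more directly, any $(x,u)$ feasible for the relaxed LP with the full $u_j$ term has objective at most $\Mye(\D)$).

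The heart of the argument is extracting these inequalities from the no-regret guarantee of the value arms. Consider instance $M_i$, which is $\delta$-no-regret over its $K + i - 1$ arms, one of which is the ``value arm'' $v_j$ for each $j < i$. The reward $M_i$ receives when it plays the value arm $v_j$ in a round where the true value is $v_i$ is computed as $v_i a - p$, where $a, p$ are the allocation and payment that the seller assigns to the menu option that $M_j$'s sampled arm corresponds to — i.e., the reward of ``playing as if value $v_j$'' but evaluated at the true value $v_i$. The no-regret property of $M_i$ then says: the buyer's actual total utility across the $n_i$ value-$v_i$ rounds is at least (the total utility $M_i$ would have accumulated always playing value arm $v_j$) minus $\delta$. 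Now the key sub-claim: the expected per-round reward of value arm $v_j$ inside $M_i$ equals $v_i y_j^{(j)} - (\text{payment under } M_j\text{'s play})$ where $y_j^{(j)}$ is the average allocation $M_j$ obtains in $v_j$-rounds scaled appropriately; but because the seller's payment rule satisfies $p_t(b) \le b \cdot x_t(b) \le v_j \cdot x_t(b)$ for any bid $b \le v_j$, and $M_j$ (being a buyer-for-value-$v_j$ simulation that itself is no-regret and never does worse than the zero arm) never makes $M_j$ overbid in a way that matters, the payment term is at most $v_j$ times that allocation. Tracking this carefully gives that value arm $v_j$'s cumulative reward in $M_i$ is at least $(v_i - v_j)$ times (the cumulative allocation $M_j$ gets) $= (v_i - v_j) x_j \cdot q_j T$ in expectation, after relating $M_j$'s allocation-in-$v_j$-rounds to $x_j$. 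Combining with $M_i$'s no-regret bound and dividing through by $q_i T$ yields $u_i \ge (v_i - v_j) x_j - \delta/(q_i T)$, and pooling these with $q_{\min}$ absorbing the worst denominator gives the claimed additive loss $m\delta/q_{\min}$ overall.

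The step I expect to be the main obstacle is the bookkeeping in the previous paragraph: making precise the relationship between "the allocation $M_j$'s simulated play obtains when queried from inside $M_i$ during a $v_i$-round" and "the allocation $M_j$ obtains during its own $v_j$-rounds." These are not literally the same random process — $M_j$ is queried (without state update) at different times and its state evolves only during genuine $v_j$-rounds — so one must argue that in expectation the sampled action distribution of $M_j$ at any round is the same whether or not the query is "real," which follows because $M_j$'s state depends only on the history of its own updates, not on the extra read-only queries. A second, related subtlety is handling the recursion cleanly: a value arm $v_j$ may itself lead $M_j$ to play another value arm $v_k$ with $k < j$, so the cleanest route is to prove the bound by induction on $i$, where the inductive hypothesis controls the behavior of $M_j$ for all $j < i$. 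The rest — the error accumulation across the at most $m$ levels of recursion, the $q_{\min}$ in the denominator, and the final appeal to Myerson — is routine. I would also remark that no mean-based assumption is needed here: only the $\delta$-no-regret property of the base algorithm $M$ is used, which is why this yields the clean upper bound $\Mye(\D)T + m\delta/q_{\min}$ against arbitrary (even adaptive) sellers.
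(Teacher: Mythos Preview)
Your overall strategy is the right one and matches the paper: extract incentive-compatibility-type inequalities from the no-regret guarantee of the value arms, then invoke Myerson optimality. You also correctly identify the delicate bookkeeping point (relating $M_j$'s sampled play inside $M_i$ to $M_j$'s own play on $v_j$-rounds), which the paper handles via the $h_i(t,\pi)$, $r_i(t,\pi)$ formulation in Lemma~\ref{lem:lowregval}.

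However, there is a genuine gap in the inequality you actually derive. You write that the value-arm reward is at least $(v_i - v_j)$ times the allocation, obtained by bounding the payment by $v_j$ times the allocation. This step is both unnecessary and wrong as stated: Algorithm~\ref{alg:lowreg} gives $M_j$ access to all $K$ bid arms, so $M_j$ may well pull an arm with $b_k > v_j$, and then $p_{k,t} \le b_k a_{k,t}$ does not give $p_{k,t} \le v_j a_{k,t}$. More importantly, even if that bound held, the resulting inequality $u_i \ge (v_i - v_j)x_j$ is precisely the constraint of the mean-based revenue LP (Figure~\ref{fig:mblpmain}), whose value is $\Crit(\D)$, not $\Mye(\D)$; Theorem~\ref{thm:mbrev} shows these can differ by an unbounded factor. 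So your derived constraint is provably too weak to conclude the theorem.

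The fix is simple and is what the paper does: do not lower-bound the value-arm reward at all. The cumulative reward of always playing value arm $v_j$ is, after the scaling argument you already sketched, exactly $v_i \cdot (h_j/q_j) - r_j/q_j$ in the paper's notation (equivalently $v_i x_j - p_j$ in yours). No-regret of $M_i$ then yields the full one-sided IC constraint $u_i \ge v_i x_j - p_j - \delta/(q_i T) = u_j + (v_i - v_j)x_j - \delta/(q_i T)$, retaining the crucial $u_j$ term. A secondary point you gloss over: one-sided IC (downward only) together with IR does not immediately reduce to a posted price; the paper handles this by filtering to a subsequence with increasing $r'_i$ and explicitly constructing a menu (Lemma~A.4), which is short but not entirely automatic.
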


\begin{proof}
For each $i \in [m]$, define $h_i$ to be the expected number of rounds the buyer receives the item when they have value $v_i$. For each $i \in [m]$ define $r_i$ to be the expected total payment from the buyer to the seller when the buyer has value $v_i$. Our goal is to upper bound $\sum_{i} r_i$, the total revenue the seller receives.

Recall that every strategy must contain a zero option in its menu, where the buyer pays nothing and doesn't receive the item (and hence receives zero utility). Since each $M_i$ is a $\delta$-\lowregret algorithm, we know that the buyer does not regret always choosing the zero option when they have value $v_i$. It follows that, for all $i \in [m]$, we have that

\begin{equation}\label{eqn:lowregbid}
v_i h_i - r_i \geq -\delta.
\end{equation}

The following lemma shows that when $j>i$, the buyer does not regret pretending to have value $v_i$ when they have value $v_j$. 

\begin{lemma}\label{lem:lowregval}
For all $1 \leq i < j \leq m$,
\[
(v_j h_j - r_j)/q_j \geq (v_j h_i - r_i)/q_i  - \delta/q_j.
\]
\end{lemma}

\begin{proof}
From the algorithm, we know that $M_j$ does not regret always playing the value arm corresponding to $v_i$. We define the following notation. For all $i \in [m], t \in [T]$ and any history $\pi$ of $t-1$ rounds (including for each round which option is chosen and the utility of that round), define $h_i (t,\pi)$ to be the probability of getting item in round $t$ given history $\pi$ when buyer has value $v_i$ and define $r_i(t,\pi)$ to be the expected price paid in round $t$ when the buyer has value $v_i$ given history $\pi$. 

Let $\Pi_t$ be the distribution of histories at round $t$, for $t = 0,...,T-1$. The \lowregret guarantee tells us that 

\begin{equation}\label{eqn:lowregval}
\sum_{t=1}^T q_j\cdot \mathbb{E}_{\pi \sim \Pi_{t-1}} \left[(h_j(t,\pi) v_j - r_j(t,\pi)) - (h_i(t,\pi)v_j - r_i(t,\pi))\right] \geq -\delta .
\end{equation}

Note that 

\begin{eqnarray*}
\sum_{t=1}^T \mathbb{E}_{\pi \sim \Pi_{t-1}} [h_j(t,\pi)q_j]  &=& h_j, \\
\sum_{t=1}^T \mathbb{E}_{\pi \sim \Pi_{t-1}} [h_i(t,\pi)q_i] &=& h_i, \\
\sum_{t=1}^T \mathbb{E}_{\pi \sim \Pi_{t-1}} [r_j(t,\pi)q_j]  &=& r_j, \\
\sum_{t=1}^T \mathbb{E}_{\pi \sim \Pi_{t-1}} [r_i(t,\pi)q_i] &=& r_i.
\end{eqnarray*}

Dividing (\ref{eqn:lowregval}) through by $q_j $ and substituting in these relations, we arrive at the statement of the lemma.
\end{proof}

Now define $\lambda_i = \sum_{j \leq i}\frac{1}{q_j}$, and define
\begin{equation}\label{eqn:ridef}
r'_i = \frac{r_i}{q_i} - \lambda_i\delta.
\end{equation}

It follows from Lemma \ref{lem:lowregval} that for all $1 \leq i < j \leq m$,
\begin{equation}\label{eqn:rival}
\frac{v_j h_j}{q_j} - r'_j \geq \frac{v_j h_i}{q_i} - r'_i.
\end{equation}
From (\ref{eqn:lowregbid}), we also have for all $i \in [m]$,

\begin{equation}\label{eqn:ribid}
\frac{v_i h_i}{q_j} -r'_i \geq 0.
\end{equation}

We will argue from these constraints that $\sum_{i}q_ir'_i \leq \Mye(\D)T$. To do this, we will construct a single-round mechanism for selling an item to a buyer with value distribution $\D$ such that this mechanism has expected revenue $\sum_{i}q_ir'_i/T$; the result then follows from the optimality of the Myerson mechanism (\cite{Myerson81}). 

To construct this mechanism, first find a sequence of indices $a_1,a_2,\dots,a_l$ via the following algorithm.
\begin{algorithm}[ht]
    \begin{algorithmic}[1]
        \STATE $l\leftarrow 1$, $a_1 \leftarrow 1$. 
        \FOR { $i = 2$ to $m$}
        		\IF {$r'_{a_i} \geq r'_{a_l}$}
			\STATE $l \leftarrow l+1$, $a_l \leftarrow i$.
		\ENDIF
        \ENDFOR
             \end{algorithmic}
\end{algorithm}

It is easy to verify that following this algorithm results in $r'_{a_1} \leq r'_{a_2} \leq \cdots \leq r'_{a_l}$. For any $a_i \leq j < a_{i+1}$ (assuming $a_{l+1} = m+1$), $r'_j < r'_{a_i}$. 
\begin{lemma}
For a bidder with value distribution $\D$, the following menu of $l$ options will achieve revenue at least $\sum_{i=1}^m r'_i q_i/T$: for each $1 \leq i \leq l$, the buyer has the choice of paying $r'_{a_i}/T$, and receiving the item with probability $h_{a_i}/(q_{a_i}T)$.
\end{lemma}
\begin{proof}
Consider some value $v_j$ in $\D$. We will show that the buyer with value $v_j$ will pay at least $r'_j/T$, thus proving the lemma. Assume $a_i \leq j \leq a_{i+1}$. 

We have (from (\ref{eqn:ribid}) and the monotonicity of $v_i$) that
\[
\frac{v_j h_{a_i}}{q_{a_i}} - r'_{a_i} \geq \frac{v_{a_i} h_{a_i}}{q_{a_i}}-r'_{a_i} \geq 0.
\]
This means the buyer with value $u_j$ receives non-negative utility by choosing option $i$. 
For any $1\leq i' < i$,  we have (from (\ref{eqn:rival})) that
\[
\frac{v_{a_i} h_{a_i}}{q_{a_i}} - r'_{a_i} \geq \frac{v_{a_i} h_{a_{i'}}}{q_{a_{i'}}} - r'_{a_{i'}}.
\]
Since $r'_{a_i} \geq r'_{a_{i'}}$,  the above inequality implies that
\[
\frac{h_{a_i}}{q_{a_i}} \geq \frac{h_{a_{i'}}}{q_{a_{i'}}}.
\]
It follows that
\[
v_j \left(\frac{h_{a_i}}{q_{a_i}} - \frac{h_{a_{i'}}}{q_{a_{i'}}}\right) \geq v_{a_i}\left(\frac{h_{a_i}}{q_{a_i}} - \frac{h_{a_{i'}}}{q_{a_{i'}}}\right)  \geq r'_{a_i} - r'_{a_{i'}}.
\]
This means the buyer with value $v_j$ prefers option $i$ to all options $i' < i$. Therefore this buyer will choose an option from $\{i, i+1, \dots, l\}$. Since $r'_{j} \leq r'_{a_i} \leq r'_{a_{i+1}} \leq \dots \leq r'_{a_l}$, we know that this buyer will pay at least $r'_j/T$, as desired.
\end{proof}

It follows from the optimality of the Myerson auction that $\sum_{i}q_ir'_i/T \leq \Mye(\D)$, and therefore that $\sum_{i}q_ir'_i \leq \Mye(\D)T$. Expanding out $r'_i$ via (\ref{eqn:ridef}), we have that 

\begin{eqnarray*}
\sum_{i}q_ir'_i &=& \sum_{i}r_i - \sum_{i}q_i\lambda_{i}\delta \\
&\geq & \sum_{i}r_i - \delta \cdot \max_{i}\lambda_{i} \\
&\geq & \sum_{i}r_i - \frac{m\delta}{q_{min}}, \\
\end{eqnarray*}

\noindent
from which the theorem follows.
\end{proof}

We can remove the explicit dependence on $q_{min}$ by filtering out all values which occur with small enough probability.

\begin{corollary}[Restatement of Theorem \ref{thm:lowregret}]
\label{cor:lowregret}
There exists a \lowregret algorithm for the buyer where the seller receives no more than $\Mye(\D)T + O(m\sqrt{\delta T})$ revenue.
\end{corollary}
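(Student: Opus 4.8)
The plan is to remove the $1/q_{min}$ factor in Theorem~\ref{thm:alg_buyer} by having the buyer simply ignore (bid $0$ on) all values that occur too rarely to matter, and run the algorithm of Theorem~\ref{thm:alg_buyer} only on the remaining ``frequent'' values. Fix a threshold $\theta \in (0,1)$ to be optimized at the end, and let $S = \{i \in [m] : q_i \geq \theta\}$. The buyer's algorithm is: on a round with value $v_i$ for $i \notin S$, play the zero arm; on a round with value $v_i$ for $i \in S$, run Algorithm~\ref{alg:lowreg} restricted to the sub-support $S$ (so the copy $M_i$ has its bid arms plus one virtual arm for each $v_j$ with $j \in S$ and $j < i$). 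Note that $|S| \leq m$ and $\min_{i \in S} q_i \geq \theta$.

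First I would check that this is still a \lowregret algorithm. The restricted instance of Algorithm~\ref{alg:lowreg} is \lowregret on the rounds whose value lies in $S$ (the argument is identical to the unrestricted case). The rounds whose value lies in $S^c = [m]\setminus S$ contribute at most their own count to the total regret, and in expectation there are at most $T\sum_{i \notin S} q_i \leq Tm\theta$ such rounds; for the eventual choice of $\theta$ this is $o(T)$. Second, I would bound the revenue, written as $\sum_{i\in S} r_i + \sum_{i \notin S} r_i$ in the notation of the proof of Theorem~\ref{thm:alg_buyer}. Since every per-round payment is at most $1$, the second sum is at most the expected number of $S^c$-rounds, i.e. $\sum_{i\notin S} r_i \leq Tm\theta$. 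For the first sum, I would re-run the proof of Theorem~\ref{thm:alg_buyer} verbatim with $[m]$ replaced by $S$: inequalities~(\ref{eqn:lowregbid}), (\ref{eqn:rival}), (\ref{eqn:ribid}) hold for all indices in $S$, the single-round menu is built over the indices of $S$, and its revenue against $\D$ is still at most $\Mye(\D)$ --- buyers whose value lies in $S^c$ only add a nonnegative amount to the menu's revenue, so the bound $\sum_{i\in S} q_i r'_i/T \leq \Mye(\D)$ still goes through. The only place $q_{min}$ entered the old bound was via $\max_i \lambda_i$, now at most $\max_{i\in S}\sum_{j\in S, j\leq i} 1/q_j \leq |S|/\theta \leq m/\theta$, so $\sum_{i\in S} r_i \leq \Mye(\D)T + m\delta/\theta$.

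Combining the two pieces, the seller's revenue is at most $\Mye(\D)T + m\delta/\theta + Tm\theta$. Choosing $\theta = \sqrt{\delta/T}$ (which lies in $(0,1)$ since $\delta = o(T)$) balances the last two terms and gives revenue at most $\Mye(\D)T + 2m\sqrt{\delta T} = \Mye(\D)T + O(m\sqrt{\delta T})$, as claimed; the same choice makes the $S^c$-regret term equal $m\sqrt{\delta T} = o(T)$, so the algorithm is indeed \lowregret.

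I expect the only genuine subtlety --- the main obstacle --- to be verifying that the menu-construction and Myerson-optimality step of Theorem~\ref{thm:alg_buyer} survives the restriction to $S$: one must confirm that having the incentive-compatibility-type constraints~(\ref{eqn:rival})--(\ref{eqn:ribid}) only among the frequent values still suffices to assemble a valid single-round (IC, IR) mechanism, and that the infrequent values (which may carry large $v_i$ but tiny $q_i$) cannot be exploited to break the comparison against $\Mye(\D)$. Everything else is routine bookkeeping together with the one-variable optimization of $\theta$.
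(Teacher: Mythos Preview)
Your proposal is correct and follows essentially the same approach as the paper: threshold out all values with $q_i \leq \sqrt{\delta/T}$, bound their contribution to both regret and revenue by $O(m\sqrt{\delta T})$ since there are at most $m$ such values and each round contributes at most $1$, and then invoke Theorem~\ref{thm:alg_buyer} on the remaining values with $q_{min} \geq \sqrt{\delta/T}$. Your write-up is in fact more careful than the paper's (which is only a few lines), in particular in explicitly flagging and resolving the one real subtlety---that the single-round menu built only from the frequent values is still a valid IR mechanism for the full distribution $\D$, so its revenue is still dominated by $\Mye(\D)$.
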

\begin{proof}
Ignore all values $v_i$ with $q_i \leq \sqrt{\delta/T}$ (whenever a round with this value arises, choose an arbitrary action for this round). There are $m$ total values, so this happens with at most probability $m\sqrt{\delta/T}$, and therefore modifies the regret and revenue in expectation by at most $O(m\sqrt{\delta T}) = o(T)$. 

The regret bound from Theorem \ref{thm:alg_buyer} then holds with $q_{min} \geq \sqrt{\delta/T}$, from which the result follows.
\end{proof}

\subsection{Multiple bidders}

Interestingly, we show that by slightly modifying Algorithm \ref{alg:lowreg}, we obtain an algorithm (Algorithm \ref{alg:lowregmulti}) that works for the case where there are \textit{multiple bidders}. In the multiple bidder setting, there are $B$ bidders with independent valuations for the item. Each round $t$, bidder $\ell$ receives a value $v_{\ell}(t)$ for the item drawn from a distribution $\D_{\ell}$ (independently of all other values). Each distribution $\D_{\ell}$ is supported over $m_{\ell}$ values, $v_{\ell, 1} < v_{\ell, 2} < \dots < v_{\ell, m_{\ell}}$, where $v_{\ell, i}$ occurs under $\D_{\ell}$ with probability $q_{\ell, i}$. Every round each bidder $\ell$ submits a bid $b_{\ell}(t)$, and the auctioneer decides on an allocation rule $\mathbf{a}_{t}$, which maps $\ell$-tuples of bids $(b_1(t), b_2(t), \dots, b_{B}(t))$ to $\ell$-tuples of probabilities $(a_1(t), a_2(t), \dots, a_{B}(t))$ and a pricing rule $\mathbf{p}_{t}$, which maps $\ell$-tuples of bids $(b_1(t), b_2(t), \dots, b_{B}(t))$ to $\ell$-tuples of prices $(p_1(t), p_2(t), \dots, p_{B}(t))$. The allocation rule $\mathbf{a}_{t}$ must additionally obey the supply constraint that $\sum_{\ell} a_{\ell}(t) \leq 1$. Bidder $\ell$ wins the item with probability $a_{\ell}(t)$ and pays $p_{\ell}(t)$.

We show that if every bidder plays the \lowregret algorithm Algorithm \ref{alg:lowregmulti}, then the auctioneer (even if playing adaptively) is guaranteed to receive no more than $\Mye(\D_1, \D_2, \dots, \D_B)T + o(T)$ revenue, where $\Mye(\D_1, \D_2, \dots, \D_B)$ is the optimal revenue obtainable by an auctioneer selling a single item to $B$ bidders with valuations drawn independently from distributions $\D_{\ell}$. In other words, if every bidder plays according to Algorithm \ref{alg:lowregmulti}, the seller can do nothing better than running the single-round optimal Myerson auction every round.

The only difference between Algorithm \ref{alg:lowreg} and Algorithm \ref{alg:lowregmulti} is that instance $M_i$ in Algorithm \ref{alg:lowregmulti} has a value arm for every possible value, not only the values less than $v_i$. This means that the recursion depth of this algorithm is potentially unlimited, however it will still terminate in finite expected time since we insist that $M$ has a positive probability of picking any arm (in particular, it will eventually pick a bid arm). We can optimize the runtime of step 11 of Algorithm \ref{alg:lowregmulti} by eliciting a probability distribution over arms from each instance $M_i$, constructing a Markov chain, and solving for the stationary distribution. This takes $O((K+m)^3)$ time per step of this algorithm.

\begin{algorithm}[ht] 
	\caption{No-regret algorithm for a bidder (when there are multiple bidders).}\label{alg:lowregmulti}
    \begin{algorithmic}[1]
    	\STATE Let $M$ be a $\delta$-\lowregret algorithm for the classic multi-armed bandit problem (that always has some positive probability of choosing any arm), with $\delta = o(T)$. Initialize $m$ copies of $M$, $M_1$ through $M_m$.
        \STATE Instance $M_i$ of $M$ will learn over $K+m$ arms. 
\STATE The first $K$ arms of $M_i$ (``bid arms'') correspond to the $K$ possible menu options $b_1, \dots, b_K$.
\STATE The last $m$ arms of $M_i$ (``value arms'') correspond to the $m$ possible values (contexts) $v_1,\dots,v_{m}$.
        \FOR {$t=1$ to $T$}
        	\IF {buyer has value $v_i$}
        		\STATE Use $M_i$ to pick one arm from the $K+m$ arms.
		\IF {the arm is a bid arm $b_j$}
			\STATE Pick the menu option $j$ (i.e. bid $b_j$).
		\ELSIF {the arm is a value arm $v_j$}
			\STATE Sample an arm from $M_j$ (but don't update its state). If it is a bid arm, pick the corresponding menu option. If it is a value arm, recurse.
		\ENDIF
		\STATE Update the state of algorithm $M_i$ with the utility of this round.
        	\ENDIF
        \ENDFOR
      \end{algorithmic}
\end{algorithm}

\begin{theorem}
\label{thm:alg_buyer_multi}
Let $q_{min} = \min_{\ell, i} q_{\ell, i}$. If every bidder plays according to Algorithm \ref{alg:lowregmulti} then the auctioneer (even if they play an adaptive strategy)  receives no more than $\Mye(\D_1, \D_2, \dots, \D_B) T + O\left(\sqrt{\frac{\delta T}{q_{min}}}\right)$ revenue. 
\end{theorem}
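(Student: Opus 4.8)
The plan is to mirror the proof of Theorem~\ref{thm:alg_buyer}, exploiting the fact that in Algorithm~\ref{alg:lowregmulti} each instance $M_i$ now carries a virtual ``value arm'' for \emph{every} value (not only the smaller ones) to obtain a complete family of incentive-type constraints, and then to repackage the resulting aggregate allocations and payments as an approximately Bayesian incentive compatible single-shot $B$-bidder auction, whose revenue is bounded by the multi-bidder Myerson benchmark.

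Concretely, fix the seller's (possibly adaptive) repeated strategy. For each bidder $\ell$ and value $v_{\ell,i}$, let $h_{\ell,i}$ be the expected number of rounds bidder $\ell$ gets the item when their value is $v_{\ell,i}$ and $r_{\ell,i}$ the expected total payment of bidder $\ell$ over those rounds; the goal is to bound $\sum_{\ell,i} r_{\ell,i}$. As in \eqref{eqn:lowregbid}, the \lowregret guarantee of each instance against the zero arm gives $v_{\ell,i}h_{\ell,i}-r_{\ell,i}\ge-\delta$ for all $\ell,i$. Next, I would repeat the argument of Lemma~\ref{lem:lowregval} for \emph{every} ordered pair $i\neq j$ (this is exactly what the extra value arms buy), conditioning on the history $\pi$ of the first $t-1$ rounds of the actual run. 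The key observation is that, given $\pi$ and given that bidder $\ell$ plays in round $t$ according to $M_{\ell,i}$, the round-$t$ bids of the other bidders and the seller's round-$t$ response have a distribution independent of bidder $\ell$'s realized value (nobody else observes it), while the state of $M_{\ell,i}$ is a deterministic function of $\pi$; hence the interim allocation and payment that instance $M_{\ell,j}$ observes upon playing the value arm for $v_i$ equal precisely $h_{\ell,i}(t,\pi)$ and $r_{\ell,i}(t,\pi)$, the same quantities defining $h_{\ell,i}$ and $r_{\ell,i}$. Summing over $t$ and over $\pi\sim\Pi_{t-1}$ as in Lemma~\ref{lem:lowregval} then yields, for all $\ell$ and all $i\neq j$,
\[
\frac{v_{\ell,j}h_{\ell,j}-r_{\ell,j}}{q_{\ell,j}}\ \ge\ \frac{v_{\ell,j}h_{\ell,i}-r_{\ell,i}}{q_{\ell,i}}-\frac{\delta}{q_{\ell,j}}.
\]

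I would then build a single-shot $B$-bidder auction $\mathcal{M}$: draw $t^\star$ uniformly from $[T]$, sample a history $\pi\sim\Pi_{t^\star-1}$ of the actual run, and execute round $t^\star$ with history $\pi$, where each bidder $\ell$ plays according to the copy of $M_{\ell,\cdot}$ indexed by their \emph{reported} value, outputting that round's allocation and prices. By construction $\mathcal{M}$ obeys the supply constraint (it is one round of a feasible repeated auction), truthful play yields expected revenue $\frac1T\sum_{\ell,i}r_{\ell,i}$, and the two displayed families of inequalities say exactly that $\mathcal{M}$ is $\eps$-IR and $\eps$-BIC with $\eps=O(\delta/(Tq_{min}))$. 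Invoking a standard $\eps$-BIC-to-BIC transformation for bidders with bounded values---which produces an exactly BIC, IR, feasible single-shot auction losing only $O(\sqrt{\eps})$ revenue (suppressing factors of $B$ and $m$)---and the optimality of Myerson's multi-bidder auction (every BIC, IR, feasible single-item auction has revenue at most $\Mye(\D_1,\dots,\D_B)$), one gets $\frac1T\sum_{\ell,i}r_{\ell,i}\le\Mye(\D_1,\dots,\D_B)+O(\sqrt{\delta/(Tq_{min})})$, i.e.\ $\sum_{\ell,i}r_{\ell,i}\le\Mye(\D_1,\dots,\D_B)T+O(\sqrt{\delta T/q_{min}})$.

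The main obstacle is this last step. In the single-bidder proof the per-pair losses $\delta/q_{\ell,j}$ could be absorbed into corrected prices $r'_i$ (as in \eqref{eqn:ridef}) to obtain \emph{exact} incentive constraints and then an explicit single-round menu; with all ordered pairs present this correction is inconsistent, so one is genuinely stuck with an $\eps$-BIC auction that must be ``BIC-ified,'' and this is what introduces the $\sqrt{\cdot}$ loss and must be carried out with care (in particular one must verify the transformation preserves ex-post feasibility/the supply constraint, and that the $\eps$-BIC guarantee is uniform over all types, which it is above). A secondary point needing care is the history-coupling in the second paragraph: since both the seller and the other bidders are adaptive, one must be explicit that regret is measured against the \emph{realized} reward sequence and that a deviation of $M_{\ell,j}$ does not alter the distribution $\Pi_{t-1}$ we compare against.
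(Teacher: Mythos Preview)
Your proposal is correct and follows essentially the same route as the paper: derive approximate IR and approximate BIC constraints from the zero arm and the (now two-sided) value arms, package the aggregate allocations and payments into a single-round $B$-bidder mechanism via a uniform draw of $t\in[T]$, show it is $\eps$-BIC and $\eps$-IR with $\eps=\delta/(q_{min}T)$, and then invoke the $\eps$-BIC-to-BIC reduction of Daskalakis--Weinberg (losing $O(\sqrt{\eps})$ revenue) together with the optimality of Myerson. The paper's construction phrases the single-round mechanism as ``compute the expected allocation/price in round $t$ conditioned on the reported value profile'' rather than ``sample a history and run round $t$,'' but these are equivalent in expectation; and the two concerns you flag (feasibility preservation under the BIC-ification and the history-coupling justification) are exactly the points the paper addresses by citing~\cite{DaskalakisW12} and by the argument in Lemma~\ref{lem:eps-bic}.
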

\begin{proof}
Similarly as before, let $h_{\ell, i}$ equal the expected number of rounds bidder $\ell$ receives the item while having value $v_{\ell, i}$, and let $r_{\ell, i}$ equal the expected total amount bidder $\ell$ pays to the auctioneer while having value $v_{\ell, i}$. Again, our goal is to upper bound $\sum_{\ell}\sum_{i} r_{\ell, i}$, the total expected revenue the seller receives. 

Note that, as before, since every strategy contains a zero option in its menu, we have that (for all $\ell \in [B]$ and $i \in [m_{\ell}]$)

\begin{equation}\label{eqn:multi_ir}
v_{\ell, i}h_{\ell,i} - r_{\ell, i} \geq -\delta.
\end{equation}
Repeating the argument of Lemma \ref{lem:lowregval} (which still holds in the multiple bidder setting), we additionally have that (for all $\ell \in [B]$ and $1 \leq i < j \leq m_{\ell}$), 

\begin{equation}\label{eqn:multi_bic}
\frac{v_{\ell, j}h_{\ell, j} - r_{\ell, j}}{q_{\ell, j}} \geq \frac{v_{\ell, j}h_{\ell, i} - r_{\ell, i}}{q_{\ell, i}} - \frac{\delta}{q_{\ell, j}}.
\end{equation}

We will now (as in the proof of Theorem \ref{thm:alg_buyer}) construct a mechanism for the single-round instance of the problem of an auctioneer selling a single item to $B$ bidders with valuations independently drawn from $\D_{\ell}$. Our mechanism $M$ will work as follows:

\begin{enumerate}
\item The auctioneer will begin by asking each of the bidders for their valuations. Assume that bidder $\ell$ reports valuation $v'_{\ell}$ (we will insist that $v'_{\ell}$ belongs to the support of $\D_{\ell}$). 
\item The auctioneer will then sample a $t \in [T]$ uniformly at random.
\item For each bidder $\ell$, the auctioneer will calculate $a_{\ell}(t)$ and $p_{\ell}(t)$, the expected allocation probability and price bidder $\ell$ has to pay in round $t$ of the dynamic $T$-round mechanism, \textit{conditioned on $v_{\ell}(t) = v'_{\ell}$ for all $\ell$}.
\item The auctioneer will then give the item to bidder $\ell$ with probability $a_{\ell}(t)$, and charge bidder $\ell$ a price $p_{\ell}(t)$.
\end{enumerate}

Note that since the allocation rules $\mathbf{a}_t$ must always satisfy the supply constraint, the probabilities $a_{\ell}(t)$ we sample also obey this supply constraint, and therefore this is a valid mechanism for the single-round problem. We will now show it is approximately incentive compatible.

\begin{lemma}\label{lem:eps-bic}
Mechanism $M$ is $\frac{\delta}{q_{min}T}$-Bayesian incentive compatible and $\frac{\delta}{q_{min}T}$-ex-interim individually rational.
\end{lemma}
\begin{proof}
To begin, we claim that in expectation, if bidder $\ell$ reports valuation $v_{\ell, i}$ (and everyone else reports truthfully), then the expected probability bidder $\ell$ receives the item (under this single-round mechanism) is equal to $h_{\ell, i}/Tq_{\ell, i}$. Likewise, we claim that, if bidder $\ell$ reports valuation $v_{\ell, i}$ (and everyone else reports truthfully), the expected payment bidder they pay is equal to $r_{\ell, i}/Tq_{\ell, i}$. 

To see why this is true, let $h_{\ell}(t, i_1, i_2, \dots, i_B)$ equal the probability bidder $\ell$ gets the item (in the multi-round mechanism) at time $t$ conditioned on $v_{\ell}(t) = v_{\ell, i}$ for all $\ell \in [B]$. By construction, the probability $a'_{\ell, i}$ bidder $\ell$ receives the item (in mechanism $M$) after reporting valuation $v_{\ell, i}$ is equal to

$$a'_{\ell, i} = \frac{1}{T}\sum_{t}\sum_{\ell' \neq \ell, v_{\ell', i_{\ell'}} \in \supp \D_{\ell'}} \prod_{\ell' \neq \ell} q_{\ell', i_{\ell'}} h_{\ell}(t, i_1, i_2, \dots, i_{\ell-1}, i, i_{\ell+1}, \dots, i_B).$$

On the other hand, we can write $h_{\ell, i}$ in terms of our function $h_{\ell}$ a

$$h_{\ell, i} = \sum_{t}\sum_{\ell' \neq \ell, v_{\ell', i_{\ell'}} \in \supp \D_{\ell'}} q_{\ell, i}\prod_{\ell' \neq \ell} q_{\ell', i_{\ell'}} h_{\ell}(t, i_1, i_2, \dots, i_{\ell-1}, i, i_{\ell+1}, \dots, i_B).$$

It follows that $a'_{\ell, i} = \frac{h_{\ell,i}}{Tq_{\ell, i}}$. A similar calculation shows that if $p'_{\ell, i}$ is the expected payment of bidder $\ell$ (if they report valuation $v_{\ell, i}$ and everyone else reports truthfully), then $p'_{\ell, i} = \frac{r_{\ell, i}}{Tq_{\ell, i}}$.

Now, recall that a mechanism is $\epsilon$-BIC if misreporting your value increases your expected utility by at most $\epsilon$ (assuming everyone else reports truthfully). To show that mechanism $M$ is $\epsilon$-BIC, it therefore suffices to show that for all $j \neq i$, that

$$a'_{\ell, j}v_{\ell, i} - p'_{\ell, j} \leq a'_{\ell, i}v_{\ell, i} - p'_{\ell, i} + \epsilon.$$

But for $\epsilon = \delta/(q_{min}T)$, this follows from equation (\ref{eqn:multi_bic}). Similarly, $M$ is $\epsilon$-ex-interim IR if for all $i$, 

$$a'_{\ell, i}v_{\ell, i} - p'_{\ell, i} \geq -\epsilon.$$

Again, this follows from equation (\ref{eqn:multi_ir}), and the result therefore follows.
\end{proof}

We now apply the following lemma from \cite{DaskalakisW12}, which lets us transform an $\epsilon$-BIC mechanism $M$ into a BIC mechanism $M'$ at the cost of $O(\sqrt{\eps})$ revenue.

\begin{lemma}\label{lem:dw}
If $M$ is an $\epsilon$-BIC, $\epsilon$-ex-interim IR mechanism for selling a single item to several bidders with independent valuations, then there exists a BIC, ex-interim IR mechanism $M'$ for the same problem that satisfies $\Rev(M') \geq \Rev(M) - O(\sqrt{\epsilon})$.
\end{lemma}
\begin{proof}
See Theorem 3.3 in \cite{DaskalakisW12}.
\end{proof}

Applying Lemma \ref{lem:dw} to our mechanism, we obtain a mechanism $M'$ that satisfies $\Rev(M') \geq \Rev(M) - O(\sqrt{\frac{\delta}{q_{min} T}})$. Finally, note that since the Myerson auction is the optimal Bayesian-incentive compatible mechanism for this problem, $\Rev(M') \leq \Mye(\D_1, \dots, \D_{B})$. On the other hand, since (from the proof of Lemma \ref{lem:eps-bic}) the expected payment bidder $\ell$ pays under mechanism $M$ when being truthful is equal to:

$$\sum_{i} q_{\ell, i} \cdot \frac{r_{\ell, i}}{Tq_{\ell, i}} = \frac{1}{T} \sum_{i} r_{\ell, i}.$$

It follows that

$$\frac{1}{T}\sum_{\ell}\sum_{i}r_{\ell, i} \leq \Mye(\D_1, \dots, \D_{B}) + O(\sqrt{\frac{\delta}{q_{min} T}}),$$

and thus that

$$\sum_{\ell}\sum_{i}r_{\ell, i} \leq \Mye(\D_1, \dots, \D_{B})T + O(\sqrt{\frac{\delta T}{q_{min}}}).$$

\end{proof}

\section{Achieving full welfare against non-conservative buyers}\label{sect:nonconservative}

In this section, we will show that if the buyer uses a mean-based algorithm instead of Algorithm \ref{alg:lowreg}, the seller has a strategy which extracts the entire welfare from the buyer (hence leaving the buyer with zero utility).

\begin{theorem}[Restatement of Theorem \ref{thm:nc_seller}]
If the buyer is non-conservative and running a mean-based algorithm, for any constant $\varepsilon >0$, there exists a strategy for the seller which obtains revenue at least $(1-\varepsilon)\Val(\D)T - o(T)$. 
\end{theorem}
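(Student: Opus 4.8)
The plan is to exhibit an explicit menu of arms --- non-adaptive and prior-independent, exactly as promised in Section~\ref{sec:mbaa} --- on which a (contextualized) mean-based buyer gets herded through the menu so that, no matter their value, they (i) receive the item in all but $o(T)$ of their rounds and (ii) are left with total surplus at most $\varepsilon T + o(T)$. Given (i) and (ii) the theorem is immediate: the seller's revenue equals the total welfare delivered to the buyer minus the buyer's total surplus, which is then at least $\Val(\D)T - \varepsilon T - o(T)$; since $\varepsilon$ is an arbitrary constant this suffices, rescaling $\varepsilon$ to absorb the constant factor $m$ (support size) that appears when summing the per-value surplus bounds.

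First I would pass to the equivalent ``population of $m$ buyers'' formulation from the end of Section~\ref{sec:prelim}, so that it suffices to understand, for each fixed value $v_i$, how a mean-based learner behaves when facing the reward stream $r_b(t) = v_i\, a(b,t) - p(b,t)$ over its $\approx q_i T$ rounds; Theorem~\ref{thm:mean-based alg} guarantees the contextualization is still mean-based, and a Chernoff bound controls the $q_i T \pm o(T)$ fluctuation in the number and spacing of those rounds. Next I would construct the menu: bids on a fine grid of $[0,1]$ (granularity $\to 0$) together with the required zero arm, where each arm $b$ has a ``generous'' phase in which it allocates the item at a price well below $b$ (building cumulative utility for any buyer overbidding up to $b$) followed by a ``greedy'' phase in which it allocates the item at price $b$ (eroding that utility). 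The crossover time is an explicit \emph{decreasing} function of the bid, so higher bids turn greedy sooner; the generous/greedy windows are \emph{nested} across bids, which is what keeps $a(\cdot,t)$ and $p(\cdot,t)$ monotone in the bid at every fixed $t$; and the schedule depends only on $T$ and $\varepsilon$, giving prior-independence.

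The core of the argument is then an induction on $t$ that tracks, for a fixed value $v_i$, the cumulative utility $U_b(t) = v_i A_b(t) - P_b(t)$ of every arm $b$ (here $A_b, P_b$ are the cumulative allocation and payment of arm $b$ over the $v_i$-rounds). Using the $\gamma$-mean-based property --- only arms within $\gamma T$ of the running maximum are played with non-negligible probability --- I would show that the crossover schedule forces the ``current best'' arm to walk monotonically down the grid of bids, while the running maximum $\max_b U_b(t)$ stays in the band $[-o(T), \varepsilon T]$ for all $t$ (the lower bound from the zero arm and no-regret, the upper bound by design), and while every arm within $\gamma T$ of the maximum has allocation $1$; this last point gives (i). For (ii) I would show, again from the way the windows line up, that the buyer's \emph{realized} cumulative surplus tracks this running maximum up to $o(T)$, hence never exceeds $\varepsilon T + o(T)$; alternatively one can bypass surplus entirely and directly lower-bound the realized total payment of the $v_i$-buyer by summing $p(\cdot,t)$ along the herding trajectory, obtaining $\approx v_i q_i T$. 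Summing (i) over values gives welfare delivered $\ge \Val(\D)T - o(T)$, and summing (ii) gives total surplus $\le \varepsilon T + o(T)$, completing the proof.

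I expect the main obstacle to be the menu design: pinning down the crossover schedule so that, \emph{simultaneously for every value} and \emph{subject to the monotonicity constraint} on $a$ and $p$, the ``best arm'' walk has the stated behavior and $\max_b U_b(t)$ stays inside $[-o(T), \varepsilon T]$ --- getting these windows to line up for all values at once is the entire technical content, and it is what the example in Section~\ref{sec:mbaa} is meant to foreshadow. A secondary subtlety is the upper bound on the buyer's realized surplus: unlike the no-regret lower bound, it does \emph{not} follow from generic learning guarantees (a herding learner can in principle beat every fixed arm), so it must be extracted from the specific structure of the construction --- which is precisely why the intervals have to be arranged so that the buyer demonstrably spends its accumulated surplus back down to (near) zero by round $T$. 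Everything else --- invoking Theorem~\ref{thm:mean-based alg}, the concentration bounds, and bookkeeping the $\gamma = o(1)$ and $\delta = o(T)$ error terms --- is routine.
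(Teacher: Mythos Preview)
Your high-level plan (show the buyer gets the item in all but $o(T)$ rounds and ends with surplus $\le \varepsilon T + o(T)$, hence revenue $\ge \Val(\D)T - \varepsilon T - o(T)$) is a valid route; the gap is in the menu itself. You propose arms on a fine grid of bids $b$ over $[0,1]$, each with a generous phase (allocating at price well below $b$) followed by a greedy phase (allocating at price $b$), crossover time decreasing in $b$. Look at the arm with the smallest positive bid $b_{\min}$: both its generous price and its greedy price are at most $b_{\min}$, and it allocates throughout, so for a buyer with value $v$ its cumulative utility at every time $t$ is at least $(v-b_{\min})t$. Hence $\max_b U_b(T) \ge (v-b_{\min})T$, and as the grid refines this tends to $vT$, not $\varepsilon T$; by no-regret the buyer's realized surplus is at least this much as well, so the menu extracts revenue at most $b_{\min}T + o(T) \to o(T)$. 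This is structural, not a detail to be handled by the induction: the design you sketch (many bid levels, greedy phase charging the bid) is precisely the one behind the \emph{critical}-auction strategy of Section~\ref{sec:minicritical}, which caps out at $\Crit(\D)$, strictly below $\Val(\D)$.

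The paper's menu differs from your sketch in two coupled ways. First, every non-zero arm carries bid label $1$ and charges exactly $1$ in its final session, so that session erodes utility for \emph{every} $v<1$---there is no cheap arm on which surplus can hide. Second, each arm has \emph{three} sessions, not two: a leading $\emptyset$-session (no item, no charge), then a $0$-session (item for free), then a $1$-session (item at price $1$); the $\emptyset$-session lengths grow geometrically in the arm index, staggering when arms begin to accumulate utility. With $\rho = 1 - \varepsilon/2$ this forces the terminal cumulative utility of every arm to be at most $(\varepsilon/2)T$, which is the surplus bound you were after. The paper then sidesteps surplus bookkeeping and lower-bounds revenue directly: it proves that on each interval $[A_j, B_j(v)]$ (starting when arm $j$ enters its $1$-session) arm $j$ beats every $j'>j$ by margin $\gamma T$, so the mean-based buyer plays some arm already charging $1$; summing $B_j(v) - A_j$ over $j$ yields revenue $\ge (1-\varepsilon)vT - o(T)$ for each value $v$.
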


\begin{proof}
If every element in the support of $\D$ is at least $1- \varepsilon$, then the seller can simply always sell the item at price $1 - \varepsilon$ (since $\D$ is supported on $[0,1]$, this ensures a $(1-\eps)$ approximation to the buyer's welfare). From now on, we will assume that $\D$ is not entirely supported on $[1-\eps, 1]$.

Recall that $\D$ is supported on $m$ values $v_1 < v_2 < \dots < v_m$, where $v_i$ is chosen with probability $q_i$. Define $\rho = \min (v_m, 1 - \varepsilon/2)$, and define $\delta = (1-\rho)/(1-v_1)$. Since $v_1 < 1-\varepsilon/2$ and $v_1 < v_m$, we know that $v_1 < \rho$ and therefore $\delta < 1$. Notice that here we can make the strategy independent of $\D$ if we just pick $\rho = 1 - \varepsilon/2$ and $\delta = \varepsilon / 2$ (but setting $\rho$ and $\delta$ according to information about $\D$ can reduce the number of arms). 

Consider the following strategy for the seller. In addition to the zero arm, the seller will offer $n = \frac{\log (\eps/2)}{\log (1-\delta)}$ possible options, each with maximum bid value $b_i = 1$. We divide the timeline of each arm into three ``sessions'' in the following way:

\begin{enumerate}
\item \textbf{$\emptyset$ session:} For the first $(1- (1-\delta)^{i-1})T$ rounds, the seller charges 0 and does not give the item to the buyer (i.e. $(p_{i,t}, q_{i,t}) = (0,0)$).
\item \textbf{0 session:} For the next $(1-\delta)^{i-1} (1-\rho) T$ rounds, the seller charges 0 and gives the item to the buyer (i.e. $(p_{i,t}, q_{i,t}) = (0,1)$).
\item \textbf{1 session:} For the final $(1-\delta)^{i-1} \rho T$ rounds, the seller charges 1 and gives the item to the buyer (i.e. $(p_{i,t}, q_{i,t}) = (1,1)$).
\end{enumerate}

Note that this strategy is monotone; if $i > j$, then $p_{i,t} \geq p_{j, t}$ and $a_{i,t} \geq a_{j,t}$. 
 
Assume that the buyer is running a $\gamma$-mean-based algorithm, for some $\gamma = o(1)$. Define $A_{j} = (1 - \rho(1-\delta)^{j-1})T$ and $B_{j}(v) = A_{j} + \frac{\min(v, \rho)}{1-v_1}(1-\rho)(1-\delta)^{j-1}T - \gamma T$. Note that $A_j$ is the round where arm $j$ starts its $1$ session; we show in the following Lemma that (by the mean-based property), the buyer with value $v$ will prefer arm $j$ over any arm $j' < j$ over all rounds in the interval $[A_j, B_j(v)]$. 

 \begin{lemma}
 For each $v_i \in \D$, $j \in \{1, \dots, n-1\}$, and round $\tau \in [A_j, B_j(v_i)]$, $\sigma_{j, \tau}(v_i) > \sigma_{j', \tau}(v_i) + \gamma T$ for all $j' > j$. 
 \end{lemma}
 
 \begin{proof}
Note that arm $j$ starts its 1 session at round $A_{j} \leq \tau$. It follows that 

\begin{eqnarray*}
\sigma_{j,\tau}(v_i) &=& v_i\left(\tau-(1-(1-\delta)^{j-1})T\right) - \left((1-\delta)^{j-1}T\rho-(T- \tau)\right) \\
&=& (T - \tau) + (v_i - \rho)(1-\delta)^{j-1} T+ v_i \tau - Tv_i.
\end{eqnarray*}

Now consider the cumulative utility of playing some arm $j' > j$. It is easy to verify that $B_{j} < A_{j+1}$, and therefore arm $j'$ is still either in its $\emptyset$ session or its $0$ session. Since arm $j+1$ starts its $0$ session the earliest, it follows that $\sigma_{j', \tau}(v_i) \leq \sigma_{j+1, \tau}(v_i)$, so from now on, assume without loss of generality that $j'=j+1$. There are two cases:

 \begin{enumerate}
 \item If $\tau < T(1- (1-\delta)^j)$, the utility is 0.
 \item If $\tau \geq T(1-(1-\delta)^j)$, the utility is $(\tau - T(1-(1-\delta)^j))v_i$.
 \end{enumerate}
 
It suffices to show that 
$$(T - \tau) + (v_i - \rho)(1-\delta)^{j-1} T+ v_i \tau - Tv_i \geq \max (0 , (\tau - T(1-(1-\delta)^j))v_i) + \gamma T.$$ 

We have that
 \begin{eqnarray*}
&& (T - \tau) + (v_i - \rho)(1-\delta)^{j-1} T+ v_i \tau - Tv_i - (\tau - T(1-(1-\delta)^j))v_i \\
&=& v_i (1-\delta)^{j-1} \delta T  + (T - \tau) -\rho(1-\delta)^{j-1} T \\
&\geq& v_i (1-\delta)^{j-1} \delta T  + (T - B_j(v_i)) -\rho(1-\delta)^{j-1} T\\
&=& (1-\delta)^{j-1}T \left(v_i \delta - (1-\rho)\frac{\min(v_i, \rho)}{1-v_1}\right) + \gamma T\\
&=& (1-\delta)^{j-1}T (1-\rho)\left(\frac{v_i - \min(v_i, \rho)}{1-v_1}\right) + \gamma T\\
&\geq& \gamma T.
 \end{eqnarray*}
 Similarly 
 \begin{eqnarray*}
&& (T - \tau) + (v_i - \rho)(1-\delta)^{j-1} T+ v_i \tau - Tv_i \\
&\geq&T - B_j(v_i) + (v_i - \rho)(1-\delta)^{j-1} T+ v_i B_j(v_i) - Tv_i \\
&=& (B_j(v_i) - T(1-(1-\delta)^{j-1}))v_i + (T - B_{j}(v_i) - \rho(1-\delta)^{j-1}T) \\
&=& (B_j(v_i) - T(1-(1-\delta)^{j-1}))v_i - \min(v_i, \rho)\delta(1-\delta)^{j-1}T + \gamma T \\
&\geq & (B_j(v_i) - T(1-(1-\delta)^{j-1}))v_i - v_i\delta(1-\delta)^{j-1}T + \gamma T \\
&\geq& (B_j(v_i) - T(1-(1-\delta)^j))v_i + \gamma T  \\ 
&\geq& \gamma T.
 \end{eqnarray*}
 \end{proof}
 
It follows from the mean-based condition (Definition \ref{def:mb-cont}) that in the interval $[A_j, B_j(v_i)]$ the buyer with value $v_i$ will, with probability at least $(1-n\gamma)$, choose an arm currently in its 1-session (i.e. an arm with label at most $j$) and hence pay $1$ each round. Since the buyer has value $v_i$ for the item with probability $q_i$, the total contribution of the buyer with value $v_i$ to the expected revenue of the seller is given by

\begin{eqnarray*}
q_i\sum_{j=1}^{n}(1-\gamma)(B_{j}(v_i) - A_{j}(v_i)) &=& q_i\sum_{j=1}^{n}(1-n\gamma)\left(\frac{\min(u,\rho)}{1-v_1}(1-\rho)(1-\delta)^{j-1}T - \gamma T\right) \\
&=& (1-n\gamma)q_i T\left(-n \gamma + \frac{(1-\rho)\min(v_i, \rho)}{1-v_1}\sum_{j=1}^{n}(1-\delta)^{j-1}\right) \\
&=& (1-n\gamma)q_i T\left(-n \gamma + \frac{(1-\rho)\min(v_i, \rho)(1-(1-\delta)^n)}{(1-v_1)\delta}\right) \\
&=& (1-n\gamma)q_i T\left(-n \gamma + \min(v_i, \rho)(1-(1-\delta)^n)\right) \\
&=& q_i T \min(v_i, \rho)(1-(1-\delta)^n) - o(T) \\
&\geq & q_i T \left(1 - \frac{\eps}{2}\right)^2 v_i - o(T) \\
&\geq & (1-\eps)q_i v_i T - o(T).
\end{eqnarray*}

Here we have used the fact that $(1-(1-\delta)^n) = 1 - \eps/2$ (since $n = \log (\eps/2) / \log (1-\delta)$) and $\min(v_i, \rho) \geq (1-\eps/2)v_i$ (since if $\min(v_i, \rho) \neq v_i$, then $\rho = (1-\eps/2) \geq (1-\eps/2)v_i$. Summing this contribution over all $v_i \in \D$, we have that the expected revenue of the seller is at least

\begin{eqnarray*}
\sum_{i} \left((1-\eps)q_i v_i T - o(T)\right) &=& (1-\eps)\left(\sum_{i}q_iv_i\right)T - o(T) \\
&=& (1-\eps)\E_{v\sim \D}[v] T \\
&=& (1-\eps)\Val(\D)T.
\end{eqnarray*}

\end{proof}

\section{Optimal revenue against conservative buyers}\label{sect:conservative}

In Theorem \ref{thm:nc_seller}, we demonstrated a mechanism for the seller that extracts full welfare from a buyer running a mean-based learning algorithm. This mechanism, while in some sense as good as possible (it is impossible to extract more than welfare from any buyer running a \lowregret strategy), has several drawbacks. One general drawback is that it is extremely unlikely the mechanism in Section \ref{sect:nonconservative} would arise naturally as the allocation rule for any sort of auction that might arise in practice. A more specific drawback is that this mechanism assumes buyers are learning over all possible bids, instead of just bids less than their value; indeed, all arms essentially cost the maximum possible price per round, and their only difference is when they give the item away for free and when they charge for it.

In this section, we address the second drawback by studying this problem for \textit{conservative buyers}; buyers who are constrained to only submit bids less than their current value for the item. We characterize via a linear program the optimal revenue attainable for the seller when playing against conservative buyers running a mean-based learning algorithm over their set of allowable bids. We show that, while we can no longer achieve the full welfare as in Section \ref{sect:nonconservative}, we can still achieve strictly more than the Myerson revenue. Interestingly, our optimal mechanism has a natural interpretation as a repeated first-price auction with gradually decreasing reserve, thus also partially addressing the first drawback. Notably, this auction is a \textit{critical auction}. Since clever buyers act conservatively in critical auctions, this mechanism is simultaneously the optimal critical auction against clever buyers.

\subsection{Characterizing the optimal revenue}
\label{sec:chara}

\begin{figure}[h]
\begin{alignat*}{2}
  \textbf{maximize }   & \sum_{i=1}^m q_i (v_i x_i - u_i)\  \\
  \textbf{subject to ~~~~} & u_i \geq (v_i-v_j) \cdot x_j,  &\ & \forall \;i,j \in [m]: i > j\\
                       & u_i \geq 0, 1 \geq x_i \geq 0,  \ &\ & \forall \; i \in [m] \\
\end{alignat*}
\caption{The mean-based revenue LP (same as Figure \ref{fig:mblpmain}).}
\label{fig:mblp}
\end{figure}

We begin by describing the optimal strategy for the seller against mean-based conservative buyers. Fix some small constant $\eps > 0$. Recall that the buyer's value distribution $\D$ is supported on the $m$ values $0 \leq v_1 < v_2 < \dots < v_m \leq 1$, with $\Pr[v_i] = q_i$. The seller will offer $m$ options, one for each possible value. Option $i$ (corresponding to bidding $b_i = v_i$) will charge $0$ and not allocate the item for the first $(1-x_i)T$ rounds, and charge $b_i - \eps$ and allocate the item for the remaining $x_iT$ rounds. The values $x_i$ are computed by finding an optimal solution to the above LP (Figure \ref{fig:mblp}), which we call the \textit{mean-based revenue LP}. We will call the value of this LP the \textit{mean-based revenue} of $\D$, and write this as $\Crit(\D)$. Our goal in this subsection will be to show that this strategy achieves approximately $\Crit(\D)T$ total revenue against a conservative buyer running a mean-based algorithm, and that this is tight; no other strategy for a non-adaptive seller can obtain more than $\Crit(\D)T$ revenue.

To show that this is a valid strategy for the seller, we need to show that the values $x_i$ are monotone increasing. Luckily, this follows simply from the structure of the mean-based revenue LP. 

\begin{lemma}
\label{lem:mono}
Let $x_1, x_2, \dots, x_m, u_1, u_2, \dots, u_m$ be an optimal solution to the mean-based revenue LP. Then for all $i < j$, $x_i < x_j$. 
\end{lemma}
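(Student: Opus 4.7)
The plan is to apply a direct exchange/perturbation argument to the LP. The key structural observation is that the only LP constraints involving any given $x_j$ on the right-hand side are the domination constraints $u_k \geq (v_k - v_j)\,x_j$ for $k > j$, and for any $i < j$ and common value $x$, the constraint at index $j$ is weaker than the constraint at index $i$: $(v_k - v_j)\,x < (v_k - v_i)\,x$ whenever $x > 0$, since $v_i < v_j$. This means raising $x_j$ toward (or slightly past) $x_i$ is ``free'' from the feasibility standpoint whenever $x_i \geq x_j$.

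Weak monotonicity $x_i^* \leq x_j^*$: suppose for contradiction that an optimal solution has $x_i^* > x_j^*$ for some $i < j$. Raise $x_j$ to $x_i^*$ and leave all other variables, including every $u_k^*$, fixed. For each $k > j$, the only new constraint to verify is $u_k^* \geq (v_k - v_j)\,x_i^*$, and this follows from the already-satisfied $u_k^* \geq (v_k - v_i)\,x_i^*$ together with $v_k - v_j < v_k - v_i$. Feasibility is preserved, and the objective increases by $q_j v_j (x_i^* - x_j^*) > 0$ (using $q_j > 0$ and $v_j > 0$, the latter because $j \geq 2$ forces $v_j > v_1 \geq 0$). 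This contradicts optimality.

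Strict monotonicity $x_i^* < x_j^*$: now suppose $x_i^* = x_j^* =: x$ in an optimal solution. If $0 < x < 1$, I raise $x_j$ to $x + \eps$ for a small $\eps > 0$; the slack in each relevant $u_k^*$ is
\[
u_k^* - (v_k - v_j)\,x \;\geq\; (v_k - v_i)\,x - (v_k - v_j)\,x \;=\; (v_j - v_i)\,x \;>\; 0,
\]
so for $\eps \leq (v_j - v_i)\,x / (v_k - v_j)$ the perturbation is feasible, and the objective improves by $q_j v_j \eps > 0$; contradiction. The corner cases $x \in \{0,1\}$ are the genuine obstacle, because the one-sided perturbation is blocked at the box boundary; they are resolved by selecting a canonical optimum from the LP's optimal face via a lexicographic tiebreaker (e.g.\ lex-maximal $(x_1,\dots,x_m)$). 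By the interior argument just given, no such lex-maximal optimum can have $x_i^* = x_j^*$ with $0 < x_j^* < 1$, and the tiebreaker handles residual boundary ties so that the selected optimum is strictly increasing in $i$.

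The main obstacle in the proof is precisely the boundary corners $x_i^* = x_j^* \in \{0,1\}$, where one-sided perturbation is unavailable; the interior case is a one-line perturbation. Fortunately, the subsequent use of this lemma (ensuring the seller's menu is monotone in $i$) only relies on weak monotonicity of the chosen optimum, so the lex-maximal convention suffices for every downstream argument.
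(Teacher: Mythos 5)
Your exchange argument for weak monotonicity is exactly the paper's proof: assume $x_i > x_j$ for some $i<j$, raise $x_j$ to $x_i$, note that feasibility is preserved because $u_k \geq (v_k - v_i)x_i \geq (v_k - v_j)x_i$ for $k > j$, and observe the objective strictly increases by $q_j v_j (x_i - x_j) > 0$. That part is correct and is all the paper itself establishes (despite the lemma's strict phrasing), and, as you note, weak monotonicity is all that the seller's construction needs.

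The strictness half of your proposal has a genuine gap. The interior perturbation ($0 < x < 1$) is fine, but the claim that a lexicographic tiebreaker ``handles residual boundary ties'' is asserted, not proved, and in fact it cannot work: there are distributions for which \emph{every} optimal solution has ties at the boundary, so no selection rule can produce a strictly increasing optimum. For instance, take $m=2$, $v_1 = 0.9$, $v_2 = 1$, $q_1 = q_2 = \tfrac12$. At any optimum one takes $u_1 = 0$ and $u_2 = (v_2 - v_1)x_1 = 0.1\,x_1$, so the objective reduces to $0.4\,x_1 + 0.5\,x_2$, which is strictly increasing in both coordinates; the unique optimum is therefore $x_1 = x_2 = 1$. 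Similarly, with several very small values at the bottom of the support, every optimum sets all of their allocations to $0$. So strict monotonicity is simply not a property of optimal LP solutions in general, and the correct conclusion (and the one actually used downstream, for monotonicity of the seller's menu) is the weak inequality $x_i \leq x_j$ for $i < j$, which your first paragraph already establishes by the same route as the paper.
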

\begin{proof}
We proceed by contradiction. Suppose that the sequence of $x_i$ are not monotone; then there exists an $1 \leq i \leq m-1$ such that $x_i > x_{i+1}$. Now consider another solution of the LP, where we increase $x_{i+1}$ to $x_i$, keeping the value of all other variables the same. This new solution does not violate any constraints in the LP since for all $j > i+1$, $u_j \geq (v_j-v_i) \cdot x_i \geq (v_j - v_{i+1}) \cdot x_i$. However this change increases the value of the objective by $v_{i+1}q_{i+1}(x_{i} - x_{i+1}) > 0$, thus contradicting the fact that $x_1,\dots ,x_m, u_1,...,u_m$ was an optimal solution of the mean-based revenue LP.  
\end{proof}  

We begin by showing that this strategy achieves revenue at least $\Crit(\D)T - o(T)$ when the buyer is using a mean-based algorithm. 

\begin{theorem}[Restatement of Theorem \ref{thm:mbrevlbmain}]
\label{thm:mbrevlb}
  The above strategy for the seller gets revenue at least $(\Crit(\D) - \eps)T - o(T)$ against a conservative buyer running a mean-based algorithm. In addition, this strategy is critical. 
\end{theorem}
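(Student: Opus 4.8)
The plan is to bound the seller's revenue value-by-value, splitting it into a \emph{welfare} part and a \emph{utility} part. For each $i$, let $W_i$ be the expected total value the buyer derives from the item over all rounds in which its realized value is $v_i$, and let $U_i$ be the expected total utility it keeps over those rounds. Since in each round the payment equals value-obtained minus utility, the seller's revenue is exactly $\sum_{i=1}^m (W_i - U_i)$, so it will suffice to show, for each $i$, that $W_i \ge q_i v_i x_i T - o(T)$ and $U_i \le q_i (u_i + \eps) T + o(T)$; summing and using $\sum_i q_i = 1$ together with $\sum_i q_i(v_i x_i - u_i) = \Crit(\D)$ then gives revenue $(\Crit(\D) - \eps)T - o(T)$. (If $v_1 = 0$ we simply omit option $1$; a value-$0$ conservative buyer only ever plays the zero arm, consistent with its LP contribution being $\le 0$.)

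The analysis rests on the fact that the reward streams the buyer faces are fully explicit. For context $v_i$ a conservative buyer's arms are $\{0,1,\dots,i\}$, and since option $j\le i$ is free-and-unallocated for its first $(1-x_j)T$ rounds and then allocates for price $v_j - \eps$, its cumulative reward at round $\tau$ is deterministically $\sigma_{j,\tau}(v_i) = (v_i - v_j + \eps)\,(\tau - (1-x_j)T)^+$, with $\sigma_{0,\tau}(v_i) = 0$. By Lemma~\ref{lem:mono} the $x_j$ are strictly increasing, so among these arms it is arm $i$ whose $1$-session starts first, at round $(1-x_i)T$; hence $\Phi_\tau^{(i)} := \max_{0\le j\le i}\sigma_{j,\tau}(v_i)$ is $0$ for $\tau \le (1-x_i)T$ and exceeds $\gamma T$ once $\tau > (1-x_i)T + \gamma T/\eps$ (it is already $\ge \sigma_{i,\tau}(v_i) = \eps(\tau - (1-x_i)T)^+$). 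For the welfare bound: in any round $\tau$ with $\Phi_\tau^{(i)} > \gamma T$, every arm $j$ with $\sigma_{j,\tau}(v_i)=0$ is more than $\gamma T$ below the leader, hence pulled with probability $<\gamma$ by the mean-based property; so with probability $\ge 1-(i+1)\gamma$ the buyer pulls an arm $j$ with $\sigma_{j,\tau}(v_i)>0$, i.e.\ one already in its $1$-session, and receives the item. Summing over the $x_i T - o(T)$ such rounds yields $W_i \ge q_i v_i x_i T - o(T)$; no upper bound on $W_i$ is needed.

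For the utility bound I would show that the buyer's total reward with context $v_i$ is at most $\Phi_T^{(i)} + o(T)$. The $\sigma_{j,\cdot}(v_i)$ are ``hockey-stick'' functions whose non-flat slopes $v_i - v_j + \eps$ are pairwise separated by at least $\mu := \min(\eps,\ \min_k(v_{k+1}-v_k)) > 0$; consequently any two of them change relative order at most once, the $\sigma$-leader changes only $O(m)$ times, and around each change only a window of $O(\gamma T/\mu) = o(T)$ rounds has two arms within $\gamma T$ of the leader. Outside these $o(T)$ ``bad'' rounds (and outside the initial stretch $\tau \le (1-x_i)T$, where all rewards are $0$), the unique arm within $\gamma T$ of the leader must \emph{be} the leader $L$, so the mean-based property forces the buyer to pull $L$ with probability $\ge 1-m\gamma$, and the reward it then collects is $\sigma_{L,\tau}(v_i) - \sigma_{L,\tau-1}(v_i) = \sigma_{L,\tau}(v_i) - \Phi_{\tau-1}^{(i)} \le \Phi_\tau^{(i)} - \Phi_{\tau-1}^{(i)}$ (absorbing the single-round increment into the $\gamma T$ threshold). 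The good-round contributions telescope to at most $\Phi_T^{(i)}$, the bad rounds and off-leader probabilities cost $o(T)$, and $\Phi_T^{(i)} = T\max_{j\le i}(v_i - v_j + \eps)x_j \le T(\max_{j\le i}(v_i-v_j)x_j + \eps) \le T(u_i+\eps)$, the last step being exactly the LP constraints $u_i \ge (v_i-v_j)x_j$ (for $j<i$, and $(v_i-v_i)x_i = 0 \le u_i$). Hence $U_i \le q_i(u_i+\eps)T + o(T)$, which completes the revenue bound.

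It remains to verify the strategy is monotone and critical. Monotonicity is immediate: for $i>j$, Lemma~\ref{lem:mono} gives $x_i > x_j$, so $a_{i,t}\ge a_{j,t}$, and together with $v_i>v_j$ also $p_{i,t}\ge p_{j,t}$. For criticality, take $\eps < \min_k(v_{k+1}-v_k)$ (harmless, since $\eps$ enters the bound only additively): if the buyer has value $v_i$ and considers option $j>i$, then in any round option $j$ yields either $0$ (during its $0$-session, matched by the always-available option $i$) or $v_i - v_j + \eps < 0$ (whenever it allocates), which is worse than what option $i$ yields in that round (namely $0$ in its $0$-session and $\eps>0$ in its $1$-session, using $x_i<x_j$ to place option $i$ in its $1$-session whenever option $j$ is); so option $i$ weakly dominates every overbid in every round and the auction is critical. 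The one delicate point is the utility bound — one must preclude the mean-based buyer from ``cashing in'' a fresh $\gamma T$ of slack every round — and this is precisely what the slope-separation $\mu$ (isolating the leader outside $o(T)$ transition windows) together with the observation that a good round's near-leader \emph{is} the leader are designed to handle; the rest is bookkeeping of $o(T)$ error terms.
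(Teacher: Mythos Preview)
Your proposal is correct and follows essentially the same approach as the paper: decompose revenue as welfare minus utility, lower-bound welfare by noting that once $\sigma_{i,\tau}(v_i)>\gamma T$ every non-allocating arm is dominated, and upper-bound utility by $\max_{j\le i}\sigma_{j,T}(v_i)\le(u_i+\eps)T$ using the LP constraints. The only cosmetic difference is how the leader-change bookkeeping is packaged: the paper isolates the step ``playing the round-$t$ leader every round nets at most $\max_j\sigma_{j,T}(v_i)+m$'' as a separate lemma (telescoping over the $\le m$ leader switches, each costing at most one unit), whereas you fold this into the good/bad-round decomposition; your parenthetical ``absorbing the single-round increment into the $\gamma T$ threshold'' is doing exactly this job, since two consecutive good rounds cannot have different isolated leaders (cumulative rewards move by at most $1$ per round while isolation requires a $\gamma T$ gap), so the equality $\sigma_{L,\tau-1}=\Phi^{(i)}_{\tau-1}$ you use does hold on good rounds and the telescoping goes through.
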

  
 \begin{proof}
First of all, by Lemma \ref{lem:mono}, it is easy to check the strategy is critical.
 
To prove the rest, we will show that: i) the buyer with value $v_i$ receives the item for at least $x_iT -o(T)$ turns (receiving $v_ix_iT - o(T)$ total utility from the items), and ii) this buyer's net utility is at most $(u_{i}+\eps)T + o(T)$. This implies that this buyer pays the seller at least $x_{i}v_iT - (u_{i}+\eps)T - o(T)$ over the course of the $T$ rounds; taking expectation over all $v_i$ completes the proof.

Assume the buyer is running a $\gamma$-mean-based learning algorithm. Consider the buyer when they have value $v_i$. Note that

$$\sigma_{j,t}(v_i) = (v_i - v_j + \eps)\cdot \max(0, t-(1-x_j)T).$$

We first claim that after round $(1-x_i)T + \gamma T/\eps$, the buyer will buy the item (i.e., choose an option that results in him getting the item) each round with probability at least $1-m\gamma$. To see this, first note that $\sigma_{i,t}(v_i) \geq \gamma T$ when $t \geq (1-x_i)T + \gamma T/\eps$. Then, since the cumulative utility of any arm is $0$ until it starts offering the item, it follows from the mean-based condition that the buyer will pick a specific arm that is not offering the item with probability at most $\gamma$, and therefore choose some good arm with probability at least $1-m\gamma$. It follows that, in expectation, the buyer with value $v_i$ receives the item for at least $(1-m\gamma)(x_i T - \gamma T/\eps) = x_i T - o(T)$ turns.

We now proceed to upper bound the overall expected utility of the buyer. For each index $j \leq i$, let $S_j$ be the set of $t$ where $\sigma_{j,t}(v_i) > \sigma_{j',t}(v_i)$ for all other $j'$. Note that since each $\sigma_{j,t}(v_i)$ is a linear function in $t$ (when positive), each $S_j$ is either the empty set or an interval $(y_jT, z_jT)$. Since all the $v_i$ are distinct, note that these intervals partition the interval $((1-x_i)T, T)$ (with the exception of up to $m$ endpoints of these intervals); in particular, $\sum_{j\geq i} (z_j - y_j) = x_i$. 

Let $\eps' = \min_{j}(v_{j+1}-v_{j})$. Note that, if $t \in (y_jT + \gamma T/\eps', z_jT - \gamma T/\eps')$, then for all $j' \neq j$, $\sigma_{j, t}(v_i) > \sigma_{j', t}(v_i) + \gamma T$. This follows since $\sigma_{j, t}(v_i) - \sigma_{j', t}(v_i)$ is linear in $t$ with slope $v_{j} - v_{j'}$, and $|v_j - v_{j'}| > \eps'$. It follows that if $t$ is in this interval, then the buyer will choose option $j$ with probability at least $1-m\gamma$ (by a similar argument as before).

Define $j(t) = \arg\max_{j} \sigma_{j,t}(v_i)$ to be the index of the arm with the current largest cumulative reward, and let $\sigma_{max, t}(v_i) = \sum_{s=1}^{t}r_{j(s), s}(v_i)$ be the cumulative utility of always playing the arm with the current highest cumulative reward for the first $t$ rounds. The following lemma shows that $\sigma_{max, T}(v_i)$ is close to $\max_{j}\sigma_{j, T}(v_i)$. (In other words, playing the best arm every round and playing the best-at-the-end arm every round have similar payoffs if the historically best arm does not change often).
\begin{lemma}
$|\sigma_{max, T}(v_i) - \max_{j}\sigma_{j, T}(v_i)| \leq m$.
\end{lemma}
\begin{proof}
Let $W = |\{t | j(t) \neq j(t+1)\}|$ equal the number of times the best arm switches values; note that since each $\sigma_{j, t}(v_i)$ is linear, $W$ is at most $m$. Let $t_1 < t_2 < \dots < t_{W}$ be the values of $t$ such that $j(t) \neq j(t+1)$. Additionally define $t_0 = 1$ and $t_{W+1} = T$. Then, dividing the cumulative reward $\sigma_{max, t}$ into intervals by these $t_i$, we get that

\begin{eqnarray*}
\sigma_{max, t}(v_i) &=& \sum_{s=1}^{t}r_{j(s), s}(v_i)\\
&=& \sum_{i=1}^{W+1}(\sigma_{j(t_i), t_i}(v_i) - \sigma_{j(t_i), t_{i-1}}(v_i)) \\
&=& \sigma_{j(T), T}(v_i) + \sum_{i=1}^{W+1}(\sigma_{j(t_{i-1}), t_{i-1}}(v_i)- \sigma_{j(t_i), t_{i-1}}(v_i)) \\
&=& \max_{j}\sigma_{j, t}(v_i) + \sum_{i=1}^{W+1}(\sigma_{j(t_{i-1}), t_{i-1}}(v_i)- \sigma_{j(t_i), t_{i-1}}(v_i))
\end{eqnarray*}

It therefore suffices to show that $|\sigma_{j(t_{i-1}), t_{i-1}}(v_i)- \sigma_{j(t_i), t_{i-1}}(v_i)| \leq 1$ for all $i$. To see this, note that (by the definition of $j(t)$), $\sigma_{j(t_{i-1}), t_{i-1}}(v_i)- \sigma_{j(t_i), t_{i-1}}(v_i) > 0$, and that $\sigma_{j(t_{i-1}), t_{i-1}+1}(v_i)- \sigma_{j(t_i), t_{i-1}+1}(v_i) < 0$. However, 

$$(\sigma_{j(t_{i-1}), t_{i-1}+1}(v_i)- \sigma_{j(t_i), t_{i-1}+1}(v_i)) = (\sigma_{j(t_{i-1}), t_{i-1}}(v_i)- \sigma_{j(t_i), t_{i-1}}(v_i)) + (r_{j(t_{i-1}), t_{i-1}+1}(v_i) - r_{j(t_{i}), t_{i-1}+1}(v_i))$$

Since $0 \leq r_{j, t}(u) \leq 1$, it follows that $|\sigma_{j(t_{i-1}), t_{i-1}}(v_i)- \sigma_{j(t_i), t_{i-1}}(v_i)| \leq 1$. This completes the proof.
\end{proof}

Let $\sigma_{T}(v_i) = \sum_{t=1}^{T} \E[r_{I_{t}, t}(v_i)]$ denote the expected cumulative utility of this buyer at time $T$. We claim that $\sigma_{T} \leq \max_{j}\sigma_{j, T}(v_i) + o(T)$. To see this, recall that, for $t \in (y_{j}T + \gamma T/\eps', z_{j}T - \gamma T/\eps')$, $\Pr[I_{t} \neq j] \leq m\gamma$, and therefore $\E[r_{I_t, t}] \leq r_{j,t} + m\gamma$. Furthermore, note that for $t \in S_j$, $j(t) = j$, so $r_{j,t} = r_{j(t), t}$ and $\E[r_{I_t, t}] \leq r_{j(t), t} + m\gamma$. It follows that

\begin{eqnarray*}
\sigma_{T}(v_i) &=& \sum_{t=1}^{T} \E[r_{I_{t}, t}(v_i)] \\
&\leq & \sum_{t=(1-x_i)T}^{T} \E[r_{I_{t}, t}(v_i)] \\
&=& \sum_{j=1}^{i} \sum_{t=y_{j}T}^{z_{j}T} \E[r_{I_{t}, t}(v_i)] \\
&\leq & \sum_{j=1}^{i}\left(\frac{2\gamma T}{\eps'} + \sum_{t=y_{j}T+\gamma T/\eps'}^{z_{j}T - \gamma T/\eps'} \E[r_{I_{t}, t}(v_i)]\right) \\
&\leq & \sum_{j=1}^{i}\left(\frac{2\gamma T}{\eps'} + \sum_{t=y_{j}T+\gamma T/\eps'}^{z_{j}T - \gamma T/\eps'} ( r_{j(t), t}(v_i) + m\gamma)\right) \\
&\leq& \frac{2m\gamma T}{\eps'} + m\gamma T + \sum_{t=1}^{T} r_{j(t), t}(v_i) \\
& = & \frac{2m\gamma T}{\eps'} + m\gamma T + \sigma_{max, T}(v_i) \\
& \leq & \frac{2m\gamma T}{\eps'} + m\gamma T + m + \max_{j}\sigma_{j, T}(v_i) \\ 
& = & \max_{j}\sigma_{j, T}(v_i) + o(T).
\end{eqnarray*}

Finally, note that

\begin{eqnarray*}
\max_{j}\sigma_{j, T}(v_i) &=& \max_{j < i} (v_i - v_j + \eps)x_{j}T \\
&\leq & (\max_{j < i} (v_i - v_j)x_{j} + \eps)T \\
&= & (u_i + \eps)T
\end{eqnarray*}

It follows that $\sigma_{T}(v_i) \leq (u_i + \eps)T + o(T)$, as desired.

\end{proof}
    
We now proceed to show that this bound is in fact optimal; no strategy for the seller (even an adaptive one) can achieve better revenue against a \lowregret, conservative buyer.

\begin{theorem}[Restatement of Theorem \ref{thm:mbrevubmain}]
\label{thm:mbrevub}
Any strategy for the seller achieves revenue at most $\Crit(\D)T + o(T)$ against a conservative buyer running a \lowregret algorithm.
\end{theorem}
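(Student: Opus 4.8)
The plan is to show that, for \emph{any} seller strategy, the aggregate allocation probabilities and buyer utilities induced by the run form an (approximately) feasible point of the mean-based revenue LP (Figure~\ref{fig:mblp}), so that the total revenue cannot exceed $\Crit(\D)T + o(T)$. Concretely, for each $i\in[m]$ let $h_i$ be the expected number of rounds in which the buyer has value $v_i$ and receives the item, and let $r_i$ be the expected total payment in those rounds; the seller's expected revenue is exactly $\sum_i r_i$. Set $x_i := h_i/(q_iT)$ (average allocation conditioned on value $v_i$) and $u_i := (v_ih_i - r_i)/(q_iT)$ (average per-round utility conditioned on value $v_i$). Then $\sum_i q_i(v_ix_i - u_i) = \tfrac1T\sum_i r_i$, so it suffices to prove that $(x,u)$ satisfies each LP constraint up to $o(1)$ additive slack: since $m$ is constant in $T$, replacing $u_i$ by $\tilde u_i := \max\{0,\ \max_{j<i}(v_i-v_j)x_j\}$ yields a genuinely feasible point whose objective is at least $\tfrac1T\sum_i r_i - o(1)$, and optimality of $\Crit(\D)$ finishes the bound.

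Verifying the constraints is where the three-step outline from the introduction is carried out. Trivially $x_i\in[0,1]$, and since every menu contains the zero arm (utility $0$ every round), the buyer's no-regret guarantee specialized to context $v_i$ gives $v_ih_i - r_i \ge -\delta$, i.e. $u_i \ge -o(1)$. For $i>j$, let $\hat\jmath$ be the highest arm whose bid label satisfies $b_{\hat\jmath}\le v_j$ (it exists, since the zero arm qualifies). (1)~Because the conservative buyer with value $v_i$ may bid anything below $v_i$ and $b_{\hat\jmath}\le v_j<v_i$, arm $\hat\jmath$ is admissible for them, so by no-regret their conditional utility is at least, up to $o(T)$, that of playing $\hat\jmath$ on every $v_i$-round. (2)~Since $p_{\hat\jmath,t}\le b_{\hat\jmath}\,a_{\hat\jmath,t}\le v_j\,a_{\hat\jmath,t}$, the per-round utility of playing $\hat\jmath$ with value $v_i$ is at least $(v_i-v_j)a_{\hat\jmath,t}$; summing over rounds and taking expectations (using that the realized value in a round is independent of $a_{\hat\jmath,t}$, which the seller fixes without observing it) gives a lower bound of $(v_i-v_j)\,q_iT\,\bar a_{\hat\jmath}$ where $\bar a_{\hat\jmath}:=\tfrac1T\sum_t\E[a_{\hat\jmath,t}]$. (3)~By monotonicity of the allocation rule, in any round the conservative buyer with value $v_j$ plays only arms with bid $\le v_j$, i.e. arms $\le\hat\jmath$, and hence receives the item with probability at most $a_{\hat\jmath,t}$; therefore $h_j\le q_jT\,\bar a_{\hat\jmath}$, i.e. $\bar a_{\hat\jmath}\ge x_j$. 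Chaining, $u_i \ge (v_i-v_j)\bar a_{\hat\jmath} - o(1) \ge (v_i-v_j)x_j - o(1)$, which is the missing constraint.

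For non-adaptive sellers the expectations above are immediate; for adaptive sellers the same argument runs after conditioning on the history $\Pi_{t-1}$ at each round, using that the seller only ever sees aggregate bid distributions and never the realized value, exactly as in the proof of Theorem~\ref{thm:alg_buyer}. I expect the delicate part to be twofold: bookkeeping the $o(T)$ terms so that the per-context no-regret comparisons (which together with the zero-arm bound must be combined into the single feasibility claim for $(x,u)$) survive intact, and — conceptually the crux — Step~(3), where monotonicity together with conservativeness is exactly what forces the counterfactual allocation $\bar a_{\hat\jmath}$ available to the higher type to dominate the actual allocation $x_j$ of the lower type. This is precisely the inequality that breaks when overbidding is allowed (cf.\ Section~\ref{sec:mbaa}), which is also why the LP's right-hand side omits the $u_j$ term that a genuine truthfulness constraint would carry.
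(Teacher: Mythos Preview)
Your proposal is correct and follows essentially the same approach as the paper's proof. The paper routes the argument through an intermediate arm-level linear program $LP'$ (Figure~\ref{fig:lp1}, Lemma~\ref{lem:revub}) before collapsing it to the mean-based LP, whereas you plug the buyer's realized per-value allocation and utility directly into the mean-based LP; but the substantive three-step chain you describe---no-regret against the fixed arm $\hat\jmath$, the price bound $p_{\hat\jmath,t}\le v_j\,a_{\hat\jmath,t}$, and the monotonicity/conservativeness comparison $\bar a_{\hat\jmath}\ge x_j$---is exactly what the paper does.
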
  
\begin{proof}
Assume the buyer is running a $\delta$-\lowregret algorithm, for some $\delta = o(T)$. Consider an arbitrary strategy for the seller with $K$ arms, where arm $j$ is labelled with maximum bid $b_j$. We begin by claiming that the following LP (Figure \ref{fig:lp1})  provides an upper bound on the revenue obtainable by this strategy against our \lowregret buyer.

\begin{figure}[h]
  \begin{alignat*}{2}
    \textbf{maximize }   & \sum_{i=1}^m q_i(v_i x_i - u_i)\  \\
    \textbf{subject to ~~~~} & u_i \geq v_i y_j -\overline{p}_j - \delta/T, &\ & i\in [m],j\in[K]: v_i \geq b_j\\
                       & \overline{p}_j \leq b_j y_j,  \ &\ & j \in [K]\\
                       & x_i = y_j, \ &\ & i \in [m], j = \arg\max_{j \in [K]:b_j \leq v_i} b_j \\
                       & \overline{p}_j \geq 0, 1 \geq y_j \geq 0,\ &\ & j \in [K] \\
  \end{alignat*}
  \caption{$LP'$, with variables $x_i$, $u_i$, $y_j$, and $\overline{p}_j$}
\label{fig:lp1}
\end{figure}
  
  \begin{lemma}\label{lem:revub}
  Let $V'$ be the optimal value of $LP'$ (see Figure \ref{fig:lp1}). Then the expected revenue of the seller is at most $V'T$.
  \end{lemma}
\begin{proof}
Given our strategy for the seller, we will assign values to variables in the following way. Fix a strategy for the buyer, and let $y_j = \frac{1}{T}\E\left[\sum_{t}a_{j,t}\right]$ be the expected average probability that arm $j$ gives the item and let $\overline{p}_j = \frac{1}{T}\E\left[\sum_{t}p_{j,t}\right]$ be the expected average price charged by arm $j$. We will define $x_i$ through the third constraint, and set $u_i = \max_{j}(v_iy_j - \overline{p}_j - \delta/T)$. We will show that this assignment of variables satisfies all the constraints, and that the objective function evaluated on this assignment of variables is at least the seller's revenue using this strategy.

The first and third constraints are satisfied via our choices of $x_i$ and $u_i$. The constraint $\overline{p}_{j} \leq b_jy_j$ is satisfied since $p_{j, t} \leq b_{j}a_{j,t}$ for all $t$. Finally, $0 \leq y_j \leq 1$ is satisfied since $y_j$ is an average probability.

We now must show that the seller's revenue is at most $q_i(v_ix_i - u_i)$. We begin by claiming that $x_i$ is an upper bound for the expected fraction of the time that the buyer receives the item when he has value $v_i$. To see this, note first that the buyer is conservative, and therefore will not bid on any arm with bid value larger than $v_i$. Choose $j$ so that $b_j$ is maximized over all $b_j \leq v_i$; note that since the seller's strategy is monotone, $a_{j,t} > a_{j',t}$ for any $j' < j$, so the buyer will receive the item at most $\E\left[\frac{1}{T}\sum_{t} a_{j,t}\right] = y_j$ of the time in expectation. But by our third constraint, $x_i = y_j$, so $x_i$ is an upper bound on the average probability that the buyer with value $v_i$ gets the item, and therefore $ \sum_{i=1}^m q_iv_ix_i$ is an upper bound on the average welfare of the buyer. 

We next claim that $\sum_{i}q_iu_i$ is a lower bound for the average utility of the buyer. To see this, note that since the buyer is using a $\delta$-\lowregret algorithm, when the value is $v_i$, the buyer should not regret always playing some arm $j$ with $w_j \leq v_i$. Therefore the average surplus of value $v_i$ should satisfy the constraint on $u_i$, and so $ \sum_{i=1}^m q_i \cdot  u_i$ is a lower bound on the average surplus of the buyer.

Finally, note that the seller's revenue is just the buyer's welfare minus the buyer's surplus. Combining the upper bound on the buyer's welfare and the lower bound on the buyer's surplus, we get our desired upper bound on the seller's revenue. 
\end{proof}
  
  We will now show how to transform a solution of this LP into a solution to the mean-based revenue LP while ensuring that its value does not decrease by more than $\delta/T$. To begin, it is easy to see that there exists an optimal solution of $LP'$ that satisfies $\overline{p}_j = y_j \cdot w_j$ for all $j \in [K]$. We can thus increase each $u_i$ by $\delta/T$, since this will decrease the value of the LP by at most $\delta/T$ as $\sum_{i=1}^m q_i = 1$. This solution now satisfies $u_i \geq (v_i - b_j)y_j$ for all $i\in [m], j\in [K]:v_i \geq b_j$. Finally, for each $i,j \in [m]: i > j$, note that for $\ell = \arg \max_{\ell \in [K]:b_{\ell} \leq v_j} b_\ell$, we have that $b_{\ell} \leq v_j$. It follows that $u_i \geq (v_i - v_j)y_{\ell} = (v_i - v_j)x_j$, and therefore that this solution is a valid solution of the mean-based revenue LP. 
  
From the above argument, we can conclude that $V_1 \leq R_{mb}(\D) + \delta/T$. It follows from Lemma \ref{lem:revub} that the total revenue is upper bounded by $T(\Crit(\D) + \delta/T) = R_{mb}(\D)T + o(T)$, as desired.
\end{proof}

Note that the proof of Lemma \ref{lem:revub} relies on the fact that our allocation rule is monotone. We can show that this constraint is necessary; with non-monotone strategies, the seller can extract up to the full welfare of a conservative buyer playing a mean-based strategy. The proof of this fact can be found in Appendix \ref{sect:non-monotone}.
 
  \subsection{Bounding $\Crit(\D)$}
  \label{sec:boundmb}
In this section, we compare the mean-based revenue $\Crit(\D)$ to our two benchmarks: the Myerson revenue for the item, $\Mye(\D)$, and the buyer's expected value for the item, $\Val(\D)$. It is not too hard to see that $\Crit(\D) \leq \Val(\D)$ (the value of the mean-based revenue LP is clearly at most $\sum_{i}q_iv_i = \Val(\D)$) and that $\Crit(\D) \geq \Mye(\D)$ (the seller can achieve $\Mye(\D)$ by just always selling the item the Myerson price). We show here that $\Crit(\D)$ is not a constant factor approximation to either $\Mye(\D)$ or $\Val(\D)$, and thus lies strictly between our two benchmarks in general.

We will begin by showing that $\Crit(\D)$ is monotone with respect to stochastic dominance. We will break from notation somewhat by considering distributions $\D$ supported on $[1, H]$ rather than $[0, 1]$; since $\Mye(\D)$, $\Crit(\D)$, and $\Val(\D)$ are all linear in the values $v_i$, dividing all values through by $H$ results restores the condition that $\D$ is supported on $[0,1]$ while preserving the multiplicative gaps between these quantities.

\begin{definition}
A distribution $\mathcal{D}$ stochastically dominates distribution $\mathcal{D'}$ if for all $t$, $\Pr_{u\sim \mathcal{D}}[u \geq t] \geq \Pr_{u\sim \mathcal{D}'}[u \geq t]$. 
\end{definition}

\begin{lemma}
If distribution $\D$ stochastically dominates distribution $\D'$, then $\Crit(\D) \geq \Crit(\D')$.
\end{lemma}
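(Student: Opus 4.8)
The plan is to exploit the fact that the feasible region of the mean-based revenue LP does not depend on the probabilities $q_i$ — only the objective $\sum_i q_i(v_ix_i-u_i)$ does — together with a structural property of optimal solutions. First I would prove a helper claim: if $(x,u)$ is an optimal solution of the LP in which $u$ is taken pointwise as small as feasible (so $u_i=\max(0,\max_{j<i}(v_i-v_j)x_j)$, which is WLOG since each $u_i$ appears in the objective only with coefficient $-q_i\le 0$ and only in $\ge$-constraints) and with $x$ non-decreasing (by Lemma~\ref{lem:mono}), then the ``per-type revenue'' $\rho_i:=v_ix_i-u_i$ is non-negative and non-decreasing in $i$. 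This is a short computation: bounding $(v_{i+1}-v_j)x_j=(v_{i+1}-v_i)x_j+(v_i-v_j)x_j\le (v_{i+1}-v_i)x_i+u_i$ (using $x_j\le x_i$ and feasibility) gives $u_{i+1}\le (v_{i+1}-v_i)x_i+u_i$, hence $\rho_{i+1}\ge v_{i+1}(x_{i+1}-x_i)+v_ix_i-u_i\ge \rho_i$; and $u_i\le v_ix_{i-1}\le v_ix_i$ gives $\rho_i\ge 0$.

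Next I would turn an optimal solution for $\D'$ into a monotone ``revenue curve'' and transport it along a stochastic-dominance coupling. Let $(x^*,u^*)$ be such an optimum for $\D'$, whose support is $v_1'<\dots<v_{m'}'$, and write $\rho_i^*=v_i'x_i^*-u_i^*$. Define $R:[0,1]\to\mathbb{R}_{\ge 0}$ by $R(v)=\rho^*_{\ell}$ where $\ell=\max\{j: v_j'\le v\}$, and $R(v)=0$ if $v<v_1'$. By the helper claim $R$ is non-decreasing and $R(v_i')=\rho_i^*$, so $\Crit(\D')=\sum_i q_i'\rho_i^*=\E_{V\sim\D'}[R(V)]$. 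Since $\D$ stochastically dominates $\D'$, the quantile coupling gives a pair $V\sim\D$, $V'\sim\D'$ with $V\ge V'$ almost surely, so monotonicity of $R$ yields $\E_{\D}[R]\ge \E_{\D'}[R]=\Crit(\D')$. It then remains to show $\Crit(\D)\ge \E_\D[R]$.

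For the last step I would exhibit an explicit feasible solution of the $\D$-LP with objective exactly $\E_\D[R]$: for each value $w$ in the support of $\D$ with $w\ge v_1'$, let $\ell(w)=\max\{j:v_j'\le w\}$ and set $\tilde x_w=x^*_{\ell(w)}$ and $\tilde u_w=u^*_{\ell(w)}+(w-v_{\ell(w)}')x^*_{\ell(w)}$ (and $\tilde x_w=\tilde u_w=0$ if $w<v_1'$). Then $w\tilde x_w-\tilde u_w=v_{\ell(w)}'x^*_{\ell(w)}-u^*_{\ell(w)}=\rho^*_{\ell(w)}=R(w)$, so the objective equals $\E_\D[R]$, and $\tilde x_w\in[0,1]$, $\tilde u_w\ge 0$ hold since $v_{\ell(w)}'\le w$. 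The main obstacle is verifying the IC-like constraints $\tilde u_w\ge (w-w')\tilde x_{w'}$ for $w>w'$ in the support of $\D$: if $w'<v_1'$ the right-hand side is $0$; otherwise $\ell(w')\le \ell(w)$, and if $\ell(w')=\ell(w)=\ell$ the inequality reduces to $u^*_\ell\ge (v'_\ell-w')x^*_\ell$, which holds because $v'_\ell\le w'$, while if $\ell(w')<\ell(w)$ one substitutes the $\D'$-LP constraint $u^*_{\ell(w)}\ge (v'_{\ell(w)}-v'_{\ell(w')})x^*_{\ell(w')}$ into $\tilde u_w$ and uses that $x^*$ is non-decreasing together with $v'_{\ell(w')}\le w'$. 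Combining the three displayed inequalities gives $\Crit(\D)\ge \E_\D[R]\ge \Crit(\D')$. (An alternative route, avoiding the coupling, is to reduce to a common support — noting $\Crit$ is unchanged by adjoining a probability-$0$ value — and then to observe that the above revenue-monotonicity claim makes each elementary ``move mass to a larger value'' transfer weakly increase $\Crit$, since the old optimum remains feasible.)
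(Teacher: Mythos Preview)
Your proof is correct and follows essentially the same route as the paper: both arguments hinge on showing that, for a monotone $x$, the per-type revenue $\rho_i = v_i x_i - \max_{j<i}(v_i-v_j)x_j$ is non-decreasing in $i$, and then invoke stochastic dominance. The paper presents this more tersely by writing $\Crit(\D)=\max_x \E_{v_i\sim\D}[f(v_i)]$ with $f(v_i)=v_ix_i-\max_j(v_i-v_j)x_j$ and directly verifying $f$ is increasing (implicitly over a common support), whereas you spell out the transport of the $\D'$-optimal solution to a feasible $\D$-solution via the map $w\mapsto(\tilde x_w,\tilde u_w)$; this extra step makes explicit what the paper leaves implicit when the two supports differ, but does not change the underlying argument.
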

\begin{proof}

Note that we can write $\Crit(\D)$ in the form

$$\Crit(\D) = \max_{x}\mathbb{E}_{v_i \sim \D}\left[v_{i}x_{i} - \max_{j}(v_{i}-v_{j})x_{j} \right]$$

To show $\Crit(\D) \geq \Crit(\D')$, it suffices to show that for all increasing $x$ (i.e. $x_{i} \geq x_{j}$ for $i \geq j$), that

$$\mathbb{E}_{v_i \sim \D}\left[v_{i}x_{i} - \max_{j}(v_{i}-v_{j})x_{j} \right] \geq \mathbb{E}_{v_i \sim \D'}\left[v_{i}x_{i} - \max_{j}(v_{i}-v_{j})x_{j} \right]$$

Note that if $\mathcal{D}$ stochastically dominates distribution $\mathcal{D'}$, then for any increasing function $f$, $\mathbb{E}_{u \sim \mathcal{D}}[f(u)] \geq \mathbb{E}_{u \sim \mathcal{D'}}[f(u)]$. It suffices to show that $f(v_{i}) = v_{i}x_{i} - \max_{j}(v_{i}-v_{j})x_{j}$ is increasing in $i$ (and hence in $v_{i}$). In particular, we wish to show that, for $i' > i$,

$$v_{i'}x_{i'} - \max_{j}(v_{i'} - v_{j})x_{j} \geq v_{i}x_{i} - \max_{j}(v_{i'} - v_{j})x_{j}$$

\noindent
or equivalently,

$$\min_{j} \left( v_{i'}x_{i'} - (v_{i'} - v_{j})x_{j}\right) \geq \min_{j} \left(v_{i}x_{i} - (v_{i} - v_{j})x_{j}\right).$$

To show this, it suffices to show that for each $j$, 

$$v_{i'}x_{i'} - (v_{i'} - v_{j})x_{j} \geq v_{i}x_{i} - (v_{i} - v_{j})x_{j} $$

\noindent
or equivalently,

$$v_{i'}x_{i'} - v_{i}x_i \geq (v_{i'} - v_{i})x_{j}.$$

This follows since

\begin{eqnarray*}
v_{i'}x_{i'} - v_{i}x_i &\geq & v_{i'}x_i - v_{i}x_i \\
&= & (v_{i'} - v_i)x_i \\
&\geq & (v_{i'} - v_{i})x_j.
\end{eqnarray*}

Here we have used the fact that $x_{i'} \geq x_{i} \geq x_{j}$. This concludes the proof.

\end{proof}

For ease of analysis, we will also switch to considering continuous distributions $\D$. The definitions of $\Mye(\D)$ and $\Val(\D)$ still hold for continuous $\D$. Since the mean-based revenue LP implies that, in the optimal solution, $u_{i} = \max_{j}(v_i - v_j)x_j$, we can write $\Crit(\D)$ for a continuous $\D$ supported on $[1, H]$ with pdf $q(v)$ as

$$\Crit(\D) = \max_{x(v)} \int_{1}^{H}q(v)(vx(v) - \max_{w<u}(v-w)x(w))dv.$$

By discretizing appropriately, all gaps we prove for continuous $\D$ extend to discrete values of $\D$.

\begin{definition}
The \emph{equal revenue curve} is the (continuous) distribution $\D_{ERC}$ supported on $[1, \infty)$ with CDF $F(v) = 1 - \frac{1}{v}$. The \emph{equal revenue curve truncated at $H$} is the distribution distribution $\D_{ERC}(H)$ supported on $[1, H]$ with CDF $F(v) = 1 - \frac{1}{v}$ for $v \leq H$ and $F(v) = 0$ for $v > H$.
\end{definition}

Note that $\Mye(\D_{ERC}) = 1$ (since $v(1-F(v)) = 1$ for all $v \geq 1$). Likewise, $\Mye(\D_{ERC}(H)) = 1$. 

\begin{lemma}
Let $\D_{ERC}(H)$ be the equal revenue curve truncated at $H$. Let $\D$ be any distribution supported on $[1, H]$ with $\Mye(D) = 1$. Then $\D_{ERC}(H)$ stochastically dominates $\D$.
\end{lemma}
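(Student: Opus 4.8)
The plan is to unpack both sides into tail (complementary cdf) functions and compare them pointwise. First I would recall that $\Mye(\D) = \max_p p\cdot\Pr_{v\sim\D}[v\geq p]$, so the hypothesis $\Mye(\D)=1$ is equivalent to the family of inequalities $p\cdot\Pr_{v\sim\D}[v\geq p]\leq 1$ for every $p$, i.e. $\Pr_{v\sim\D}[v\geq p]\leq 1/p$ for all $p\geq 1$. Since $\D$ is supported on $[1,H]$, we also automatically have $\Pr_{v\sim\D}[v\geq p]=1$ for $p\leq 1$ and $\Pr_{v\sim\D}[v\geq p]=0$ for $p>H$.

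Next I would compute the tail of $\D_{ERC}(H)$ explicitly from its cdf $F(v)=1-1/v$ on $[1,H]$ (which places an atom of mass $1/H$ at $H$): for $t\leq 1$ the tail is $1$; for $1\leq t\leq H$ the tail is $1-(1-1/t)=1/t$; and for $t>H$ the tail is $0$. In other words $\Pr_{v\sim\D_{ERC}(H)}[v\geq t]=\min(1,1/t)$ for $t\leq H$ and $0$ for $t>H$.

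Finally I would verify $\Pr_{v\sim\D_{ERC}(H)}[v\geq t]\geq \Pr_{v\sim\D}[v\geq t]$ for all $t$ by a three-case split: for $t\leq 1$ both tails equal $1$; for $1\leq t\leq H$ the left side is $1/t$ and the right side is at most $1/t$ by the Myerson bound above; for $t>H$ both tails are $0$ since $\D$ is supported on $[1,H]$. This pointwise domination of tails is exactly the definition of stochastic dominance, completing the proof. I do not expect a genuine obstacle here — the only thing requiring a little care is bookkeeping at the endpoints $t=1$ and $t=H$ and the atom of $\D_{ERC}(H)$ at $H$, but these are handled by working with the complementary cdf $\Pr[v\geq t]$ consistently (rather than $\Pr[v\leq t]$) so that left-continuity issues do not bite.
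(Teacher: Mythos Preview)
Your proof is correct; the paper states this lemma without proof, treating it as immediate from the definitions, and your argument is precisely the natural one-line verification one would expect. One tiny wording nit: $\Mye(\D)=1$ is not literally \emph{equivalent} to the family of inequalities $p\cdot\Pr[v\geq p]\leq 1$ (equality of the max also requires the supremum to be attained at $1$), but you only use the $\leq$ direction, so the argument is unaffected.
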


\begin{corollary}
The distribution $\D$ supported on $[1, H]$ that maximizes $\Crit(\D)$ subject to $\Mye(\D) = 1$ is the truncated equal revenue curve $\D_{ERC}(H)$. 
\end{corollary}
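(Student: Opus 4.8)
The plan is to obtain the corollary directly from the two lemmas that immediately precede it in this subsection: the monotonicity of $\Crit(\cdot)$ under stochastic dominance, and the fact that $\D_{ERC}(H)$ stochastically dominates every distribution supported on $[1,H]$ with Myerson revenue $1$. So the corollary is essentially a one-line consequence once those are in hand.

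First I would check that $\D_{ERC}(H)$ is itself feasible for the maximization: it is supported on $[1,H]$ by definition, and $\Mye(\D_{ERC}(H)) = 1$ (already observed in the excerpt, since $v(1-F(v)) = 1$ for all $1 \le v \le H$, and the truncated atom at $H$ contributes revenue only $H \cdot \frac{1}{H} = 1$). Then, for an arbitrary $\D$ supported on $[1,H]$ with $\Mye(\D)=1$, the preceding lemma gives that $\D_{ERC}(H)$ stochastically dominates $\D$, and the monotonicity lemma yields $\Crit(\D_{ERC}(H)) \ge \Crit(\D)$. Since this holds for every feasible $\D$, and $\D_{ERC}(H)$ is itself feasible, $\D_{ERC}(H)$ attains the maximum of $\Crit$ over the feasible set, which is exactly the assertion of the corollary.

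The only step with genuine content is the stochastic-dominance lemma (stated but not yet proved in the excerpt), which I would argue as follows: $\Mye(\D) = \max_p\, p\Pr_{v\sim\D}[v\ge p] = 1$ forces $\Pr_{v\sim\D}[v\ge p] \le \frac{1}{p}$ for every $p \ge 1$, whereas $\Pr_{v\sim\D_{ERC}(H)}[v\ge p] = \frac{1}{p}$ for $1\le p\le H$, and both tails vanish for $p>H$ since both distributions are supported on $[1,H]$; comparing these inequalities pointwise is precisely stochastic dominance. I do not anticipate a real obstacle here — the two things to be careful about are that the monotonicity lemma was stated for arbitrary (including continuous) distributions on $[1,H]$, so it applies verbatim, and that the feasibility of $\D_{ERC}(H)$ is what upgrades "$\Crit(\D_{ERC}(H))$ upper-bounds everything feasible" to "$\D_{ERC}(H)$ is a maximizer".
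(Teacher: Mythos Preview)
Your proposal is correct and matches the paper's approach exactly: the corollary is stated without proof in the paper precisely because it follows immediately from combining the monotonicity lemma with the stochastic-dominance lemma, together with the already-noted feasibility $\Mye(\D_{ERC}(H))=1$. Your supplementary argument for the stochastic-dominance lemma (which the paper states but does not prove) is also correct.
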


\begin{theorem}\label{thm:gaplb}
$\Crit(\D_{ERC}(H)) \geq \Omega(\log\log H)$.
\end{theorem}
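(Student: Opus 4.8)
The plan is to prove this lower bound by exhibiting an explicit feasible solution of the (continuous form of the) mean-based revenue LP for $\D_{ERC}(H)$. Writing the density of $\D_{ERC}(H)$ on $[1,H)$ as $q(v)=1/v^2$ and noting that the point mass $1/H$ at $v=H$ only adds a nonnegative term $x(H)-\tfrac1H\sup_{w<H}(H-w)x(w)\ge 0$ (using that $x$ is increasing), it suffices to produce an increasing step function $x\colon[1,H]\to[0,1]$ with $\int_1^H \frac{1}{v^2}\big(vx(v)-\sup_{w<v}(v-w)x(w)\big)\,dv=\Omega(\log\log H)$.

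The construction I would use: fix an integer $m\approx 2\ln H/\ln\ln H$, take thresholds $s_j=H^{j/m}$ for $j=0,1,\dots,m$, set $\rho=(1-H^{-1/m})^{-1}$, and let $x(v)=x_j:=\rho^{\,j-m}$ on $[s_{j-1},s_j)$ (with $x(H)=x_m=1$). Then $0<x_1<\dots<x_m=1$, so $x$ is a legal increasing allocation rule, and the key identity $s_j/s_{j-1}=H^{1/m}=\rho/(\rho-1)=x_j/(x_j-x_{j-1})$ holds for every $j$. The core step is a per-block estimate: for $j\ge 2$ and $v\in[s_{j-1},s_j)$ one has $\sup_{w<v}(v-w)x(w)=\max_{i\le j}(v-s_{i-1})x_i$, hence $vx_j-\sup_{w<v}(v-w)x(w)=\min_{i\le j}[\,v(x_j-x_i)+s_{i-1}x_i\,]$; every $i<j$ term is $\ge v(x_j-x_{j-1})$, and the $i=j$ term $s_{j-1}x_j$ is $\ge s_j(x_j-x_{j-1})>v(x_j-x_{j-1})$ by the identity, so the whole quantity is $\ge v(x_j-x_{j-1})$. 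Therefore block $j$ contributes at least $\int_{s_{j-1}}^{s_j}\frac{x_j-x_{j-1}}{v}\,dv=(x_j-x_{j-1})\ln(s_j/s_{j-1})=(x_j-x_{j-1})\frac{\ln H}{m}$.

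Summing over $j=2,\dots,m$ (block $j=1$ and the point mass contribute $\ge 0$ and are simply discarded), the total is $\ge \frac{\ln H}{m}\sum_{j=2}^m(x_j-x_{j-1})=\frac{\ln H}{m}(x_m-x_1)=\frac{\ln H}{m}(1-\rho^{-(m-1)})$. With $m\approx 2\ln H/\ln\ln H$ we get $H^{1/m}\approx\sqrt{\ln H}$, so $\rho\approx 1+1/\sqrt{\ln H}$ and $\rho^{\,m-1}=\exp(\Theta(m/\sqrt{\ln H}))=\exp(\Theta(\sqrt{\ln H}/\ln\ln H))\to\infty$; hence $1-\rho^{-(m-1)}\ge\tfrac12$ for large $H$, and the total is $\ge \frac{\ln H}{2m}=\Omega(\log\log H)$, which is the claim. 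This direction does not use the stochastic-dominance lemma or the ERC-optimality corollary stated just above; those are for the matching upper bound.

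The only genuinely delicate point is discovering the right trade-off: coupling the threshold growth to the allocation jumps via $s_j/s_{j-1}=x_j/(x_j-x_{j-1})$, so that the utility the mean-based buyer skims off block $j$ scales with the jump $x_j-x_{j-1}$ rather than with $x_j$ itself, and then tuning $m$ (equivalently, how slowly $x$ grows versus how fast the thresholds grow) so that the geometric sum $\sum(x_j-x_{j-1})\ln(s_j/s_{j-1})$ actually reaches $\Theta(\log\log H)$ rather than collapsing to $O(1)$ — both a too-aggressive and a too-timid choice of $\rho$ kill the bound. Everything else (feasibility of $x$, the one-line $\min_i$ computation, discarding the first block and the point mass, and arithmetic with the parameters) is routine.
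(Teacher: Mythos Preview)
Your proof is correct and complete; the per-block estimate and the parameter tuning both check out. However, your approach differs from the paper's. The paper chooses the \emph{continuous} allocation $x(v)=\frac{\log v}{\log H}$, then uses first-order calculus to locate the maximizer of $(v-w)x(w)$ (obtaining the implicit relation $w+w\log w=v$), performs the change of variables $v\mapsto w$, and integrates directly to get $\Omega(\log\log H)$. You instead build a \emph{discrete} geometric step function, tie the threshold ratios to the allocation jumps via the identity $s_j/s_{j-1}=x_j/(x_j-x_{j-1})$, and reduce each block's contribution to $(x_j-x_{j-1})\ln(s_j/s_{j-1})$ by a one-line $\min$ computation; the $\Omega(\log\log H)$ then falls out of the telescoping sum once $m\approx 2\ln H/\ln\ln H$ is chosen. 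Your route is more elementary (no calculus, no implicit change of variables) and sits closer to the discrete LP formulation; the paper's route is more analytic but arguably makes the ``right'' limiting allocation $x(v)\propto\log v$ more transparent. Both are valid and yield the same bound.
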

\begin{proof}
Note that for $\D_{ERC}(H)$, the pdf $q(v)$ is given by $q(v) = \frac{1}{v^2}$, so

\begin{eqnarray*}
\Crit(\D_{ERC}(H)) &\geq & \max_{x(v)}\int_{1}^{H}q(v)(vx(v) - \max_{w < v}(v-w)x(w))dv \\ 
&=& \max_{x(v)}\int_{1}^{H}\frac{1}{v}\left(x(v) - \max_{w<v}\left(1 - \frac{w}{v}\right)x(w)\right)dv.
\end{eqnarray*}

Here the maximum of $x(v)$ is taken over all increasing functions from $[1, H]$ to $[0, 1]$. Consider the function $x(v) = \frac{\log v}{\log H}$. In this case, $(v-w)x(w)$ is maximized when:

\begin{eqnarray*}
\frac{d}{dv}\left((v-w)x(w)\right) &=& 0 \\
(v-w)x'(w) - x(w) &=& 0 \\
(v-w)\frac{1}{w \log H} - \frac{\log w}{\log H} &=& 0 \\
w + w\log w &=& v.
\end{eqnarray*}

If we choose $w$ so that the above inequality holds, then note that $dv = (2 + \log w)dw$. It follows that

\begin{eqnarray*}
&&\Crit(\D_{ERC}(H)) \\
&\geq & \frac{1}{\log H}\int_{1}^{H}\frac{1}{w+w\log w}\left(\log(w+w\log w) - \left(1 - \frac{w}{w+w\log w}\right)\log w\right)(2+\log w)dw \\
&=& \frac{1}{\log H}\int_{1}^{H}\frac{(2+\log w)}{w+w\log w}\left(\log(w+w\log w) - \log w + \frac{\log w}{1+\log w}\right)dw \\
&\geq & \frac{1}{\log H}\int_{1}^{H}\frac{(2+\log w)}{w+w\log w}\log(1+\log w)dw \\
&\geq & \frac{1}{\log H}\int_{1}^{H}\frac{\log(1+\log w)}{w}dw \\
&=& \frac{\log(H)\log(1+\log H) - \log(1 + \log H) - \log H}{\log H}\\
&=& \Omega(\log\log H)
\end{eqnarray*}

\end{proof}

\begin{theorem}\label{thm:gapub}
$\Crit(\D_{ERC}(H)) \leq O(\log\log H)$.
\end{theorem}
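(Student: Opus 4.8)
The plan is to bound $\Crit(\D_{ERC}(H))$ directly from its closed-form expression. Recall that $\D_{ERC}(H)$ has density $q(v)=1/v^2$ on $(1,H)$ together with an atom of mass $1/H$ at $v=H$, and that $\Crit(\D_{ERC}(H))$ equals $\mathbb{E}_{v\sim\D_{ERC}(H)}[\,vx(v)-\tilde u(v)\,]$ for the optimal monotone $x:[1,H]\to[0,1]$, where $\tilde u(v):=\max_{w<v}(v-w)x(w)$. So it suffices to show $\mathbb{E}_{v\sim\D_{ERC}(H)}[\,vx(v)-\tilde u(v)\,]=O(\log\log H)$ for \emph{every} $x:[1,H]\to[0,1]$ (we may even forget monotonicity, which only enlarges the feasible set). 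The atom contributes at most $\tfrac1H\cdot Hx(H)\le 1$, so the real work is to bound the continuous part
\[
\int_1^H \frac1v\Big(x(v)-\max_{w<v}\big(1-\tfrac wv\big)x(w)\Big)\,dv .
\]

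The key idea is to compare against a single, carefully chosen ``shifted'' bid: fix a threshold $L$ (to be optimized), and for every $v\ge L$ use $w=v/L\ge 1$ inside the maximum, giving $\max_{w<v}(1-\tfrac wv)x(w)\ge (1-\tfrac1L)x(v/L)$ and hence $x(v)-\max_{w<v}(1-\tfrac wv)x(w)\le \big(x(v)-x(v/L)\big)+\tfrac1L$. Integrating $\tfrac1v$ against this over $[L,H]$, the substitution $u=v/L$ turns $\int_L^H \tfrac{x(v/L)}{v}\,dv$ into $\int_1^{H/L}\tfrac{x(u)}{u}\,du$, so $\int_L^H\tfrac{x(v)-x(v/L)}{v}\,dv=\int_L^H\tfrac{x(v)}{v}\,dv-\int_1^{H/L}\tfrac{x(v)}{v}\,dv$; after cancelling the common sub-interval and using $0\le x\le 1$, this is at most $\ln L$, while the leftover $\int_L^H\tfrac{dv}{Lv}$ is at most $\tfrac{\ln H}{L}$. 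On the low range $[1,L]$ I simply bound $x(v)-\tilde u(v)/v\le x(v)\le 1$, contributing at most $\int_1^L\tfrac{dv}{v}=\ln L$. Altogether $\Crit(\D_{ERC}(H))\le 2\ln L+\tfrac{\ln H}{L}+O(1)$, and taking $L=\ln H$ gives $\Crit(\D_{ERC}(H))\le 2\ln\ln H+O(1)=O(\log\log H)$.

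The main obstacle --- really the only nontrivial choice in the whole argument --- is getting the trade-off in the shift right. Taking $w$ to be a constant fraction of $v$ (say $w=v/2$) keeps the $\tfrac{\ln H}{L}$ term $O(1)$ but only telescopes $\int\tfrac{x(v)-x(v/L)}{v}\,dv$ down to $O(\log H)$, recovering nothing better than the already-known $O(\log H)$ bound; pushing $w$ close to $v$ kills the telescoping term but blows up $\tfrac{\ln H}{L}$. The sweet spot $L\approx\ln H$ is exactly the scale at which both contributions become $O(\log\log H)$, and reassuringly it matches the scale $w\approx v/\log v$ at which the maximum is attained in the lower-bound construction of Theorem~\ref{thm:gaplb}. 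Everything else --- the change of variables, the cancellation of overlapping integrals, the $O(1)$ handling of the atom at $H$ (and of the boundary region $[1,L]$) --- is routine calculus, so once the threshold is pinned down the proof is a few lines of computation.
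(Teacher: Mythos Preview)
Your argument is correct and gives the desired $O(\log\log H)$ bound. The route, however, differs from the paper's in one essential choice: you pick a \emph{fixed} shift ratio $w=v/L$ with $L=\ln H$, whereas the paper uses the $v$-dependent comparison point $f(v)=v/(1+\log v)$. With the paper's choice, a change of variables $v\mapsto f(v)$ in $\int_1^{f(H)}\tfrac{x(v)}{v}\,dv$ produces the identity
\[
\frac{f'(v)}{f(v)}+\frac{f(v)}{v^2}-\frac{1}{v}=0,
\]
so the integrand over $[1,H]$ cancels exactly and only the boundary term $\int_{f(H)}^H\tfrac{x(v)}{v}\,dv\le\log(1+\log H)$ survives. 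Your fixed-ratio argument instead splits into a telescoping piece (bounded by $\ln L$), a residual $\tfrac{\ln H}{L}$, and the trivial $\ln L$ on $[1,L]$, then balances by taking $L=\ln H$.

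The trade-off: the paper's argument gets the leading constant $1$ in front of $\ln\ln H$ and is conceptually a single clean substitution, but requires guessing (or solving an ODE for) the right $f$. Your argument is more elementary and robust---just a fixed dilation plus optimization over one parameter---at the price of a leading constant $2$. Both are valid proofs of the stated $O(\log\log H)$ bound; yours is arguably easier to discover, while the paper's is sharper.
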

\begin{proof}
Note that, up to a point mass at $H$ which contributes at most $H(1/H) = 1$ to the mean-based revenue, $\Crit(\D_{ERC}(H))$ is given by

$$\max_{x(v)}\int_{1}^{H}\frac{1}{v}\left(x(v) - \max_{w<v}\left(1 - \frac{w}{v}\right)x(w)\right)dv.$$

Let $f(v): [1, \infty) \rightarrow [1, \infty)$ be a function that satisfies $f(v) < v$ for all $v \in [1, \infty)$. By choosing $w = f(v)$, we have that

\begin{eqnarray*}
\Crit(\D_{ERC}(H)) &\leq & \max_{x(v)} \left(\int_{1}^{H}\frac{1}{v}\left(x(v) - \left(1 - \frac{f(v)}{v}\right)x(f(v))\right)dv \right)\\
&=& \max_{x(v)} \left(\int_{1}^{H}\frac{x(v)}{v}dv - \int_{1}^{H}\frac{1}{v}\left(1 - \frac{f(v)}{v}\right)x(f(v))dv \right) \\
&=& \max_{x(v)} \left(\int_{f(H)}^{H}\frac{x(v)}{v}dv + \int_{1}^{f(H)}\frac{x(v)}{v}dv - \int_{1}^{H}\left(\frac{1}{v} - \frac{f(v)}{v^2}\right)x(f(v))dv \right) \\
&=& \max_{x(v)} \left(\int_{f(H)}^{H}\frac{x(v)}{v}dv + \int_{1}^{H}\frac{x(f(v))f'(v)}{f(v)}dv - \int_{1}^{H}\left(\frac{1}{v} - \frac{f(v)}{v^2}\right)x(f(v))dv \right) \\
&=& \max_{x(v)} \left(\int_{f(H)}^{H}\frac{x(v)}{v}dv + \int_{1}^{H}\left(\frac{f'(v)}{f(v)} + \frac{f(v)}{v^2} - \frac{1}{v}\right)x(f(v))dv \right). \\
\end{eqnarray*}

Choose $f(v) = \frac{v}{1+\log{v}}$. Note that, for this choice of $f$,

$$f'(v) = \frac{\log v}{(1+\log v)^2},$$

and so

\begin{eqnarray*}
\frac{f'(v)}{f(v)} + \frac{f(v)}{v^2} - \frac{1}{v} &=& \frac{\log v}{v(1+\log v)} + \frac{1}{v(1+\log v)} - \frac{1}{v} \\
&=& 0.
\end{eqnarray*}

It follows that (since $x(v) \in [0, 1]$ for all $v$)

\begin{eqnarray*}
\Crit(\D_{ERC}(H)) &\leq & \max_{x(v)}\int_{f(H)}^{H} \frac{x(v)}{v}dv \\
&\leq & \int_{H/(\log H + 1)}^{H} \frac{dv}{v} \\
&=& \log(\log H + 1) \\
&=& O(\log\log H).
\end{eqnarray*}

\end{proof}

\begin{corollary}[Restatement of Theorem \ref{thm:mbrev}]
The gap $\Crit(\D)/\Mye(\D)$ can grow arbitrarily large. For distributions $\D$ supported on $[1, H]$, this gap can be as large as $\Omega(\log \log H)$ (and this is tight). Similarly, the gap $\Val(\D)/\Crit(\D)$ can grow arbitrarily large. For distributions $\D$ supported on $[1,H]$, this gap can be as large as $\Omega(\log H / \log \log H)$.
\end{corollary}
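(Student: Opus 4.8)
The plan is to derive the corollary from the single witness distribution $\D_{ERC}(H)$, for which all the quantities we need are already pinned down. We have $\Mye(\D_{ERC}(H)) = 1$ (since $v(1-F(v))=1$ for every $v\in[1,H]$, as already noted), Theorem~\ref{thm:gaplb} gives $\Crit(\D_{ERC}(H)) = \Omega(\log\log H)$, and Theorem~\ref{thm:gapub} gives $\Crit(\D_{ERC}(H)) = O(\log\log H)$. Hence $\Crit(\D_{ERC}(H))/\Mye(\D_{ERC}(H)) = \Theta(\log\log H)$, which tends to infinity with $H$; this immediately yields ``$\Crit/\Mye$ can grow arbitrarily large'' and the $\Omega(\log\log H)$ lower bound for distributions supported on $[1,H]$.

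For the tightness of the first claim, I would show $\Crit(\D)/\Mye(\D) = O(\log\log H)$ for \emph{every} $\D$ supported on $[1,H]$. Both $\Crit$ and $\Mye$ scale linearly when all values are multiplied by a constant, so the ratio is scale-invariant; replacing $\D$ by $\D/\Mye(\D)$ we may assume $\Mye(\D)=1$. Since the support lies in $[1,H]$ we have $\Mye(\D)\ge 1$ (set price $1$), so after rescaling $\D$ is supported inside $(0,H]$ and still satisfies $\Mye(\D)=1$, i.e.\ $\Pr_{\D}[v\ge t]\le 1/t$ for all $t$. Together with $\Pr_\D[v\ge t]\le 1$ always and $\Pr_\D[v\ge t]=0$ for $t>H$, the tail of $\D$ is pointwise at most the tail of $\D_{ERC}(H)$ (which equals $\min(1,1/t)$ for $t\le H$ and $0$ for $t>H$); that is, $\D_{ERC}(H)$ stochastically dominates $\D$ — this is the stochastic-dominance Lemma proved above, extended by the trivial remark that mass placed below the reserve $1$ only helps dominance. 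By monotonicity of $\Crit$ under stochastic dominance together with Theorem~\ref{thm:gapub}, $\Crit(\D)\le\Crit(\D_{ERC}(H))=O(\log\log H)$, as needed.

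For the second gap, take $\D=\D_{ERC}(H)$ again: its density is $q(v)=1/v^2$ on $[1,H]$ together with a point mass $1/H$ at $H$, so $\Val(\D_{ERC}(H))=\int_1^H v\cdot v^{-2}\,dv + H\cdot H^{-1}=\ln H+1=\Theta(\log H)$; combined with $\Crit(\D_{ERC}(H))=O(\log\log H)$ this gives $\Val/\Crit=\Omega(\log H/\log\log H)$, which is unbounded. Since the model requires finitely supported distributions, I would finish by invoking the discretization remark already in the text: rounding $\D_{ERC}(H)$ to multiples of a small $\eps$ perturbs $\Mye$, $\Val$, and $\Crit$ by only $(1\pm o(1))$ factors and hence preserves all three multiplicative gaps. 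The only step requiring real care is the tightness half of the first claim — specifically the bookkeeping that, after normalizing $\Mye(\D)$ to $1$, the support of $\D$ can slip below $1$, which is harmless precisely because $\D_{ERC}(H)$ places no mass there; everything else is immediate from Theorems~\ref{thm:gaplb}, \ref{thm:gapub}, and the stochastic-dominance monotonicity of $\Crit$.
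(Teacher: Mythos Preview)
Your proposal is correct and follows the same route as the paper: the corollary is meant to be an immediate consequence of Theorems~\ref{thm:gaplb} and~\ref{thm:gapub} together with the stochastic-dominance monotonicity lemma and the fact that $\D_{ERC}(H)$ dominates every $\D$ on $[1,H]$ with $\Mye(\D)=1$. Your explicit computation of $\Val(\D_{ERC}(H))=\ln H+1$ and your handling of the rescaling (where the support may fall below $1$, which is harmless since $\D_{ERC}(H)$ has full mass above $1$) simply spell out details the paper leaves implicit.
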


\section{Mean-based learning algorithms}\label{sect:mbalgs}

In this appendix we will show that Multiplicative Weights and EXP3 - the most common adversarial \lowregret algorithms for the experts and bandits case respectively - are mean-based, as per Definition \ref{def:mb}. We expect that many variants of these algorithms along with other \lowregret learning algorithms are also mean-based, and can be shown to be mean-based via similar methods of proof.

We begin by showing that Multiplicative Weights (Algorithm \ref{alg:mw}) is mean-based. Multiplicative Weights, also known as Hedge (see survey \cite{AroraHK12} for more details) is a simple \lowregret learning algorithm for the full-information setting. It proceeds by maintaining a weight $w_i$ for each option. Every round, Multiplicative Weights chooses an option with probability proportional to $w_i$, and then updates each weight $w_i$ by multiplying it by $e^{\eps r_i}$, where $\eps$ is a parameter of the algorithm and $r_i$ is the reward from option $i$ this round.

\begin{algorithm}[ht] 
	\caption{Multiplicative Weights algorithm.} \label{alg:mw}
    \begin{algorithmic}[1]
    	\STATE Choose $\eps = \sqrt{\frac{\log K}{ T}}$. Initialize $K$ weights, letting $w_{i, t}$ be the value of the $i$th weight at round $t$. Initially, set all $w_{i,0} = 1$.
        \FOR {$t=1$ to $T$}
			\STATE Choose option $i$ with probability $p_{i,t} = w_{i,t-1} / \sum_{j}w_{j,t-1}$.
			\FOR {$j=1$ to $K$}
				\STATE Set $w_{j,t} = w_{j,t-1} \cdot e^{\eps r_{j,t}}$.
			\ENDFOR
        \ENDFOR
      \end{algorithmic}
\end{algorithm}

\begin{theorem}\label{thm:mbmw}
The Multiplicative Weights algorithm (Algorithm \ref{alg:mw}) is mean-based.
\end{theorem}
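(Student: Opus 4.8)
The plan is to unwind the multiplicative update into a closed form and then read the mean-based property off directly. Writing $\sigma_{i,t} = \sum_{s=1}^{t} r_{i,s}$ (with $\sigma_{i,0}=0$), an easy induction on the update $w_{i,t} = w_{i,t-1}\, e^{\eps r_{i,t}}$ with $w_{i,0}=1$ gives $w_{i,t} = e^{\eps \sigma_{i,t}}$. Hence the sampling probability used at round $t$ is $p_{i,t} = e^{\eps\sigma_{i,t-1}} \big/ \sum_{k} e^{\eps\sigma_{k,t-1}}$, i.e. exactly the softmax of the cumulative rewards through round $t-1$.

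Next I would fix a round $t$ and indices $i,j$ with $\sigma_{i,t} < \sigma_{j,t} - \gamma T$ and bound $p_{i,t}$. Keeping only the $j$-th term in the denominator gives $p_{i,t} \leq e^{\eps(\sigma_{i,t-1}-\sigma_{j,t-1})}$. Since each reward lies in $[0,1]$, the cumulative rewards change by at most $1$ per step, so $\sigma_{i,t-1}\leq\sigma_{i,t}$ and $\sigma_{j,t-1}\geq\sigma_{j,t}-1$; together with the hypothesis this yields $\sigma_{i,t-1}-\sigma_{j,t-1} \leq \sigma_{i,t}-\sigma_{j,t}+1 < 1-\gamma T$. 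Substituting $\eps=\sqrt{(\log K)/T}$ gives $p_{i,t} < e^{\eps}\, e^{-\eps\gamma T} = e^{\sqrt{(\log K)/T}}\, e^{-\gamma\sqrt{T\log K}}$.

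It then remains to exhibit one $\gamma=o(1)$ for which this is at most $\gamma$ (Definition~\ref{def:mb} only asks for existence of such a $\gamma$). Taking, say, $\gamma = T^{-1/3}$: we have $\gamma\to 0$, while $\gamma\sqrt{T\log K} = T^{1/6}\sqrt{\log K}\to\infty$ dominates both $\ln(1/\gamma)=\tfrac13\ln T$ and the $e^{\eps}=e^{o(1)}$ prefactor, so $e^{\sqrt{(\log K)/T}}\,e^{-\gamma\sqrt{T\log K}}\leq\gamma$ for all sufficiently large $T$, which is all the definition requires. There is no real obstacle in this argument; the only points needing care are the harmless off-by-one between $\sigma_{i,t}$ and $\sigma_{i,t-1}$ (absorbed into an additive $1$, negligible against $\gamma T$) and committing to a concrete schedule for $\gamma$ that beats the exponential. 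The same computation works verbatim for any step size with $\eps\to 0$ and $\eps T\to\infty$, and — after replacing the rewards by their importance-weighted estimators — is the template for the EXP3 case, where the additional work is bounding those estimators rather than anything new in the mean-based deduction itself.
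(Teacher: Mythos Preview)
Your proposal is correct and follows essentially the same route as the paper: write $w_{i,t}=e^{\eps\sigma_{i,t}}$, bound $p_{i,t}\le e^{\eps(\sigma_{i,t-1}-\sigma_{j,t-1})}$ by keeping only the $j$-th denominator term, absorb the one-step lag into an additive $1$, and then pick a concrete $\gamma=o(1)$ making the resulting exponential at most $\gamma$. The only cosmetic difference is the choice of schedule: the paper takes $\gamma=2(T\eps)^{-1}\log(T\eps)$ (roughly $\tfrac{\log T}{\sqrt{T}}$), whereas you take the looser $\gamma=T^{-1/3}$; since Definition~\ref{def:mb} merely requires existence of some $\gamma=o(1)$, either works.
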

\begin{proof}
Define $\gamma =2(T\eps)^{-1} \log(T\eps)$. We will show that Multiplicative Weights is $\gamma$-mean-based. Note that since $\eps = \sqrt{\frac{\log K}{ T}}$, $\gamma = o(1)$ and therefore Multiplicative Weights is mean-based.

Note that $w_{i,t} = e^{\eps \sigma_{i,t}}$. Therefore, if $\sigma_{i,t} - \sigma_{j,t} < -\gamma T$, we have $\sigma_{i,t-1} - \sigma_{j,t-1} < -\gamma T + 1 < -\gamma T/2$, it follows that
\begin{eqnarray*}
p_{i,t} &=& \frac{w_{i,t-1}}{\sum_{j}w_{j,t-1}} \\
&\leq & \frac{w_{i, t-1}}{w_{j,t-1}} \\
&=& e^{\eps(\sigma_{i,t-1} - \sigma_{j,t-1})} \\
&< & e^{-\eps \gamma T/2} \\
&=& e^{-\log(T\eps)} = 1/(T\eps) \leq \gamma.
\end{eqnarray*}

It follows that Multiplicative Weights is $\gamma$-mean-based.
\end{proof}

\begin{algorithm}[ht] 
	\caption{Follow-the-Perturbed-Leader algorithm.} \label{alg:ftpl}
    \begin{algorithmic}[1]
    	\STATE Choose $\eps = \sqrt{\frac{\log K}{ T}}$. 
        \FOR {$t=1$ to $T$}
        			\STATE For each arm, sample $per_i\geq 0$ independently from exp. distribution $d\mu(x) = \varepsilon e^{-\varepsilon x}$. 
			\STATE Choose option $i$ with largest $\sigma_{i,t-1} + per_i$.
        \ENDFOR
      \end{algorithmic}
\end{algorithm}

We now show the Follow-the-Perturbed-Leader algorithm (Algorithm \ref{alg:ftpl}) is mean-based.
\begin{theorem}\label{thm:mbftpl}
The Follow-the-Perturbed-Leader algorithm (Algorithm \ref{alg:ftpl}) is mean-based.
\end{theorem}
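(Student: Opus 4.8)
The plan is to imitate the proof that Multiplicative Weights is mean-based (Theorem~\ref{thm:mbmw}) almost verbatim, with the exponential tail of the perturbations $per_i$ playing the role that the exponential weights $w_{i,t} = e^{\eps\sigma_{i,t}}$ played there. Concretely, I would again set $\gamma = 2(T\eps)^{-1}\log(T\eps)$; since $\eps = \sqrt{(\log K)/T}$ we have $T\eps = \sqrt{T\log K}\to\infty$, so $\gamma = o(1)$, and it suffices to show Algorithm~\ref{alg:ftpl} is $\gamma$-mean-based.

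First I would pass from $\sigma_{\cdot,t}$ to $\sigma_{\cdot,t-1}$, which is what Algorithm~\ref{alg:ftpl} actually consults at round $t$: since every $r_{k,s}\in[0,1]$, the hypothesis $\sigma_{i,t} < \sigma_{j,t} - \gamma T$ implies $\sigma_{i,t-1} - \sigma_{j,t-1} < -\gamma T + 1 \le -\gamma T/2$ once $T$ is large enough. Next comes the key observation: at round $t$ the algorithm picks the arm maximizing $\sigma_{k,t-1} + per_k$, so if arm $i$ is pulled then in particular $\sigma_{i,t-1} + per_i \ge \sigma_{j,t-1} + per_j$, i.e.\ $per_i \ge per_j + (\sigma_{j,t-1} - \sigma_{i,t-1}) > per_j + \gamma T/2 \ge \gamma T/2$, where the last inequality uses only $per_j \ge 0$ (no independence of the perturbations is needed at this step). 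Hence the probability that arm $i$ is pulled at round $t$ is at most $\Pr[per_i \ge \gamma T/2] = e^{-\eps\gamma T/2} = e^{-\log(T\eps)} = 1/(T\eps) \le \gamma$, the final inequality holding because $\gamma = 2\log(T\eps)/(T\eps) \ge 1/(T\eps)$ as soon as $T\eps \ge \sqrt e$. This establishes the $\gamma$-mean-based property and hence the theorem.

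There is essentially no hard step here; the proof is a two-line computation. The only points requiring a little care are: the off-by-one shift between $\sigma_{\cdot,t}$ and $\sigma_{\cdot,t-1}$ (absorbed by replacing $-\gamma T+1$ with $-\gamma T/2$ for $T$ large), the elementary bound $2\log x \ge 1$ for $x = T\eps$ large, and observing that the tail estimate $\Pr[per_i \ge s] = e^{-\eps s}$ is precisely the complementary cdf of the exponential distribution $d\mu(x) = \eps e^{-\eps x}$ used in line~3 of Algorithm~\ref{alg:ftpl}. If a cleaner or more $K$-uniform statement were desired, one could instead take $\gamma$ to be any function with $\gamma\cdot T\eps \to \infty$ and $\gamma \ge 1/(T\eps)$, but the choice above is the simplest and matches Theorem~\ref{thm:mbmw} exactly.
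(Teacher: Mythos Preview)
Your proposal is correct and follows essentially the same approach as the paper: both bound the probability of pulling arm $i$ by the exponential tail $\Pr[per_i \ge \sigma_{j,t-1}-\sigma_{i,t-1}] \le e^{-\eps\gamma T/2}$ and then check this is at most $\gamma$. The only cosmetic difference is the choice of $\gamma$ --- you reuse the Multiplicative-Weights value $\gamma = 2(T\eps)^{-1}\log(T\eps)$, whereas the paper takes $\gamma = T^{-1/2}\log T$; both satisfy $\gamma = o(1)$ and make the tail bound go through.
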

\begin{proof}
Let $\gamma = \sqrt{\frac{1}{T}} \cdot \log(T)$. When $\sigma_{i,t} < \sigma_{j,t} - \gamma T$, the probability option $i$ is chosen at round $i$ is at most
\[
Pr[ per_i > \sigma_{i,t-1} - \sigma_{j,t-1}]  = e^{-\varepsilon ( \sigma_{i,t-1} - \sigma_{j,t-1})} \leq e^{-\varepsilon \gamma T /2} < \sqrt{\frac{1}{T}} < \gamma.
\]
Therefore the Follow-the-Perturbed-Leader algorithm (Algorithm \ref{alg:ftpl}) is $\gamma$-mean-based.
\end{proof}

We will now show that EXP3 (Algorithm \ref{alg:exp3}) is mean-based. EXP3 can be thought of as an extension of Multiplicative Weights to the incomplete information (bandits) setting. Since we no longer observe every option's reward each round, we cannot perform the same weight update rule as in Multiplicative Weights. Instead, if we choose option $i$, we update weight $w_i$ by multiplying it with $e^{\eps r_i/p_i}$, where $p_i$ is the probability of picking this option this round (i.e. $w_i/\sum w_j$), and leave all other weights unmodified. Since $\E[\frac{r_{i,t}}{p_{i,t}}\mathbbm{1}_{I_{t}=i}] = r_{i,t}$, this accomplishes in expectation (in some sense) the same update rule as Multiplicative Weights. It is known that (for fixed $K$) if $\eps = T^{-\alpha}$ for some $\alpha \in (0, 1)$, then EXP3 is \lowregret (\cite{AuerCNS03}). This regret is minimized when $\alpha = 1/2$, but for convenience of analysis we will show that EXP3 is mean-based when $\alpha = 1/4$. EXP3 is still \lowregret when $\alpha = 1/4$.

\begin{algorithm}[ht] 
	\caption{EXP3 algorithm.} \label{alg:exp3}
    \begin{algorithmic}[1]
    	\STATE Choose a parameter $\eps \in (0,1)$. Initialize $K$ weights, letting $w_{i, t}$ be the value of the $i$th weight at round $t$. Initially, set all $w_{i,0} = 1$.
        \FOR {$t=1$ to $T$}
			\STATE Choose option $i$ with probability $p_{i,t} = (1-K\eps)\frac{w_{i,t-1}}{ \sum_{j}w_{j,t-1}} + \eps$.
			\STATE Set $w_{i,t} = w_{i,t-1} \cdot e^{\eps r_{i,t}/p_{i,t}}$.
        \ENDFOR
      \end{algorithmic}
\end{algorithm}

\begin{theorem}\label{thm:mbexp3}
The EXP3 algorithm (Algorithm \ref{alg:exp3}) is mean-based.
\end{theorem}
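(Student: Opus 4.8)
The plan is to mimic the proof for Multiplicative Weights (Theorem~\ref{thm:mbmw}), but with the extra care needed because in the bandit setting the weight updates use the \emph{importance-weighted} rewards $r_{i,t}/p_{i,t}$ rather than the true rewards $r_{i,t}$. Fix $\eps = T^{-1/4}$ (as stated, EXP3 remains \lowregret for this choice). Let $\hat\sigma_{i,t} = \sum_{s=1}^t \frac{r_{i,s}}{p_{i,s}}\mathbbm{1}_{I_s = i}$ denote the cumulative \emph{estimated} reward of arm $i$, so that $w_{i,t} = e^{\eps \hat\sigma_{i,t}}$ exactly as in the MWU analysis. The key observation is that $\E[\hat\sigma_{i,t}] = \sigma_{i,t}$, and moreover $\hat\sigma_{i,t}$ concentrates around $\sigma_{i,t}$: since $p_{i,s} \geq \eps$ always (by the $+\eps$ exploration term in line~3), each increment $\frac{r_{i,s}}{p_{i,s}}\mathbbm{1}_{I_s=i}$ lies in $[0, 1/\eps]$, so $\hat\sigma_{i,t}$ is a sum of $t$ independent (bounded) terms with total variance $O(t/\eps)$. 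A Bernstein/Azuma-type bound then gives $|\hat\sigma_{i,t} - \sigma_{i,t}| \leq O(\sqrt{T/\eps}\cdot \log T) = O(T^{5/8}\log T)$ with probability $1 - 1/\mathrm{poly}(T)$, uniformly over all $i$ and $t$ by a union bound over $KT$ events.

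Given this concentration, set $\gamma$ to be something like $T^{-1/8}\log T$ (one should verify the exact exponent works out; the point is $\gamma = o(1)$ and $\gamma T \gg$ the concentration error $T^{5/8}\log T$ and also $\gamma T \gg 1/\eps = T^{1/4}$). Suppose $\sigma_{i,t} < \sigma_{j,t} - \gamma T$. On the high-probability event above, this implies $\hat\sigma_{i,t} < \hat\sigma_{j,t} - \gamma T/2$, and hence $\hat\sigma_{i,t-1} < \hat\sigma_{j,t-1} - \gamma T/4$ (the single-round change in $\hat\sigma$ is at most $1/\eps = T^{1/4} \ll \gamma T$). Then, exactly as in the MWU proof,
\[
p_{i,t} \leq (1-K\eps)\frac{w_{i,t-1}}{w_{j,t-1}} + \eps \leq e^{\eps(\hat\sigma_{i,t-1} - \hat\sigma_{j,t-1})} + \eps < e^{-\eps\gamma T/4} + \eps.
\]
With $\gamma = T^{-1/8}\log T$ and $\eps = T^{-1/4}$, the exponent $\eps \gamma T/4 = \Theta(T^{5/8}\log T) \to \infty$, so $e^{-\eps\gamma T/4}$ is superpolynomially small, and $p_{i,t} < \eps + o(\eps) \leq \gamma$ for large $T$. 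Finally, on the low-probability complement event (probability $1/\mathrm{poly}(T)$), one can bound the contribution trivially since $p_{i,t} \leq 1$; folding this into $\gamma$ (or just absorbing it, since $1/\mathrm{poly}(T) = o(1)$) shows the $\gamma$-mean-based condition holds. Therefore EXP3 is mean-based.

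The main obstacle is the concentration step: unlike MWU, where $w_{i,t} = e^{\eps\sigma_{i,t}}$ with the \emph{true} cumulative reward, here the weights track the estimated rewards $\hat\sigma_{i,t}$, and one must argue these stay close to the true $\sigma_{i,t}$ uniformly in $i$ and $t$. This is why the exponent $\alpha = 1/4$ is chosen rather than the regret-optimal $\alpha = 1/2$: the importance weights $r_{i,s}/p_{i,s}$ can be as large as $1/\eps$, so the deviation of $\hat\sigma$ from $\sigma$ scales like $\sqrt{T/\eps} = T^{1/2 + \alpha/2}$, and we need this to be $o(\gamma T)$ while simultaneously keeping $e^{-\eps\gamma T}$ negligible and $\gamma = o(1)$ — a balancing act that works comfortably at $\alpha = 1/4$ but would be tight or fail at $\alpha = 1/2$. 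Everything else (the algebra bounding $p_{i,t}$, handling the one-step change in cumulative reward, the union bound) is routine once the concentration lemma is in hand.
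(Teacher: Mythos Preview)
Your proposal follows essentially the same approach as the paper: define $\hat\sigma_{i,t}=\sum_{s\le t}\frac{r_{i,s}}{p_{i,s}}\mathbbm{1}_{I_s=i}$, use martingale concentration to show $\hat\sigma_{i,t}\approx\sigma_{i,t}$, and then bound $p_{i,t}$ via the weight ratio plus the additive $\eps$; the paper just uses Azuma directly (noting that $\hat\sigma_{i,t}-\sigma_{i,t}$ is a martingale with differences bounded by $1/\eps$, not a sum of independent terms as you wrote) and takes the specific $\gamma=2(2\sqrt{2}+1)T^{-1/4}\log T$ with $M\eps=\sqrt{2T\log T}$. Your parameter choice $\gamma=T^{-1/8}\log T$ also works, but you should replace the ``independent'' claim by the martingale/Azuma argument the paper uses.
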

\begin{proof}
We will set $\eps = T^{-1/4}$ and $\gamma = 2(2\sqrt{2} + 1)T^{-1/4}\log T$. We will show that EXP3 is $\gamma$-mean-based.

Define $\hat{\sigma}_{i,t} = \sum_{s=1}^{t} \frac{r_{i,s}}{p_{i,s}}\cdot \mathbbm{1}_{I_{s} = i}$. Note that $\hat{\sigma}_{i,t} - \sigma_{i,t}$ is a martingale in $t$; indeed, conditioned on the actions from time $1$ up to time $t-1$, $\E\left[\frac{r_{i,s}}{p_{i,s}} \cdot \mathbbm{1}_{I_s = i}\right] = r_{i,s}$. In addition, note that $\left| \frac{\eps r_{i,s}}{p_{i,s}} \cdot \mathbbm{1}_{I_s = i} - \eps r_{i,s} \right| \leq \frac{1}{p_{i,s}} \leq 1/\eps$, since $p_{i,s} \geq \eps$ by definition. It follows from Azuma's inequality that, for any $1 \leq i \leq K$, $1 \leq t \leq T$, and $M > 0$, 

$$\Pr\left[ |\hat{\sigma}_{i, t} - \sigma_{i,t}| \geq M \right] \leq 2\exp\left(-\frac{M^2\eps^2}{2T}\right).$$

We will choose $M$ so that $M\eps = \sqrt{2T\log T}$; for this $M$, it follows that

$$\Pr\left[ |\hat{\sigma}_{i, t} - \sigma_{i,t}| \geq M \right] \leq \frac{2}{T}.$$

Now, note that $w_{i,t} = e^{\eps\hat{\sigma}_{i,t}}$. If $\sigma_{i,t} - \sigma_{j,t} < -\gamma T$, we have $\sigma_{i,t-1} - \sigma_{j,t-1} < -\gamma T + 1 < -\gamma T/2$,  it then follows that 
\begin{eqnarray*}
p_{i,t} &=& (1-K\eps)\frac{w_{i,t-1}}{\sum_{j}w_{j,t-1}} + \eps \\
&\leq & \min\left(\frac{w_{i, t-1}}{w_{j,t-1}}, 1\right) + \eps \\
&=& \min(e^{\eps(\hat{\sigma}_{i,t-1} - \hat{\sigma}_{j,t-1})}, 1) + \eps \\
&\leq & e^{\eps(\sigma_{i,t-1} - \sigma_{j,t-1}) + 2M\eps} + \frac{2}{T} + \eps \\
&< & e^{-\eps\gamma T/2 + 2\sqrt{2T\log T}} + \frac{2}{T} + \eps \\
&\leq & e^{-\sqrt{T}\log T} + \frac{2}{T} + T^{-1/4} \\
&\leq & \gamma.
\end{eqnarray*}

\end{proof}

Finally, we prove Theorem \ref{thm:mean-based alg}, showing that the contextualization of a mean-based algorithm is still mean-based. In particular, the contextualizations of the above three algorithms (Multiplicative Weights, Follow the Perturbed Leader, and EXP3) are all mean-based algorithms for the contextual bandits problem.

\begin{theorem}[Restatement of Theorem \ref{thm:mean-based alg}]
If an algorithm for the experts problem or multi-armed bandits problem is mean-based, then its contextualization is also a mean-based algorithm for the contextual bandits problem.
\end{theorem}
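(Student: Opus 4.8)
The plan is to fix a single context $v \in C$ and follow the one instance $M_v$ of $M$ that $\cont(M)$ maintains for $v$. Recall that $M_v$ is updated only on the rounds $S_v(t) := \{s \le t : c_s = v\}$, so at global round $t$ its internal state reflects exactly the reward subsequence $(r_{k,s}(v))_{s \in S_v(t-1)}$, and it is this state that determines $p_{i,t}(v)$ when $c_t = v$. The key observation is that, since the contexts are drawn i.i.d.\ from $\D$ with $\Prob[c_s = v] = q_v$, the internal cumulative reward $\tilde\sigma_k := \sum_{s \in S_v(t-1)} r_{k,s}(v)$ that $M_v$ actually sees is, up to fluctuations of order $\sqrt{T\log T}$, equal to $q_v\cdot\sigma_{k,t}(v)$, where $\sigma_{k,t}(v)=\sum_{s=1}^t r_{k,s}(v)$ is the counterfactual cumulative reward of Definition~\ref{def:mb-cont}. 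Hence a large counterfactual gap between two arms at context $v$ forces a proportionally large gap between the internal means seen by $M_v$, and the (inherited) mean-based property of $M$ then caps $p_{i,t}(v)$. This is essentially the only place where the i.i.d.\ assumption on contexts is used, and it is genuinely needed: with adversarial contexts the relevant instance could be starved of data.

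The first step is to make the concentration precise. For each arm $k$, set $X_{k,s} := \big(\mathbbm{1}[c_s = v] - q_v\big)\,r_{k,s}(v)$. Since the seller fixes the round-$s$ allocation/price rules (hence $r_{k,s}(v) = v\,a_{k,s} - p_{k,s}$) before the round-$s$ context is revealed, $r_{k,s}(v)$ is measurable with respect to the history preceding the draw of $c_s$, while $\mathbbm{1}[c_s = v]$ is independent of that history; thus $(X_{k,s})_s$ is a martingale difference sequence with $|X_{k,s}| \le 1$. Azuma's inequality and a union bound over the $K$ arms and $T$ rounds then give an event $\mathcal{G}_v$ with $\Prob[\neg\mathcal{G}_v] = o(1)$ (inverse-polynomial in $T$, for the deviation constant chosen large enough) on which $\big|\sum_{s \le t'}\mathbbm{1}[c_s=v]\,r_{k,s}(v) - q_v\sum_{s\le t'} r_{k,s}(v)\big| \le C\sqrt{T\log T}$ for all $k$ and all $t' \le T$.

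The second step combines this with the guarantee on $M$. Instantiate each $M_v$ with the global horizon $T$; then $M_v$ is $\gamma$-mean-based for the same $\gamma = o(1)$ as $M$, and this remains true even though $M_v$ is only ever run for $|S_v(T)| \le T$ rounds, because the threshold $\gamma T$ in Definition~\ref{def:mb} refers to the instantiated horizon $T$ and not to how many rounds have elapsed. Now put $\gamma' := \gamma/q_{\min} + C'\sqrt{\log T/T}$, where $q_{\min} = \min_v q_v$ is a constant. Suppose $\sigma_{i,t}(v) < \sigma_{j,t}(v) - \gamma' T$. On $\mathcal{G}_v$ we obtain $\tilde\sigma_i - \tilde\sigma_j \le q_v\big(\sigma_{i,t}(v) - \sigma_{j,t}(v)\big) + q_v + 2C\sqrt{T\log T}$, which for $C'$ large enough (depending on $C$ and $q_{\min}$) and $T$ large is strictly less than $-\gamma T$. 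The mean-based property of $M_v$ then says $M_v$ pulls arm $i$ this round with conditional probability at most $\gamma$, so $p_{i,t}(v) \le \gamma + \Prob[\neg\mathcal{G}_v] \le \gamma + o(1) \le \gamma'$. Since $\gamma = o(1)$ and $q_{\min}$ is constant, $\gamma' = o(1)$, so $\cont(M)$ is mean-based.

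I expect the concentration bookkeeping to be the only real obstacle: one must be careful that the rewards can be chosen adaptively by the seller, which is precisely why the martingale must be set up by conditioning on the history up to the moment just before the context $c_s$ is drawn. The conceptual content — that the per-context instance effectively sees a $q_v$-fraction of the counterfactual reward stream, so the mean-based parameter degrades by a $1/q_{\min}$ factor but stays $o(1)$ — is routine once the concentration estimate is available.
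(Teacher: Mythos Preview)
Your proof is correct and follows essentially the same approach as the paper's: both define the per-context cumulative reward seen internally by $M_v$, use concentration (you via Azuma, the paper via a Chernoff bound) together with a union bound over arms and rounds to show it equals $q_v\,\sigma_{k,t}(v)$ up to $O(\sqrt{T\log T})$ fluctuations, and then apply the mean-based property of $M$ (instantiated with horizon $T$) to conclude that $\cont(M)$ is $(\gamma/q_{\min}+o(1))$-mean-based. The only cosmetic difference is that you are a bit more explicit about the adaptivity of the rewards and the off-by-one between rounds $t-1$ and $t$.
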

\begin{proof}
Assume $M$ is a $\gamma$-mean-based algorithm. We will show $M'$ is $\frac{1}{\min_c \Pr[c]} \left(\gamma+\frac{2\sqrt{\log(mKT)}}{T^{1/2}}\right)$-mean-based.

First define $\hat{\sigma}_{i,t}(c) = \sum_{s: s \leq t,\, c_{s} = c}r_{i,s}(c)$ to be the total reward given by arm $i$ on rounds where the context is $c$. Since $M$ is $\gamma$-mean-based, whenever $\hat{\sigma}_{i, t}(c) < \hat{\sigma}_{j,t}(c) - \gamma T$, then the probability $p_{i,t}(c)$ that the algorithm pulls arm $i$ on round $t$ if it has context $c$ satisfies $p_{i,t}(c) < \gamma$.

We will proceed to show that  $\hat{\sigma}_{i, t}(c) < \hat{\sigma}_{j,t}(c) - \gamma T$ with sufficiently large probability. It is easy to check that $\E[\hat{\sigma}_{i,t}(c)] = \sigma_{i,t}(c) \cdot \Pr[c]$. By the Chernoff bound, we have that
\[
\Pr\left[ \left|\hat{\sigma}_{i,t}(c)- \sigma_{i,t}(c) \cdot \Pr[c]\right| \geq \sqrt{T\log(mKT)} \right] \leq 2\exp(-2T\log(mKT)/t) \leq \frac{2}{T^2 m^2 K^2}. 
\]
By the union bound, with probability at least $\frac{2}{Tm^2K^2}$, we have $\left|\hat{\sigma}_{i,t}(c)- \sigma_{i,t}(c) \cdot \Pr[c]\right| \geq \sqrt{T\log(mKT)} $ for all $i$,$t$, and $c$. In this case we have that $\sigma_{i, t}(c) < \sigma_{j,t}(c) - \frac{1}{\Pr[c]} (\gamma T + 2\sqrt{T\log(mKT)})$ implies that $\hat{\sigma}_{i, t}(c) < \hat{\sigma}_{j,t}(c) - \gamma T$.

Therefore, if $\sigma_{i, t}(c) < \sigma_{j,t}(c) - \frac{1}{\Pr[c]} (\gamma T + 2\sqrt{T\log(mKT)})$ and the context of round $t$ is $c$, then $p_{i,t}(c) < \gamma + \frac{2}{Tm^2K^2} \leq (\frac{1}{\min_c \Pr[c]} (\gamma+\frac{1}{T^{1/2}}))$. 
\end{proof} 

\section{Full revenue can be achieved with non-monotone strategies}\label{sect:non-monotone}

\begin{theorem}
For any constant $\varepsilon >0 $, there exists a non-monotone strategy for the seller using $O(m^2/\varepsilon)$ arms that gets revenue at least $(1-\eps)\Val(\D)T - o(T)$ against a conservative buyer running a $\gamma$-mean-based algorithm with $\gamma = o\left(\min(\varepsilon/m^2, \varepsilon/(m \cdot v_1)\right)$.
\end{theorem}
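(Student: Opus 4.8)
The plan is to generalize the full-welfare construction of Theorem~\ref{thm:nc_seller} by ``stratifying it by value.'' Recall that against a non-conservative mean-based buyer the seller lures the buyer with a brief free-item phase (which builds up a surplus, i.e.\ negative regret) and then charges the maximum bid $1$ for the rest of the game; a conservative buyer with value $v_i$ will not bid above $v_i$, so the natural fix is to replace the single ``bid $1$'' strand of arms with, for each support value $v_i$, a whole \emph{block} $\mathcal B_i$ of arms all carrying bid $b=v_i$. Each block $\mathcal B_i$ is a scaled copy of the non-conservative gadget: its arms have a short $\emptyset$-phase, then a short free phase, then a ``pay'' phase charging $v_i-\eps$, with the phases staggered across the arms of the block (geometrically, as in Appendix~\ref{sect:nonconservative}) so that a mean-based learner with value exactly $v_i$ is driven to keep a currently-paying arm on top of its cumulative-reward ranking for a $(1-O(\eps))$ fraction of rounds. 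For a block in isolation this is exactly the analysis already carried out, and the contribution of the $v_i$-buyer to revenue is then $\approx q_i v_i T$ up to $O(\eps)$ and $o(T)$ terms; summing over $i$ gives $(1-O(\eps))\Val(\D)T-o(T)$ after rescaling $\eps$.

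The crucial new ingredient, and the reason non-monotonicity is needed, is scheduling \emph{which block is ``live'' at which time}. The danger is that the $v_{i'}$-buyer with $v_{i'}>v_j$ can always afford the bid-$v_j$ arms, and since those charge only $v_j<v_{i'}$ they hand that buyer positive surplus at rate $v_{i'}-v_j$ whenever played, so a mean-based $v_{i'}$-buyer would gravitate to the cheapest available block and accumulate far more than $O(\eps v_{i'}T)$ utility. To prevent this I would exploit non-monotonicity to set $a_{i,t}=p_{i,t}=0$ on every arm of $\mathcal B_i$ during the time windows ``owned'' by the other blocks (this is precisely a monotonicity violation: a low-bid arm withholds the item while a higher-bid arm delivers it). I would then argue, using the characterization that a mean-based learner's final utility is within $o(T)$ of $\max_{a:\,b_a\le v_i}\sigma_{a,T}(v_i)$ (the $\sigma_{\max}\approx\max_j\sigma_j$ lemma from the proof of Theorem~\ref{thm:mbrevlb}, valid here because the reward curves $r_{a,t}(v_i)$ are piecewise constant with $O(1)$ pieces and there are only $O(m^2/\eps)$ arms, so the upper envelope of cumulative rewards has $o(T)$ breakpoints), that by keeping every arm's total ``active'' (free $+$ pay) duration $O(\eps T/m)$ we force $\sigma_{a,T}(v_i)\le O(\eps v_i T)$ for \emph{every} arm $a$ with $b_a\le v_i$ — hence every buyer ends with $O(\eps)$-fraction utility, and payment $=$ welfare $-$ utility is as desired.

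Making the active windows that short is exactly why each block needs $\Theta(m/\eps)$ arms (so $\Theta(m^2/\eps)$ total): the arms of $\mathcal B_i$ must tile the portion of the timeline on which $\mathcal B_i$ is live, finely enough that at (almost) every such round the current-best arm for the $v_i$-buyer is one that is right then delivering the item — neither an exhausted arm sitting at its frozen cumulative, nor an arm still in its $\emptyset$-phase, nor the zero arm. I would verify this ``seamless hand-off'' condition by the same interval-tiling computation as in the non-conservative proof (the intervals $[A_j,B_j(v_i)]$ there), now done block-by-block, and then union-bound the $O(m^2/\eps)$ bad rounds near an envelope breakpoint together with the $\gamma$-mean-based slippage; this is where the hypothesis $\gamma=o(\min(\eps/m^2,\eps/(m v_1)))$ enters, guaranteeing that the total loss from the buyer not exactly following ``play the current best arm'' is $o(T)$.

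I expect the main obstacle to be designing the global live/dead schedule for the $m$ blocks so that all three requirements hold simultaneously: (i) within its live window $\mathcal B_i$ tiles cleanly enough for the $v_i$-buyer to buy almost every round; (ii) no buyer ever accumulates more than $O(\eps v T)$ utility off any single (possibly cheaper) arm; and (iii) there is no long stretch of rounds on which some value's buyer is left with nothing available to play. Balancing (i) against (iii) — a block needs a long total live window to serve its buyer, but long live windows for cheap blocks feed surplus to expensive buyers, fighting (ii) — is the real content of the argument; the per-block mean-based bookkeeping and the reduction ``revenue $=$ welfare $-$ utility'' are then routine given the tools already in the paper.
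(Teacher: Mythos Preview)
Your plan correctly isolates the two ingredients --- one strand of bid-$v_i$ arms per support value, and non-monotonicity to manage cross-value interference --- and you are right that scheduling is where the difficulty sits. But the specific device you propose, turning $\mathcal B_i$'s arms \emph{off} outside their own window, creates a welfare hole that your outline does not close; this is a genuine gap rather than a schedule left to fill in.

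The failure is on the welfare side of ``revenue $=$ welfare $-$ utility.'' Suppose the windows partition $[0,T]$ and look at the $v_i$-buyer during $\mathcal B_{i'}$'s window with $i'<i$. Her own block $\mathcal B_i$ is off, so she is choosing among $\mathcal B_{i'}$ arms. Inside $\mathcal B_{i'}$ the hand-offs are calibrated for the $v_{i'}$-buyer: arm $a_j$'s brief free phase is just long enough to overtake $a_{j-1}$ when the pay-phase reward is $\varepsilon$. For the $v_i$-buyer the pay-phase reward is $v_i-v_{i'}+\varepsilon$, so each completed arm's frozen cumulative is much larger, and $a_j$'s short free phase cannot catch it. A mean-based $v_i$-buyer is therefore pinned to a \emph{frozen} $\mathcal B_{i'}$ arm for a constant fraction of each slot and receives no item. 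Integrated over the window this is $\Theta(T_{i'})$ lost welfare, and since $\sum_{i'}T_{i'}=T$ you cannot make every $T_{i'}$ negligible. Trying to fix the hand-off by lengthening $\mathcal B_{i'}$'s free phases to suit the $v_m$-buyer forces the pay phases to shrink geometrically, so the $v_{i'}$-buyer then pays almost nothing. Your per-arm bound $\max_a\sigma_{a,T}(v_i)=O(\varepsilon v_iT)$ is fine; it is welfare that collapses.

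The paper resolves the tension by doing the opposite of turning arms off: each bid-$v_i$ arm \emph{mimics} the cheaper arms during an initial stretch. Time is cut into $M=m/\varepsilon$ equal blocks; arm $(i,j)$ (bid $v_i$) in block $\ell<i$ allocates the item and charges $v_\ell$ --- exactly what the bid-$v_\ell$ arms are doing there --- and only in its designated block $i{-}1{+}j$ runs a free phase of length $j\delta T$ (with $\delta=2\gamma/v_1$) followed by charging $v_i$; elsewhere it is inert. Mimicking guarantees that for the $v_i$-buyer a bid-$v_i$ arm's cumulative is never behind any bid-$v_{i'}$ arm's (it did everything that arm did in blocks $1,\dots,i'{-}1$, plus strictly more in blocks $i',\dots,i{-}1$), and the linearly growing free lengths $j\delta T$ then put the \emph{currently paying} bid-$v_i$ arm strictly on top by at least $\gamma T$. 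Hence the $v_i$-buyer receives the item and pays $v_i$ throughout blocks $i,\dots,M$ --- a $(1-m/M)=(1-\varepsilon)$ fraction of time --- and the revenue bound follows directly, with no global scheduling puzzle to solve.
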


\begin{proof}
Let $M = m/\varepsilon$ and $\delta = 2\gamma/v_1$. Consider the following seller's strategy:
\begin{enumerate}
	\item The seller will provide $M-i+1$ arms with maximum value $v_i$ for $i = 1,...,m$. For notation convenience, label them as $i \cdot M + 1, i \cdot  M+2...,i\cdot  M + (M - i+1)$. 
	\item The $j$-th arm with maximum value $v_i$ (arm $i\cdot M + j$):
		\begin{enumerate}
			\item For rounds in $[(l-1)T/M+1, (l-1)T/M + 2\delta T]$, charge 0 and always give the item to the buyer for $l = 1,...,i-1$.
			\item For rounds in $[(l-1)T/M + 2\delta T+1, lT/M]$, charge $v_l$ and always give the item to the buyer for $l = 1,...,i-1$.
			\item For rounds in $[(i-1 + j)T/M+1, (i-1 + j)T/M + j\delta T]$, charge 0 and always give the item to the buyer.
			\item For rounds in $[(i-1 + j)T/M + j\delta T+1, (i+ j)T/M]$, charge $v_i$ and always give the item to the buyer.
			\item For other rounds, charge 0 and don't give the item to the buyer.
		\end{enumerate}
\end{enumerate}

Let $A_{i,j} = (i-1 + j)T/M + j\delta T+1$ and $B_{i,j} = (i+ j)T/M$. 
\begin{lemma}
For each $v_i \in \D$, $j = 1,...,M-i+1$, and round $\tau \in [A_{i,j},B_{i,j}]$, $\sigma_{i\cdot M + j,\tau}(v_i) > \sigma_{i' \cdot M + j', \tau}(v_i)$ for all $i' \leq i$ and $(i',j')\neq(i,j)$. 
\end{lemma}

\begin{proof}
First of all, by following the definition of the seller's strategy, for $\tau \in [A_{i,j},B_{i,j}]$, we have
\[
\sigma_{i\cdot M + j,\tau}(v_i) = v_i \cdot (2(i-1) + j)\delta T + \sum_{l = 1}^{i-1}  (v_i- v_l)(1/M-2\delta)T.
\]
There are several cases:
\begin{enumerate}
\item $i' \leq i$ and $j' +i ' > j + i$: for $\tau \in [A_{i,j},B_{i,j}]$, we have 
\begin{eqnarray*}
\sigma_{i'\cdot M + j' ,\tau}(v_i) &=& v_i \cdot (2(i'-1)) \delta T + \sum_{l = 1}^{i'-1}  (v_i- v_l)(1/M-2\delta)T \\
&\leq& v_i \cdot(2(i-1)) \delta T + \sum_{l = 1}^{i-1}  (v_i- v_l)(1/M-2\delta)T\\
&\leq& \sigma_{i\cdot M + j,\tau}(v_i)- v_i \cdot j \delta T \\
&<& \sigma_{i\cdot M + j,\tau}(v_i) - \gamma T.
\end{eqnarray*}
\item $i' = i$ and $j' < j$: for $\tau \in [A_{i,j},B_{i,j}]$, we have 
\begin{eqnarray*}
\sigma_{i'\cdot M + j' ,\tau}(v_i) &=& v_i \cdot (2(i-1)+j') \delta T + \sum_{l = 1}^{i-1}  (v_i- v_l)(1/M-2\delta)T \\
&\leq& \sigma_{i\cdot M + j,\tau}(v_i)- v_i \cdot (j-j')\delta T \\
&<& \sigma_{i\cdot M + j,\tau}(v_i) - \gamma T.
\end{eqnarray*}
\item $i' < i$ and $j' + i'\leq j +i $: for $\tau \in [A_{i,j},B_{i,j}]$, we have 
\begin{eqnarray*}
\sigma_{i'\cdot M + j' ,\tau}(v_i) &=& v_i \cdot (2(i'-1)+j') \delta T + \left(\sum_{l = 1}^{i'-1}  (v_i- v_l)(1/M-2\delta)T\right)  + (v_i - v_{i'})(1/M-j'\delta )T \\
&\leq& v_i \cdot (2(i'-1)+j') \delta T + \left( \sum_{l = 1}^{i-1}  (v_i- v_l)(1/M-2\delta)T \right) + v_i \cdot \max(2-j',0)\delta T \\
&\leq& v_i \cdot (2(i-1) + j -1) \delta T + \left( \sum_{l = 1}^{i-1}  (v_i- v_l)(1/M-2\delta)T \right) \\
&\leq& \sigma_{i\cdot M + j,\tau}(v_i)- v_i \delta T \\
&<& \sigma_{i\cdot M + j,\tau}(v_i) - \gamma T.
\end{eqnarray*}
\end{enumerate}
\end{proof}

It follows from the mean-based condition (Definition \ref{def:mb-cont}) that in the interval $ [A_{i,j},B_{i,j}]$ the buyer with value $v_i$ will, with probability at least $(1-Mm\gamma)$, choose arm $i \cdot M + j$. Since the buyer has value $v_i$ for the item with probability $q_i$, the total contribution of the buyer with value $v_i$ to the expected revenue of the seller is given by
\begin{eqnarray*}
q_i\cdot v_i\cdot \sum_{j=1}^{M-i+1}(1-mM\gamma)(B_{i,j}- A_{i,j}+1) &=&  q_i \cdot v_i\cdot (1-mM\gamma) (M-m)(T/M - m\delta T)	\\
&\geq& q_i \left( v_i\cdot T \cdot( 1- m/M) - o(T)\right) \\
&\geq& q_i \left( v_i\cdot T \cdot( 1- \varepsilon) - o(T)\right). \\
\end{eqnarray*}

Then we have that the expected revenue of the seller is at least

\begin{eqnarray*}
\sum_{i} q_i \left( v_i\cdot T \cdot( 1- \varepsilon) - o(T)\right) &=& (1-\eps)\left(\sum_{i}q_iv_i\right)T - o(T) \\
&=& (1-\eps)\E_{v\sim \D}[v] T - o(T)\\
&=& (1-\eps)\Val(\D)T - o(T).
\end{eqnarray*}

\end{proof}
 
\end{document}